\documentclass[11pt, reqno, a4paper]{article}

\usepackage{amsmath,amssymb,amsthm}
\usepackage[margin=1in]{geometry}
\usepackage{graphicx}
\usepackage{subcaption}
\usepackage{adjustbox}

\usepackage{tikz}
\usepackage{hyperref}
\usepackage{xcolor}
\usepackage{titlesec}
\usepackage{bbm}
\usepackage{tocloft}
\usepackage{enumitem}
\usepackage{comment}
\usepackage{etoolbox}
\usepackage{cleveref}

\allowdisplaybreaks
\newtheorem{theorem}{Theorem}[section]
\newtheorem{lemma}[theorem]{Lemma}

\theoremstyle{definition}
\newtheorem{definition}[theorem]{Definition}
\newtheorem{example}[theorem]{Example}
\newtheorem{prop}[theorem]{Proposition}

\tikzset{
  font={\fontsize{5pt}{12}\selectfont}}
\theoremstyle{remark}
\newtheorem{remark}[theorem]{Remark}

\numberwithin{equation}{section}

\title{On Trajectory-Based Stability Analysis for $1$-bit Sigma-Delta Quantization and its Application to the Second-Order Case}

\author{Rohan Joy, Felix Krahmer, Alessandro Lupoli}

\date{} % lascia vuoto per non mostrare la data
\begin{document}
\maketitle

\begin{abstract}
\noindent
A state-of-the-art strategy for digitally representing a bandlimited signal $f$ is Sigma-Delta $(\Sigma\Delta)$ quantization. $\Sigma\Delta$ quantization schemes choose a bit sequence $(q_n)$ representing the samples $(y_n)$ of $f$ sequentially based on a state sequence $(u_n)$ defined via a recurrence relation of the form 
\begin{equation*}
    u_n = (h*u)_n + y_n - q_n,
\end{equation*}
where $h$ is a causal filter, i.e., $h_j = 0$ for $j\le 0.$ The effectiveness of a quantization scheme crucially depends on the fact that it is stable, that is, the state variable remains uniformly bounded in a given class of signals. Thus, a natural (and common) strategy to choose the bit representation is to minimize the state variable, i.e., $$q_n = \operatorname{sign}((h*u)_n + y_n).$$
It is well known that a sufficient condition for this quantization rule to induce stability is that $$ \|h\|_{\ell^1}+\|f\|_{\infty}\le 2.$$
At the same time, one empirically observes that this condition is conservative and stability holds significantly beyond this bound.

In this paper, we address this gap by establishing the first stability guarantees beyond first order that outperform the $\ell^1$ based stability condition.
In contrast to many previous approaches, our analysis describes the trajectories of the state variables rather than characterizing the invariant set, an approach that had previously been performed only in some specific example cases. The advantage of this viewpoint is that it allows for longer filters, which is challenging when analyzing invariant sets, due to the resulting large dimension, and perturbations resulting from non-constant inputs can be better incorporated. 

Concretely, we apply our analysis technique to second-order $\Sigma\Delta$ schemes with sparse feedback filters as proposed by G\"unturk \cite{gunturk2003one}, showing that the filter length required to guarantee stability significantly improves from the length $O\left(\frac{1}{1-\|f\|_{\infty}}\right)$ needed to apply the $\ell^1$ based criterion to $O\left(\frac{1}{\sqrt{1-\|f\|_{\infty}}}\right)$.

\end{abstract}

\section{Introduction}
Digital representations of analog signals are of fundamental importance in modern signal processing. More precisely, one aims to encode a continuous-time signal or function $f$ as a sequence with values in a finite set, called alphabet $\mathcal{A},$ that then allows for approximate reconstruction of $f$ at least in a domain of interest.
The process of analog-to-digital (A/D) conversion typically proceeds by first \emph{sampling} the function, that is, evaluating it at discrete time instances $(t_n)_{n \in \mathbb{N}}$. The sequence of samples is then \emph{quantized}, i.e., represented by a sequence $(q_n)_{n\in\mathbb{N}}$ with $q_n\in \mathcal{A}.$
Consequently, the key design feature of an A/D conversion method is to find an appropriate \emph{quantization scheme} 
that guarantees a small reconstruction error in this pipeline.\\

In practice, most A/D architectures are based on the Nyquist–Shannon sampling theorem, 
which guarantees perfect reconstruction of bandlimited functions from uniformly spaced samples 
when the sampling rate is at least the Nyquist rate given by the bandwidth $\Omega$ -- twice the maximum frequency present in the signal.\\
This motivates working with equispaced samples and considering a reconstruction method where the Nyquist-Shannon formula is applied to the sequence of quantized values.
A natural and straightforward quantization technique uses the so–called 
\emph{memoryless scalar quantizer} (MSQ), which simply rounds each sample to the closest element 
of the alphabet $\mathcal{A}$.  
However, it is quite intuitive that a high–resolution reconstruction from these quantized values can only be obtained 
when the alphabet is resolving the dynamic range in a sufficiently fine way.\\
This is often not desired, as enlarging the alphabet -- or equivalently, increasing the number of quantization bits per sample -- increases the \emph{circuit complexity} of the converter, which is linked to a higher power consumption.  
While keeping the number of bits per sample fixed and increasing the sampling rate also results in a larger total number of bits, and hence, in a higher power consumption, such an approach is nevertheless preferred  in many practical implementations,
as the power consumption of A/D converters scales approximately 
\emph{linearly with the sampling frequency} (due to faster analog front-end operation) 
and \emph{exponentially with the number of bits} in the quantizer, 
as higher resolution requires more precise and power-demanding analog components.\\
For this reason, it is often preferable to \emph{oversample} the input signal, 
acquiring samples at a rate significantly higher than the Nyquist frequency, 
and to exploit this \emph{redundancy} to achieve higher effective resolution 
while keeping the alphabet (and thus the number of bits per sample) small.
Making use of this redundancy requires choosing the quantized values in an interdependent fashion such that the quantization error incurred in a neighborhood of samples locally cancels.\\

The pioneering approach that first put this paradigm into practice was the $\Sigma\Delta$ quantization scheme, developed by H.~Inose and Y.~Yasuda as early as 1962 \cite{inose1962telemetering}.
The idea is that the quantized representation is driven by an
\emph{internal state variable} that is computed via a recursive scheme and can be interpreted as a higher-order variant of the cumulative quantization error.
Each quantized sample is then chosen in a greedy way, minimizing the magnitude of the corresponding iterate of the state variable.\\
Over the years, $\Sigma\Delta$ converters have become the dominant architecture for high-resolution, low-to-medium-bandwidth applications such as audio and precision measurements, see \cite{hoeschele1994analog,candy1991oversampling,schreier2005understanding}.

Key to the success of $\Sigma\Delta$ quantization is that the scheme is stable, that is, the state variable remains bounded. If this is not the case, one will not only incur hardware saturation, but the overall objective to keep the cumulative quantization error under control will also be violated.
Consequently, stability guarantees are also a key ingredient for any theoretical performance analysis.
Nevertheless, all results establishing stability guarantees are limited in some way.\\
For the simplest class of $\Sigma\Delta$ quantizers that use the standard cumulative quantization error as a state variable, so-called first-order schemes, stability is straightforward and does not require any restriction of the signal amplitude.
However, the error decay is limited, so high oversampling is required to achieve a good reconstruction accuracy \cite{gunturk2001robustness}.\\
To establish stability for schemes with better error decay, a pioneering paper considers constant inputs and a specific choice of coefficients in the recursive scheme \cite{hein2002stability}, and provides an analysis of the trajectory of the internal state variable. However, it is unclear how this analysis can be carried over to more general time-varying signals.
% In fact, it has been shown in a different context that $\Sigma\Delta$ schemes can behave quite differently for time-varying signals as compared to constant signals [].\\
Another line of work investigates $\Sigma\Delta$ schemes with modified quantization rules. The most general analysis covers a wide class of schemes, allowing for fast error decay, but involves nested constructions of quantization operations that are less preferred from an application point of view due to their circuit complexity \cite{daubechies2003approximating}. A second paper focusing on a more restrictive class of schemes is simpler in that respect, but parameter configurations that allow for large signal amplitudes can lead to very large state variables, even when the actual signal's amplitude is far from the maximum admissible bound \cite{yilmaz2002stability}.\\
A third line of work shows that even for schemes with very fast error decays, one can achieve a state variable bounded by $1$ at the expense of increased memory usage in the recurrence relation.\\
In particular, the number of values of the state variables that need to be stored grows rapidly with the admissible amplitude.
For this reason, practitioners have suggested mildly violating the stability criterion underlying these guarantees, as it has been empirically observed to be conservative (see \cite{schreier2005understanding}). 
Nevertheless, we are not aware of any theoretical analysis supporting this claim.
In this paper, we are making the first step towards closing this gap by rigorously establishing that for the class of second-order $\Sigma\Delta$ schemes proposed in \cite{deift2011optimal,gunturk2003one}, stability can be achieved for bandlimited signals by storing considerably fewer state variables. We show that the state variables only get large as the signal amplitude approaches the maximum admissible value and returns to smaller values once the signal amplitude decreases again.\\\\
The paper is organized as follows: Section 2 and $3$ introduces the mathematical background of $\Sigma\Delta$ quantization, formulate the stability problem for second-order schemes, and discuss our main results. Section $4$ begins with an analysis of the trajectories of the internal state variable of a $\Sigma\Delta$ scheme—a crucial step for establishing stability beyond the guarantees of the classical criterion. We address this first for constant inputs and then extend the analysis to bandlimited signals. 
% The final section discusses refinements to the analysis in Section 2 and presents a specific case of a $\Sigma\Delta$ scheme.

\section{Background and Main Result}
In this section, we will formulate the aforementioned findings in a more rigorous way, and discuss their relation to the results of this paper. We start by precisely introducing $1$-bit $\Sigma\Delta$ quantization.
\subsection{Sigma-Delta quantization}
In this paper, we consider bandlimited signals in the Paley-Wiener space
\begin{equation}
    PW_{\Omega}(\mathbb{R}):= \left\{f\in L^2(\mathbb{R}) \ | \ \text{supp}(\hat{f})\subset \left[-\frac{\Omega}{2},\frac{\Omega}{2}\right] \right\},
\end{equation} 
where we use the following normalization for the Fourier transform of a signal $f$
\begin{equation}
    \hat{f}(\xi) = \int_{\mathbb{R}}f(t)e^{-2\pi i \xi t}dt \quad \text{for all} \ \xi\in \mathbb{R}.
\end{equation}
For notational convenience, we normalize to $\Omega = 1.$
We assume oversampling with respect to the Nyquist rate, that is, a sampling rate satisfying $\lambda\ge \lambda_0,$ for some $\lambda_0>1,$ as this allows for a version of the sampling theorem with better convergence properties.\\
Namely, one has that for such $\lambda$ and any $f\in PW_{1}(\mathbb{R})$
\begin{equation}\label{Shannon_Theorem}
    f(t) = \frac{1}{\lambda} \sum_{n \in \mathbb{Z}} f\!\left(\frac{n}{\lambda}\right) 
    \varphi\!\left(t- \frac{n}{\lambda}\right) \quad \text{for all} \quad t\in\mathbb{R}
\end{equation}
where $\varphi$ is a Schwartz function (fixed for the remainder of this paper) satisfying
\[
\hat{\varphi}(\xi) :=
\begin{cases}
    1, & \text{if } |\xi| \le \frac{1}{2}, \\
    0, & \text{if } |\xi| \ge \frac{\lambda_0}{2}.
\end{cases}
\]   
Given the representation formula \eqref{Shannon_Theorem}, one aims to replace the samples $y_n := f\left(\frac{n}{\lambda}\right)$ with values $q_n\in\{-1,1\},$ such that 
\begin{equation*}
    f_q(t) := \frac{1}{\lambda} \sum_{n \in \mathbb{Z}} q_n 
    \varphi\!\left(t- \frac{n}{\lambda}\right)
\end{equation*}
is a good approximation of $f(t)$ for all $t\in \mathbb{R}$. It is not hard to show that such a function will always have a maximum value of at most $1,$ which is why one cannot represent signals of larger amplitude in this way. For this reason, we must impose as a first constraint on the  the signals to be quantized that its amplitude is bounded by $1$.

By \eqref{Shannon_Theorem}, the reconstruction error takes the form 
\[
e(t):= f(t)-f_q(t)=  \frac{1}{\lambda} \sum_{n \in \mathbb{Z}} (y_n-q_n) 
    \varphi\!\left(t- \frac{n}{\lambda}\right)
\]
which can be interpreted as a low-pass filter applied to the \textit{quantization error} $y-q.$
In this paper, among the various metrics that can be used to quantify the reconstruction quality, we assess the error in the uniform norm and therefore focus on $\|f_q - f\|_{\infty}$.\\

To ensure that this error is small,  $\Sigma\Delta$ quantization employs a so-called noise shaping strategy. That is, one aims to choose the bit sequence $q$ in such a way that the quantization error has most of its energy in the high frequency regime, in that way ensuring that after low-pass filtering, one obtains a small reconstruction error. One way to achieve this is to find a bit sequence $(q_n)_n$ for which the frequency representation of the quantization error has multiple zeros at the origin, which can be argued to correspond to a higher-order finite difference operation of a bounded sequence $u$ of state variables in the time domain. Making this idea precise requires the theory of distributions; we refer the reader to \cite{krahmer2009novel} for a more detailed exposition. 

In line with this intuition, an $r$-th order $\Sigma\Delta$ quantization scheme aims to choose the bits $q_n$ in such a way that the equation
\begin{equation}\label{standardr}
    \Delta^r u_n = y_n - q_n,
\end{equation}
where $\Delta^r$ denotes the composition of $r$ finite difference operators, is satisfied by a bounded sequence $u$ of state variables. Here, $\Delta$ is the backwards finite difference operation, defined by 
\[
(\Delta u)_k:= u_k -u_{k-1}.
\]
When the sequence $u$ is indexed by the non-negative integers $\mathbb{N}_0:=   \{0,1,\dots\}$, we set $(\Delta u)_0=u_0$, implicitly assuming $0$ values for negative indices.

We hence seek to establish \emph{stability}, in the sense of the following definition.

\begin{definition}
Let $\mathcal{Y} \subset PW_\Omega(\mathbb{R})$ be a function space and consider a $\Sigma\Delta$ scheme of the form~\eqref{generalized_SD}.  
We say that the scheme is \emph{stable for $\mathcal{Y}$} if there exists a constant $C > 0$ such that, for every $f \in \mathcal{Y}$, the corresponding state sequence $u$ satisfies
\[
    \|u\|_\infty \le C.
\]
\end{definition}
In line with the aforementioned noise-shaping intuition, stability is directly linked to reconstruction accuracy. Indeed, one can show (see \cite{daubechies2003approximating}) that a stable $r$-th order $\Sigma\Delta$ scheme satisfies
\begin{equation}\label{error_estimate}
    \|f - f_q\|_{\infty}
    \;\le\;
    \frac{1}{\lambda^r}\,\|\varphi^{(r)}\|_{L^1}\,\|u\|_{\infty},
\end{equation}
where $\lambda$ is the oversampling factor.  
Thus, as long as $\|u\|_{\infty}$ is bounded by an absolute constant on $\mathcal{Y}$, the higher the order of the scheme, the faster the error decay for increasing $\lambda$.\\
For reasons of practicality, one aims for this equality to describe a recurrence relation describing an online procedure and hence seeks to choose $q_n$ in a way that only depends on $u_k$ with $k<n$ and $y_k$ with $k\le n.$ In this paper, in line with previous works, we consider a commonly used subclass of these potential quantization rules by assuming explicit dependence of $q_n$ on only the last $L$ state variables and the current sample. That is, one seeks a quantization function $\mathcal{Q}:\mathbb{R}^{L+1} \to \{-1,1\}$, for which one sets  
\begin{equation}
    q_n := \mathcal{Q}(u_{n-1}, u_{n-2}, \dots, u_{n-L}, y_n).
\end{equation}
In this case, the recurrence relation
\begin{equation*}
    u_n = \sum_{k = 1}^{r} (-1)^{k-1} 
    \left(\begin{matrix}
        r\\
        k
    \end{matrix}\right) 
    u_{n-k} + y_n - q_n,
\end{equation*}
equivalent to \eqref{standardr}, 
%As long as the choice of $q_n$ does not depend on any $u_k$ with $k\ge n,$ this equality is equivalent to the recurrence relation 
uniquely describes the dynamics of the state sequence $(u_k)_k$ starting from given initial values $u_{-L},\dots,u_{-1}$. \\
A seemingly natural approach to choose $q_n$ in a way that, in every iteration, the magnitude of the current state variable is minimized. This corresponds to the \emph{greedy quantization rule}, given by
\begin{equation}\label{greedy_rule}
    q_n = \text{sign}\left(\sum_{k=1}^{r} (-1)^{k-1} \binom{r}{k}\, u_{n-k} + y_n\right),
\end{equation}
where we use the convention $\text{sign}(0) = 1$, that is, 
\[
\operatorname{sign}(x):=
    \begin{cases}
        1, & \text{if } x \ge 0,\\
        -1, & \text{if } x < 0.
    \end{cases}
\]

For $\Sigma\Delta$ schemes of orders of three or higher, however, this greedy approach typically does not yield stability.
Even in the second-order case, while the greedy quantization rule empirically seems to yield stability for all bandlimited signals sampled at sufficiently high rate, this has only been proven for constant inputs below a certain threshold \cite{farrell2002bounding}.
The proof is based on the observation that when the state variable is large, $\Sigma\Delta$ schemes with the greedy quantizer and constant input can be viewed locally as a linear dynamical system and its local parabolic trajectory can be explicitly computed. 
To establish stability for second-order $\Sigma\Delta$ schemes under more general bandlimited inputs, Y\i lmaz \cite{yilmaz2002stability} considers variations of the greedy quantization rule obtained by modifying the coefficients, more concretely
\begin{equation}\label{Yilmaz_quantization_rule}
    q_n = \operatorname{sign}(\gamma u_{n-1} + \Delta u_{n-1}).
\end{equation}

He shows that one can achieve stability for all signals of amplitude bounded by some $\alpha < 1,$ by choosing an appropriate parameter $\gamma = \gamma(\alpha).$ 
More precisely he shows that the state variable can be bounded by
\[
    2\,\frac{1+\alpha}{1-\alpha}.
\]
Daubechies and DeVore \cite{daubechies2003approximating} generalized this idea to higher orders via a quantizer based on nested sign operations.

 %To address this issue, the first family of stable $1$-bit $r$-th order $\Sigma\Delta$ schemes was introduced in [], where it was shown that, if the quantization rule is defined via appropriately constructed nested sign operations, the resulting $\Sigma\Delta$ schemes are stable.  
 %From a mathematical perspective, these quantizers perform very well, but this typically comes at the price of increased implementation complexity, which makes such architectures less attractive in engineering practice, or large bounds on 

%  \begin{figure}
%     \centering
%     \includegraphics[width=0.9\linewidth]{immagini/comparison_quantization rules.jpg}
%     \caption{{\small State variables associated with the quantization of the function $f(x) = 0.95 \,\mathrm{sinc}(x)$ for $x\in [-3,3]$ under different quantization rules. The first one is the greedy quantization rule, the second one is the one proposed by Y\i lmaz \cite{yilmaz2002stability}. Observe that the maximum amplitude is $\alpha = 0.95.$}}
%     \label{Amplitude_too_large}
% \end{figure}

\begin{figure}
    \centering
    \includegraphics[width=1\linewidth]{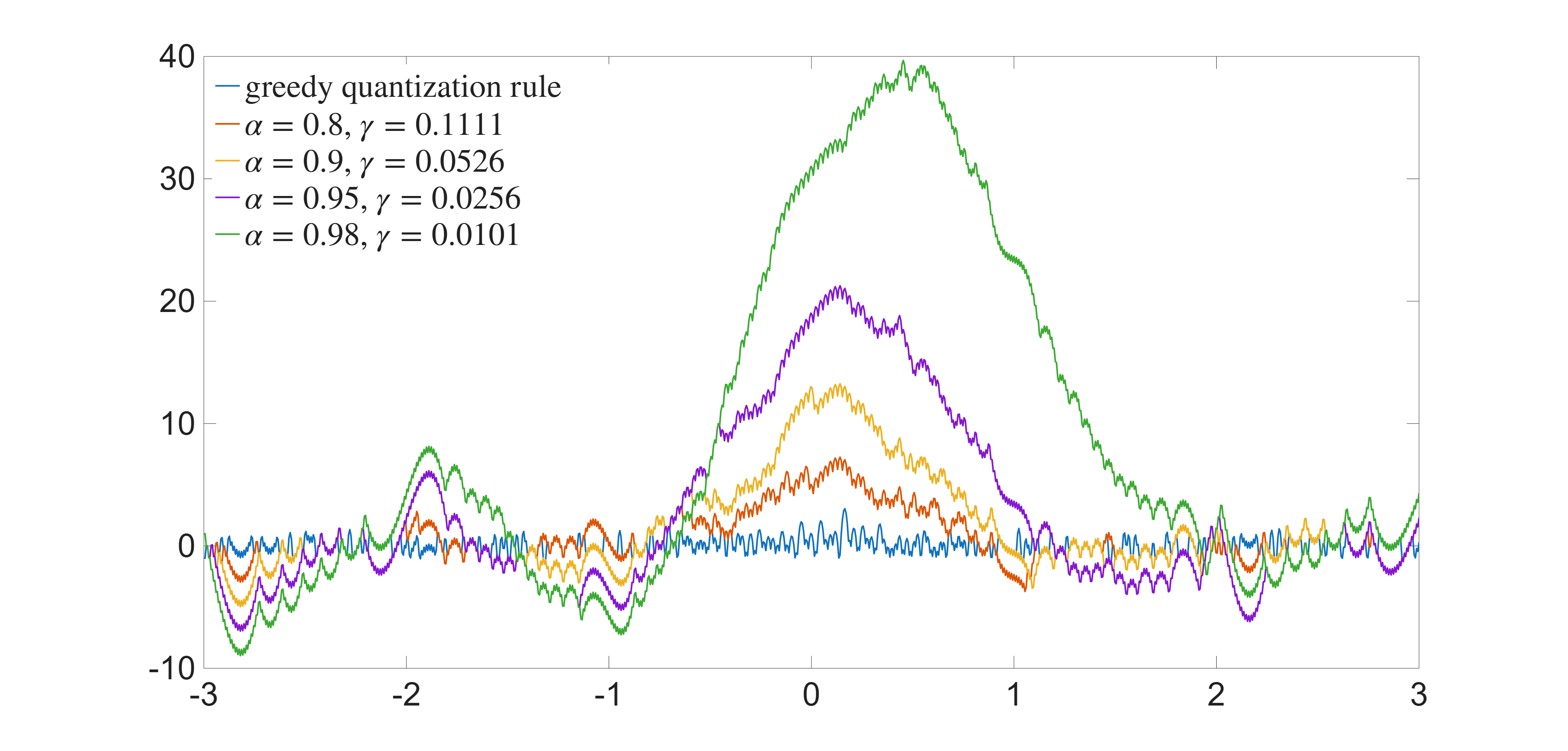}
    \caption{{\small State variables associated with the quantization of the function $f(x)=0.5\,\mathrm{sinc}(x)$ for $x\in[-3,3]$, using the greedy quantization rule--for which no theoretical guarantees are available for general bandlimited functions--and the different quantization rules proposed by Y\i lmaz~\cite{yilmaz2002stability}.}}
    \label{Yilmaz_plotz}
\end{figure}

A drawback of these approaches is that the parameter choice and the resulting bound depend on the maximal admissible amplitude, rather than the actual amplitude. Indeed, Figure~\ref{Yilmaz_plotz} demonstrates that for a fixed signal, increasing the admissibility parameter $\alpha$ in the second-order scheme proposed by Y\i lmaz directly translates into an increased amplitude of the state variable. This is not desirable: it has been demonstrated in a number of works that a large magnitude of the state variable is directly
associated with a degradation of the reconstructed signal (see \cite{kwon2004op,chen2009modeling,beilleau2003systematic}).
For this reason, the greedy quantization rule often remains the method of choice even when its stability is only empirically observed and no guarantees are available, given that it
minimizes the magnitude of the state variable at each iteration.
In Figure \ref{Yilmaz_plotz}, it can be seen empirically that the maximum amplitude of the state variable for the greedy quantizer is smaller than for any of the quantizers proposed by Y\i lmaz. 

This naturally raises the question of whether it is possible
to prove more general stability guarantees for higher-order $\Sigma\Delta$ schemes with the greedy quantization rule.

\subsection{Generalized \texorpdfstring{$\Sigma\Delta$}{Sigma-Delta} schemes}
A common remedy for establishing stability beyond the standard first and second order schemes discussed above is to introduce a carefully designed feedback filter in the recurrence relation. This idea has been suggested as well as empirically explored in \cite{schreier2005understanding} and is commonly used in practice.
In mathematical terms this corresponds to a generalized $\Sigma\Delta$ schemes in the sense of the following definition.
\begin{definition}
Let $y = (y_n)_{n\in \mathbb{N}_0} \subset \mathbb{R}$ be a sequence with $\|y\|_{\infty}\le 1$.  
A generalized $1$-bit $\Sigma\Delta$ scheme is defined by initial data $(v_n)_{n\in \mathbb{Z}_{-}}$ -- a common choice would be $v_n=0$ for all negative integers $n$ -- and a recurrence relation of the form
\begin{equation}\label{generalized_SD}
\left\{
        \begin{aligned}
            q_n &= \operatorname{sign}\bigl((h\ast v)_n + y_n\bigr),\\
            v_n &= (h\ast v)_n + y_n - q_n,
        \end{aligned}
\right.
\end{equation}
where $h \in \ell^1(\mathbb{N}_0)$ is a feedback filter with $h_0 = 0$ 
and for all $n\in \mathbb{N}_0$
\[
    (h\ast v)_n = \sum_{i = 1}^{\infty} h_i\, v_{n-i}.
\]
\end{definition}

While the definition above is stated for an arbitrary filter $h$, only suitable feedback filters produce the desired noise-shaping effect.  
The following structural condition on $h$ guarantees noise shaping, provided that the scheme is stable.

\begin{definition}
A $\Sigma\Delta$ scheme as in \eqref{generalized_SD} is said to be of $r$-th order if 
\begin{equation}\label{moment_condition}
    \delta_0 - h = \Delta^r g,
\end{equation}
for some $g\in \ell^1(\mathbb{N}_0)$, where $\delta_0$ denotes the Kronecker delta at $0$.
\end{definition}

In this paper, we focus on the case where the feedback filter $h$ is of finite length $L$, that is, $h_j = 0$ for all $j>L$, a so-called \emph{finite impulse response} (FIR) filter. While there are also examples of $\Sigma\Delta$ schemes based on infinite impulse response filters \cite{johns2002design}, FIR filters are commonly used and have also been the focus of prior theoretical investigations.\\
Finite length of the feedback filter implies that also the initial condition can be described by finitely many values, namely $v_{-L},\dots, v_{-1}.$ Furthermore, for filters of length $L,$ condition \eqref{moment_condition} is equivalent to the equalities 
\[
    \sum_{i=0}^L (\delta_0^i - h_i)\, i^k = 0 
    \qquad \text{for all } k = 0,\dots, r-1,
\]
which are commonly referred to as the \emph{moment conditions}.\\
%Moreover, choosing $g = \delta_0$ recovers the standard $\Sigma\Delta$ schemes.  
Through a reparametrization, an $r$-th order generalized $\Sigma\Delta$ scheme \eqref{generalized_SD} can be expressed as a standard $r$-th order scheme but no longer with the greedy quantization rule, see \cite{gunturk2003one}. Indeed, if we define $u := v \ast g$, then the generalized $r$-th order scheme~\eqref{generalized_SD} can be expressed as
\[
    \Delta^r u = y - q,
\]
so $u$ satisfies the defining inequality for standard $r$-th order $\Sigma\Delta$ quantizer. A direct implementation of this recurrence relation, however, is not possible as the sequence $q$ is unchanged and hence still computed based on $v$ and not on $u.$ Nevertheless this reparametrization is crucial for the theoretical analysis, as the reconstruction error of a standard $r$-th order $\Sigma\Delta$ scheme can be bounded in terms of $\|u\|_{\infty}$ via \eqref{error_estimate}.
To bound $\|u\|_{\infty}$ one can apply Young's inequality to the defining equation $u = v\ast g$ and obtain
\begin{equation}
    \|u\|_\infty \le \|v\|_\infty\, \|g\|_{\ell^1},
\end{equation}
which yields the error bound 
\begin{equation}\label{error_estimate_v}
    \|f - f_q\|_{\infty}
    \;\le\;
    \frac{1}{\lambda^r}\,\|\varphi^{(r)}\|_{L^1}\,\|v\|_{\infty}\|g\|_{\ell^1}.
\end{equation}

Hence, it remains to find a filter $h$ that satisfies the moment conditions of order $r$ (and the corresponding $g$) such that the associated generalized $\Sigma\Delta$ scheme is stable for inputs of interest.
For input sequences bounded by a small constant, a sufficient condition is given by the following proposition, established in \cite{anastassiou2002error,schreier1991stability}.
We include the proof for the readers' convenience.

\begin{prop}\label{lemma_classical_criterion}
    Let $y = (y_n)_n$, with $\|y\|_{\infty}\le 1$ and a $\Sigma\Delta$ scheme as in \eqref{generalized_SD}.
    If \begin{equation}\label{classical_criterion}
        \|h\|_{\ell^1} + \|y\|_{\infty}\le 2,
    \end{equation}
    then the scheme is stable for any  initial values $v_{-L},\dots,v_{-1}\in [-1,1]$ and one has $\|v\|_{\infty}\le 1.$
\end{prop}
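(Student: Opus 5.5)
We argue by induction on $n\in\mathbb{N}_0$, proving that $|v_n|\le 1$ for every $n$. The base of the induction is provided by the hypothesis on the initial values: $v_{-L},\dots,v_{-1}\in[-1,1]$, and $v_n=0$ for $n<-L$ if one regards the initial data as extended by zero, which is harmless since $h_j=0$ for $j>L$. Assume now that $|v_k|\le 1$ for all $k<n$. We will show $|v_n|\le 1$. Since $C=1$ does not depend on $f$ or $n$, this yields $\|v\|_\infty\le 1$ and hence stability.

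For the inductive step, set $w_n:=(h\ast v)_n+y_n$, so that $q_n=\operatorname{sign}(w_n)$ and $v_n=w_n-\operatorname{sign}(w_n)$. By the triangle inequality, the induction hypothesis and \eqref{classical_criterion},
\[
|w_n|\le \sum_{i=1}^{L}|h_i|\,|v_{n-i}|+|y_n|\le \|h\|_{\ell^1}+\|y\|_\infty\le 2 .
\]
It remains to observe the elementary fact that $|x-\operatorname{sign}(x)|\le 1$ whenever $|x|\le 2$. If $x\ge 0$, then $x-\operatorname{sign}(x)=x-1\in[-1,1]$. If $x<0$, then $x-\operatorname{sign}(x)=x+1\in[-1,1)$. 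The convention $\operatorname{sign}(0)=1$ causes no trouble, since it gives $v_n=-1$. Hence $|v_n|\le 1$, which closes the induction.

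The argument is routine. The only point needing care is the bookkeeping of indices: $(h\ast v)_n$ involves only $v_{n-1},\dots,v_{n-L}$, because $h_0=0$ and $h$ has length $L$. These values are either given initial data or earlier iterates, so the induction hypothesis applies to every term.
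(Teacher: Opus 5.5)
Your proof is correct and follows essentially the same route as the paper: induction on $n$, bounding $|(h\ast v)_n+y_n|\le \|h\|_{\ell^1}+\|y\|_{\infty}\le 2$ via the induction hypothesis, and then reading off $|v_n|\le 1$ from the sign rule. The paper writes out the cases $(h\ast v)_n+y_n\ge 0$ and $(h\ast v)_n+y_n<0$ separately, which is exactly your elementary fact $|x-\operatorname{sign}(x)|\le 1$ for $|x|\le 2$.
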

\begin{proof}
    By induction. For the initial values, the statement holds by assumption.\\ 
    For the induction step, we assume that for all $k = n-L,\dots, n-1,$ we have $v_k\in [-1,1].$\\
    We first consider the case that $(h\ast v)_n +y_n\ge 0$ and then $q_n = 1.$
    By Young's inequality and assuming that $|v_k|\le 1$ for all $k = 0,\dots,n-1$ we have $|(h\ast v)_n|\le \|h\|_{\ell^1},$ therefore
    \begin{equation*}
        v_n = (h\ast v)_n +y_n - q_n\le \|h\|_{\ell^1} + \|y\|_{\infty} - 1\le 1.
    \end{equation*}
    Furthermore, $-v_n = -(h\ast v)_n -y_n + q_n\le q_n = 1,$ so we have
    $|v_n|\le 1$.\\
    For  $(h\ast v)_n +y_n\le 0$, the proof proceeds analogously.
\end{proof}

G\"unt\"urk \cite{gunturk2003one} proposed to exploit this proposition together with the error bound \eqref{error_estimate_v} to design $\Sigma\Delta$ schemes that are stable for all input sequences $y$ of a given maximum amplitude $\mu$.\\
As the bound \eqref{error_estimate_v} gets stronger the smaller $\|g\|_{\ell^1}$ is, this strategy boils down to finding $r$-th order filters for which $\|g\|_{\ell^1}$ is as small as possible while at the same time the stability criterion\eqref{classical_criterion} is satisfied. Under the aforementioned length constraint this gives rise to the optimization problem
\begin{equation}\label{minimization_g_l1}
    \operatorname*{argmin}\|g\|_{\ell^1} 
    \quad \text{s.t.} \quad 
    \begin{cases}
        \displaystyle \sum_{i = 0}^L (\delta_0^i - h_i)i^k = 0, & \text{for all } k = 0,1,\dots, r-1 \\[3pt]
        \|h\|_{\ell^1}\le 2-\mu,
    \end{cases}
\end{equation}

%In real application, given the maximum amplitude $\mu$ of a signal, and using the classical criterion as a guideline for achieving stable quantization, one aims at finding the shortest filter $h$ with minimal $\|h\|_{\ell^1}$ such that

In numerical simulations, one consistently observes that the filters $h$ corresponding to the solutions of this minimization problem are sparse, for suitable parameters $\mu$ even minimally sparse in the sense of the following definition.

%, for a fixed filter length $L$, when solving the optimization problem
%\begin{equation}\label{minimization_h_l1}
%    \operatorname*{argmin}\|h\|_{\ell^1} 
%    \quad \text{s.t.} \quad 
% \begin{cases}
% \displaystyle \sum_{i = 0}^L (\delta_0^i - h_i)i^k = 0, & \text{for all } k = 0,1,\\[3pt]
% \|h\|_{\ell^1}\le 2-\mu,
% \end{cases}
% % \end{equation}
% one obtains so-called filters with minimal support, which are the main focus of [] and [] and for which we now give a precise definition.

\begin{definition}
A $r$-th order $\Sigma\Delta$ feedback filter $h$ is said to be of \emph{minimal support} if
\[
    h = \sum_{j = 1}^{r} d_j \,\delta^{(n_j)},
\]
with $\{d_j\}_{j = 1}^r \subset \mathbb{R}$ and $1 \le n_1 < n_2 < \dots < n_r$.
\end{definition}

This observation motivated the study of the optimization problem  \eqref{minimization_g_l1} over the class of filters $h$ of minimal support. A first step was taken in \cite{gunturk2003one} by providing for each $r\in \mathbb{N}$ and $\mu \in (0,1)$ a construction of a minimally supported $r$-th order filter $h$ satisfying $\|h\|_{\ell^1}\le 2-\mu$ and explicitly computing the corresponding $\|g\|_{\ell^1}.$
Choosing the order $r$ appropriately as a function of $\lambda$, this construction was shown to give rise to an error bound decaying as 
\[
    \mathcal{O}\bigl(e^{-c\lambda}\bigr)
\]
for some $0 < c < 1.$ This bound is near-optimal in the sense that a lower bound of the same form (with different but also amplitude dependent constant $c$) was established in \cite{krahmer2012lower}. Subsequently, the analysis of \cite{gunturk2003one} was refined by explicitly solving a relaxed version of the optimization problem \eqref{minimization_g_l1} over minimally supported filters and discretizing the solution in an asymptotically optimal sense \cite{deift2011optimal}. 

In this paper, we are focusing on the case of second-order, $r=2;$ then the resulting minimally supported filters are of the form 
\begin{gather}\label{minimal_filter}
h = \underbrace{\left(\frac{k+1}{k}, 0, \ldots, 0, -\frac{1}{k}\right)}_{k+1} \quad \text{for } k \geq 1,
\end{gather}
and thus satisfy \( \|h\|_{\ell^1}  =  \frac{k+2}{k} \).
Therefore, for a given $\mu,$ the stability condition translates into considering the filter with minimal $\ell^1$ norm, with $k$ such that 
\begin{equation}
    k\ge \frac{2}{1-\mu},
\end{equation}
in other words, the length of the filter scales as $\mathcal{O}\left(\frac{1}{1-\mu}\right).$

For $\mu$ close to $1$ this can lead to very long filters, e.g., if one aims to admit signals up to an amplitude of $0.98,$ one needs a filter of length $100,$ which is not desirable from the viewpoint of circuit complexity. Similarly to the parameter $\alpha$ discussed above, this length only depends on the maximal admissible amplitude, and one does not gain if the actual amplitude is smaller.
At the same time, $\Sigma\Delta$ has been empirically observed to be stable considerably beyond the criterion \ref{classical_criterion} \cite{schreier2005understanding}, see Section \ref{sec_empiric} for an empirical exploration of the stability of filters of the form \eqref{minimal_filter}. For this reason, it is common to use $\Sigma\Delta$ schemes in regimes where their stability is not theoretically guaranteed by the current literature, see \cite{schreier2005understanding} for details.\\
In practice, it is very common to use the following heuristic approach: a filter design is empirically tested for stability under constant inputs, and once stability is verified in this setting, one applies the scheme for more general signals, expecting that it remains stable for a sufficiently high sampling rate.

However, to the best of our knowledge, this intuition has never been rigorously verified, not even for second-order schemes.  
This gap in the theory forms the main motivation for this paper.

Indeed, we show that the stability condition for minimally supported second-order filters can be substantially improved. 
More precisely, we establish the following theorem (here given in a simplified form, see Theorem \ref{Main_Theorem} for the more quantitative version).

\begin{theorem}
Let $h$ be a second order filter with $k\ge 3,$ and $f\in PW_{1}(\mathbb{R}),$ such that $\|f\|_{\infty}< 1- \frac{e^2}{(k+1)^2}$. There exist $c:= c(\|f\|_{\infty},k)>0$ and $C:= C(\|f\|_{\infty},k)>0$, monotonically increasing in the first variable and decreasing in the second, such that if $f$ is quantized with oversampling $\lambda\ge c$ and with initial conditions in $[-1,1]^{k+1},$ then the scheme is stable and 
  \vspace{-.2cm}
  \[
\|v\|_{\infty}\le C(\|f\|_{\infty},k).
    \]
\end{theorem}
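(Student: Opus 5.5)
Write the recurrence with the filter \eqref{minimal_filter} as
\[
v_n = \tfrac{k+1}{k} v_{n-1} - \tfrac1k v_{n-k-1} + y_n - q_n, \qquad q_n=\operatorname{sign}\bigl(\tfrac{k+1}{k}v_{n-1}-\tfrac1k v_{n-k-1}+y_n\bigr),
\]
and analyse trajectories rather than invariant sets.

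\textbf{Step 1: choice of variables.} By \eqref{moment_condition}, $\delta_0-h=\Delta^2 g$ with $g=\Delta^{-1}$ applied to a block of length $k+1$. I therefore introduce block averages $w_n:=\frac1k\sum_{j=1}^{k}v_{n-j}$, or equivalently the state $u=v*g$, which satisfies $\Delta^2u=y-q$. In these variables the dynamics over a window of length $k+1$ look like a discretized second-order ODE. When $|v|$ is large and of one sign over a whole window, the sign of $q$ is frozen. The system is then linear, and its trajectory over one window is explicitly a discrete parabola, as in \cite{farrell2002bounding}.

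\textbf{Step 2: constant input.} First take $y_n\equiv\mu$ with $\mu<1-e^2/(k+1)^2$. I would show that a trajectory starting in $[-1,1]^{k+1}$ only escapes $[-1,1]$ when $(h*v)_n+y_n$ stays of one sign for a stretch. During such an excursion, say $q\equiv 1$ and $v$ drifting negative, $v$ decreases at rate roughly $(1-\mu)$ per step in the second-difference sense. The feedback term $-\frac1k v_{n-k-1}$ acts as a restoring force with delay $k+1$. Solving the linear recurrence with characteristic polynomial $z^{k+1}-\frac{k+1}{k}z^k+\frac1k$ shows that its roots near $1$ give oscillatory behaviour. Stability then amounts to the excursion turning around before its length exceeds the memory $k+1$.

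Balancing the accumulated drift $(1-\mu)m^2/2$ after $m$ steps against the restoring effect gives the threshold $(1-\mu)(k+1)^2\gtrsim e^2$. Here the constant $e$ should come from estimating $(1+1/k)^{k}\le e$ in the growth factor of the homogeneous solution. This yields an explicit bound $C(\mu,k)$ on the peak of the parabola, and hence on $\|v\|_\infty$. This bound blows up as $\mu\to1-e^2/(k+1)^2$, consistent with the claimed monotonicity. I would also prove re-entry: after each excursion the state returns to a controlled set, e.g.\ $[-1,1]$-like values over a full window, so the argument iterates.

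\textbf{Step 3: bandlimited input.} For $f\in PW_1$ with $\lambda\ge c$, Bernstein's inequality gives $|y_n-y_{n-j}|\le \pi\|f\|_\infty j/\lambda$. Over a single excursion, whose length is $O(k)$ times a factor depending on $\mu$, the input is therefore constant up to $O(k/\lambda)$. I would treat $y$ as a constant $\mu'=y_{n_0}$ plus a perturbation and run Step 2 with a slightly worse margin, $\mu'+\varepsilon$ with $\varepsilon=O(k/\lambda)$. Choosing $c$ so that $\|f\|_\infty+\varepsilon$ still lies below the threshold gives the uniform bound, and the monotonicity of $c,C$ follows from the explicit formulas.

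\textbf{Main obstacle.} The hardest step is Step 2's re-entry and non-accumulation claim: that the exit state of one excursion is again a valid initial state, without error building up across excursions. This is even harder once the perturbation of Step 3 is included. I would first look for a Lyapunov-like quantity, such as the discrete energy $(\Delta u)^2+2(1-\mu)|u|$ for the parabolic phase, that decreases over each full cycle.
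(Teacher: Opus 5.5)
Your outline has the right ingredients: the linear regime once $|v_n|>1$, roughly parabolic excursions, and a constant-input analysis followed by a Lipschitz perturbation. But the step that actually proves the theorem is missing. You never establish (i) a bound on the peak of an excursion that does not depend on the history preceding it. Nor do you establish (ii) that once a positive excursion has returned to $[-1,1]$, the state cannot overshoot below $-1$; this is the instability mechanism of Example~\ref{example}. Both are deferred to a ``re-entry and non-accumulation'' claim and to a Lyapunov function you have not found. That function is hard to find in the $u$ variables, because the switching of $q$ is decided by $v$, not $u$. The threshold $(1-\mu)(k+1)^2\gtrsim e^2$ is fitted to the statement rather than derived.

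The heuristic behind the threshold is also not what happens. Near the threshold the worst excursion lasts on the order of $1/(1-\mu)\sim k^2$ steps and reaches a height of order $k$, so it does not turn around within the memory $k+1$. Moreover, $z^{k+1}-\frac{k+1}{k}z^k+\frac1k$ has a double root at $1$ and all other roots strictly inside the unit disk. The oscillation therefore comes from switches of $q$, not from the linear part. The real constraint is on the \emph{height} of an excursion relative to the feedback weight $\frac1k$. If $v_{N-1}\ge-1$, $q_N=-1$ and $v_{N-k-1}$ is as large as the peak, then $v_N\ge -\frac{k+1}{k}-\frac{v_{N-k-1}}{k}+y+1$. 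This stays $\ge -1$ only if the peak is at most $k(1+y)-1\approx 2k$.

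The paper closes this without any Lyapunov function by unrolling the linear recursion along an excursion (Proposition~\ref{evolution_highest_peak}, Lemmas~\ref{lemma_evolution_h_i} and~\ref{bounded_finite_difference_of_h_1}). This gives $v_n=\sum_{i=1}^{k+1}h_i^n v_{-i}+(y-1)\sum_{i=0}^n h_1^{i-1}$, with the following properties:
\begin{itemize}
\item $h_1^n>0$, and $h_i^n=-\frac1k h_1^{n+i-k-2}\le 0$ for $i\ge 2$;
\item $\sum_i h_i^n=1$;
\item $\frac1k(1+\frac1k)\le\Delta h_1^n\le\frac1k(1+\frac1k)^{k-1}$, by a moving-average argument. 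This upper slope is where $e$ really enters, as you guessed.
\end{itemize}
Because of this sign pattern, consider any window preceding an excursion with $v_{-1}\le 1$ and all other entries merely $\ge -1$. This includes windows that still contain large values from an earlier excursion. Any such window is dominated by $(-1,\dots,-1,1)$. Hence every excursion obeys $v_n\le 2h_1^n-1+(y-1)\sum_{i=0}^n h_1^{i-1}\le M(y,k)$, regardless of what came before (Propositions~\ref{Critical_Trigger} and~\ref{parabola_behavior}). Non-accumulation is therefore automatic. The lower bound is then a first-violation argument. At the first $N$ with $v_N<-1$, the inequality above gives $v_N\ge -1$ whenever the gap function $y+1-\frac1k-\frac{M(y,k)}{k}$ is nonnegative. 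Showing it is positive at $y=1-e^2/(k+1)^2$ is what produces the threshold.

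Your Step~3 is right in spirit, but it needs this structure to go through:
\begin{itemize}
\item Bound the excursion using the \emph{maximum} of $y$ over it, and bound the overshoot step using $y_N$.
\item Require their difference to be at most $g(\|f\|_\infty)$. The difference is at most $L(N(\bar v,\|f\|_\infty)+k+1)/\lambda$, where $N(\bar v,\|f\|_\infty)+k+1\le 4e/(1-\|f\|_\infty)$.
\item Use that $g$ is decreasing to pass from local amplitudes to $\|f\|_\infty$.
\item Prove by a minimal-counterexample argument that no $k+2$ consecutive states contain both a value $>1$ and a value $<-1$ (Proposition~\ref{Main_Theorem2}).
\end{itemize}
That last property is what handles sign changes of $f$, and small inputs arriving while a large excursion is still in memory. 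Your reduction to Step~2 with a frozen constant $\mu'=y_{n_0}$ does not address either situation.
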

This result provides a significant improvement with respect to Proposition \ref{lemma_classical_criterion}: the bound establishes stability for filters as short as  $\mathcal{O}\left(\frac{1}{\sqrt{1-\mu}}\right)$ and bandlimited inputs sampled at a high enough rate. For example, for quantizing a function of amplitude $0.98$, one would need a filter of only length $18$ under sufficient oversampling.

The key technical innovation underlying our result is that for the first time our result explicitly exploits the smoothness of the bandlimited signal, while most previous stability analyses have been conducted for arbitrary input sequences, without fully considering that these sequences originate from samples of bandlimited functions.

%This result suggests that by incorporating regularity information on the input signal, one can design shorter, more efficient filters that maintain global stability even at high input amplitudes.

\section{Empirical stability and instability beyond the \texorpdfstring{$\ell^1$}{l1} guarantee}\label{sec_empiric}
In this section we will demonstrate empirically that stability is closely tied to the smoothness of the underlying signals. We first illustrate that for the very smooth signal $f(x) = 0.785\sin(4\pi x)$ sampled at rate $\lambda = 500$, stability is observed beyond the classical criterion, see Figure~\ref{fig:v_n_in_Gamma_+}).

%For time-varying signals, stability for the greedy quantization rule has been ensured only by resorting to longer second-order filters and applying the classical criterion \eqref{classical_criterion}.  
%As discussed earlier, however, this criterion is overly conservative and often fails to reflect the empirically observed stable behavior of the system when it is violated.  

\begin{figure}[h]
    \centering
    \includegraphics[width=1\linewidth]{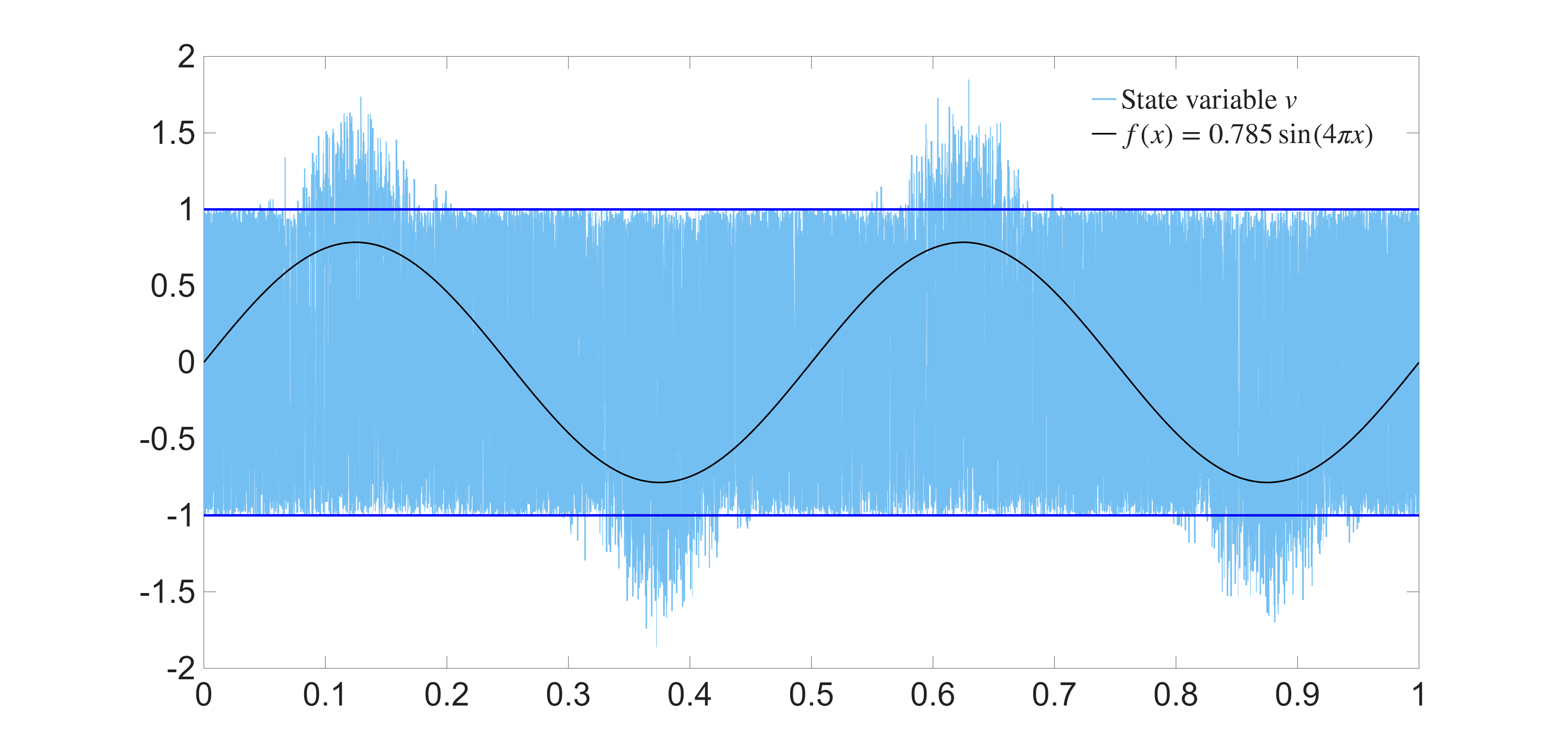}
    \caption{State variables $v = (v_n)_{n}$ generated by the function $f(x) = 0.785\sin(4\pi x)$ in $[0,1]$ and filter $h$ as in \eqref{minimal_filter} with $ k = 3$. It's also worth noting that the state variable exceeds $1$ or $-1$ only in the vicinity of the extreme function values.}
    \label{fig:v_n_in_Gamma_+}
\end{figure}
The figure also demonstrates that the proof technique of Proposition~\eqref{lemma_classical_criterion} can no longer be applied, as the induction hypothesis that the state variable remains bounded by $1$ is clearly violated. 

%The main issue that arises when $\|y\|_{\ell^{\infty}} > 1-\tfrac{2}{k}$ is that the induction argument used in Lemma, since some of the values $v_n$ may fall outside the interval $[-1,1]$ (see Figure~\ref{fig:v_n_in_Gamma_+}).  
%Consequently, determining whether a uniform bound exists for the sequence $(v_n)_n$ becomes nontrivial, even though in practice the quantization process often appears to remain stable for many input signals.

When the samples do not arise from a smooth signals or are not obtained at a high enough rate, the same filter and the same amplitude as before can give rise to instabilities. To construct a counterexample, we follow the idea of \cite{yilmaz2002stability} and construct sequences $(y_n)_{n\in\mathbb{N}},$ with $\|y\|_{\infty}<0.8,$ for which $|v_n| \to \infty$ as $n \to \infty$, showing that the scheme may fail to keep the internal state variables bounded, see Figure~\ref{Critical_scenarios}.  

\begin{figure}[h!]
  \noindent\makebox[\linewidth][c]{%
    \resizebox{1.2\linewidth}{!}{%
      \begin{subfigure}{0.5\linewidth}
        \centering
        \includegraphics[width=\linewidth]{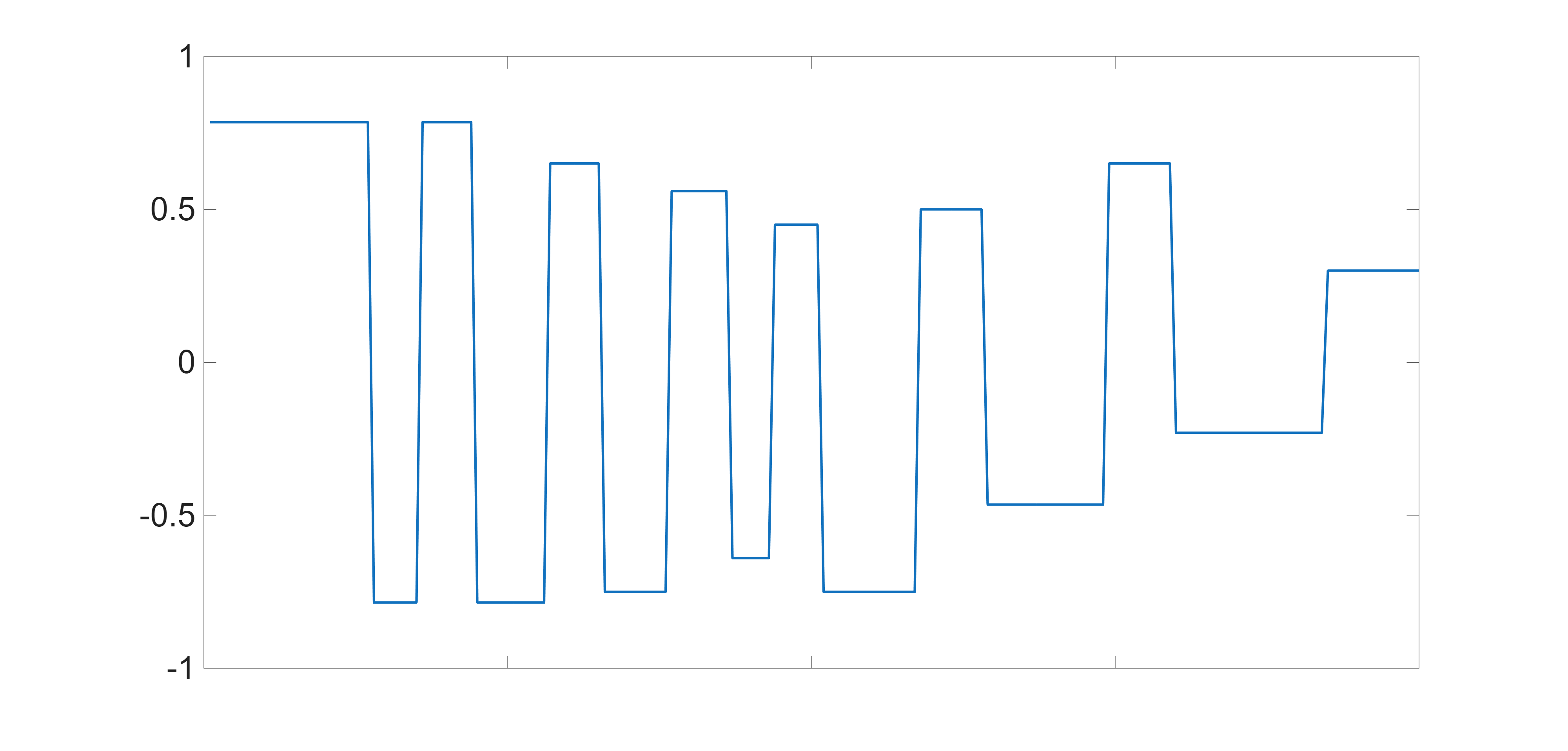}
        \caption{{\tiny Plot of the step function that triggers instability.}}
        \label{Step_function}
      \end{subfigure}
      \begin{subfigure}{0.5\linewidth}
        \centering
        \includegraphics[width=\linewidth]{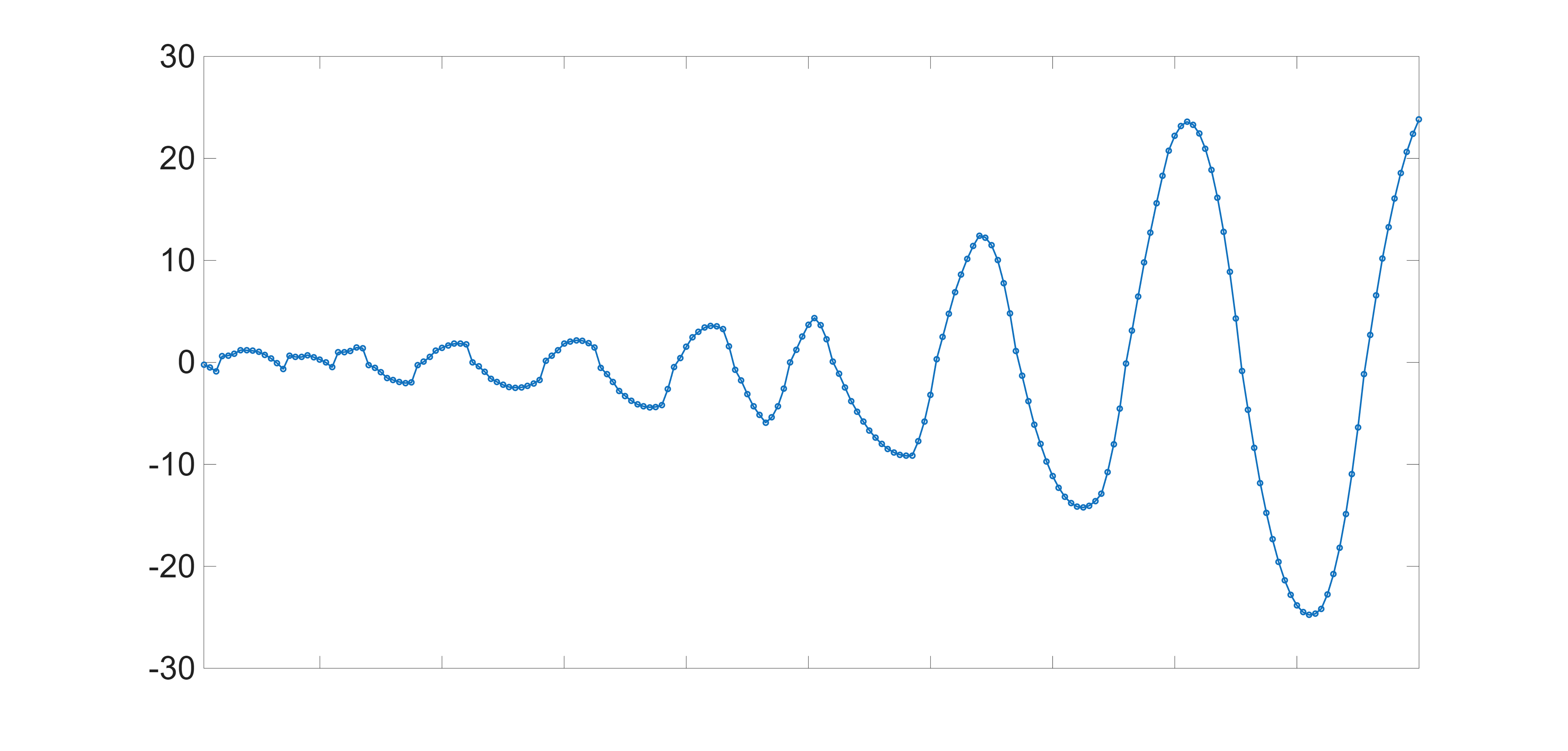}
        \caption{{\tiny State variables obtained by quantizing the signal in \ref{Step_function}.}}
        \label{Oscillating_variable}
      \end{subfigure}
    }%
  }
  \caption{Critical type of scenarios that lead to instability: example with  filter $h$ as in \eqref{minimal_filter} and $k =3$.}
  \label{Critical_scenarios}
\end{figure}
The idea is to design the sample sequence so that positive and negative sample values align with regions where the state variable increases or decays, respectively. Hence, this design crucially incorporates rapid changes of the sample values at the maxima and minima of the state variables. These changes then lead to an increase in the growth or decay rate, giving rise to overshoot effects that gradually increase the amplitudes of the state variable oscillations. Analyzing how the smoothness of the sample sequence can prevent these effects is a main contribution of this paper.

\section{Stability Improved Second Order Filters}
In this section, we develop the theoretical framework needed to address the stability problem described above.  
The analysis is organized into three parts.  
First, we analyze some structural properties of $\Sigma\Delta$ schemes that will help us understand their stability.  
Second, we investigate the stability of the scheme for constant input sequences in the regime where 
\eqref{classical_criterion} is not satisfied.  
Finally, we show how these results extend from constant signals to bandlimited signals.

\subsection{Problem setup and key concepts}
%Before turning to the details, we introduce some definitions, basic properties, and preliminary observations that will guide the analysis and clarify the strategy for the aforementioned problem.
As discussed in the preliminaries, we focus on second-order $\Sigma\Delta$ schemes equipped with feedback filters of minimal support, namely
\begin{gather}\label{minimal_filter2}
    h = \underbrace{\left(\tfrac{k+1}{k},\, 0,\, \ldots,\, 0,\,-\tfrac{1}{k}\right)}_{k+1},
\end{gather}
and we study the corresponding recurrence relation 
\begin{equation}\label{second_order_filter}
\left\{
\begin{aligned}
    v_n &= \frac{k+1}{k}\, v_{n-1} \;-\; \frac{1}{k}\, v_{n-(k+1)} \;+\; y_n \;-\; q_n, \\[2mm]
    q_n &= \operatorname{sign}\!\left( \frac{k+1}{k}\, v_{n-1} \;-\; \frac{1}{k}\, v_{n-(k+1)} \;+\; y_n \right),
\end{aligned}
\right.
\end{equation}
where $y = (y_n)_{n \in \mathbb{N}_0}$ is a given input sequence and 
\[
\operatorname{sign}(x) :=
\begin{cases}
    1, & x \ge 0, \\
    -1, & x < 0.
\end{cases}
\]
A well-known key observation for our analysis (see, e.g., \cite{yilmaz2002stability}) is that for large or large negative values of the state variable, its sign agrees with the quantized sample, as summarized in the following lemma. 
\begin{lemma}\label{u_ningamma+}
    Given a $\Sigma\Delta$ scheme as in \eqref{second_order_filter}, the following properties hold. 
    \begin{itemize}
        \item[i)]  If $v_n\in (1,+\infty),$ then $q_n = 1.$
        \item[ii)]  If $v_n \in (-\infty,-1),$ then $q_n = -1.$
        \item[iii)] If $q_n = 1,$ then $v_n \ge -1.$
        \item[iv)] If $q_n = -1,$ then $v_n < 1.$
    \end{itemize}
\end{lemma}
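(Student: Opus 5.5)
The proof uses nothing beyond the recurrence \eqref{second_order_filter} and the convention $\operatorname{sign}(0)=1$. The plan is to write $w_n := \frac{k+1}{k} v_{n-1} - \frac{1}{k} v_{n-(k+1)} + y_n$, so that \eqref{second_order_filter} reads
\[
q_n = \operatorname{sign}(w_n), \qquad v_n = w_n - q_n .
\]
Every claim then reduces to comparing $w_n$ with $0$ and shifting by $\pm 1$. No assumption on $y_n$ or on earlier state variables is needed, since the statement is purely local in $n$.

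We prove iii) and iv) first, because i) and ii) follow from them by contraposition.

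For iii), if $q_n = 1$, then by definition of the sign function $w_n \ge 0$. Hence $v_n = w_n - 1 \ge -1$.

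For iv), if $q_n = -1$, then $w_n < 0$. Hence $v_n = w_n + 1 < 1$.

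For i), suppose $v_n > 1$ and, for contradiction, $q_n = -1$. Then iv) gives $v_n < 1$, which is impossible, so $q_n = 1$.

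For ii), suppose $v_n < -1$. Were $q_n = 1$, iii) would give $v_n \ge -1$, a contradiction; thus $q_n = -1$.

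There is no genuine obstacle here. The only point requiring care is the asymmetry created by $\operatorname{sign}(0)=1$, which places $w_n = 0$ in the case $q_n = 1$. This is exactly why iii) has the non-strict inequality $v_n \ge -1$ while iv) has the strict inequality $v_n < 1$. It is also why i) is stated on the open interval $(1,+\infty)$: the value $v_n = 1$ can occur only with $q_n = -1$, namely $w_n = 2$ is impossible to produce $v_n=1$ with $q_n=1$ unless $w_n=2$, while $v_n=1$ with $q_n=-1$ would need $w_n = 0$, which contradicts $q_n=-1$. So the boundary cases must be tracked consistently, but they cause no difficulty.
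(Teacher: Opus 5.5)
Your proof is correct and is essentially the paper's argument: both write $v_n = w_n - \operatorname{sign}(w_n)$ and compare $w_n$ with $0$, but in opposite order — the paper proves i), ii) directly and takes iii), iv) as contrapositives, which strictly speaking only gives $v_n \le 1$ in iv), whereas your direct proof of iv) gives the strict inequality. Your closing remark, however, is self-contradictory and gets the boundary case backwards: $v_n = 1$ is possible only with $q_n = 1$ (namely $w_n = 2$), so i) in fact holds on $[1,+\infty)$, and only ii) has its open interval forced by $\operatorname{sign}(0)=1$, since $v_n = -1$ is compatible with both $q_n = 1$ ($w_n = 0$) and $q_n = -1$ ($w_n = -2$).
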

\begin{proof} 
    (i) Observe that if a sequence $(a_n)_{n\in\mathbb{N}}$ satisfies
    \begin{equation}
        v_n = a_n -\text{sign}(a_n) \label{eq:av}
    \end{equation}
    then, if $v_n > 1$ one has
    \begin{equation*}
        a_n > 1+ \text{sign}(a_n)\ge 0,
    \end{equation*}
    therefore $\text{sign}(a_n) = 1.$ The statement follows by observing that for the given $\Sigma\Delta$ schemes, $a_n := (h\ast v)_n+y_n$ satisfies \eqref{eq:av}  because  $q_n = \text{sign}(a_n).$\\\\
    ii) Same as in i).\\
    iii) Negation of statement ii)\\
    iv) Negation of statement i).    
\end{proof}

As noted in \cite{yilmaz2002stability}, this observation yields that a $\Sigma\Delta$ scheme equipped with the greedy quantization rule can be viewed as a piecewise-defined dynamical system.  
In particular, according to Lemma~\ref{u_ningamma+}, when the state variable lie in 
$(1,+\infty)$ (resp.\ $(-\infty,-1)$) they evolve according to the linear dynamics
\begin{equation}\label{linear_system}
    v_n = (h \ast v)_n + y_n - 1 
    \qquad \text{(resp.\ } v_n = (h \ast v)_n + y_n + 1\text{)}.
\end{equation}
In contrast, whenever a value $v_n$ falls into the interval $[-1,1]$, the quantizer output $q_n$ depends on the history and may change from sample to sample, i.e., the evolution of the state variables is governed by a non-linear dynamical system.  
This distinction is essential, as it allows us to analyze the global behavior of a $\Sigma\Delta$ scheme by separately understanding its linear and non-linear regimes.

Since our goal is to fully characterize the scenarios in which the state variables enter either $(1,+\infty)$ or $(-\infty,-1)$, we must understand which configurations of consecutive state variables in $[-1,1]$ may trigger such transitions.  
We formalize this notion in the following definition. 

\begin{definition}
    Fix a $\Sigma\Delta$ scheme as in \eqref{generalized_SD}, with a filter $h$ of length $L$. 
    \begin{enumerate}
        \item[a)]  We say that a vector $\bar{v} =(v_{-L},\dots,v_{-1})\in [-1,1]^L$ is a trigger of reach $m\ge 0$ for an input sequence $y = (y_n)_{n\in\mathbb{N}_0}$ if the $\Sigma\Delta$ scheme with initial values $\bar{v}$ applied to the input sequence yields $(v_0,\dots,v_m)\in (1,+\infty)^{m+1}$ (or $(-\infty,-1)^{m+1}$). In the former case, we speak of a positive trigger; in the latter case, of a negative trigger.
        \item[b)] A generalized positive triggering sequence is defined in the same way as a positive trigger, except that we also allow for larger entries, concretely $\bar{v} \in [-1,+\infty)$,  $v_{-1}\in [-1,1]$ (analogously for generalized negative triggering sequence).
        \item[c)] In all these cases, we say that $(v_0,\dots,v_m)$ is a trajectory associated to $\bar{v}.$ Moreover, we define the coverage $N(\bar{v},y)$ of a positive trigger or generalized positive triggering sequence as
    \begin{equation}
        N(\bar{v},y):= \text{sup}\{m\in \mathbb{N} \ | \ v_j\in (1,+\infty)\ \ \text{for all} \ 0\le j\le m\}.
    \end{equation}
     Similarly, the definition extends to negative triggering sequences.
    \end{enumerate}
\end{definition}

The following lemma states that for constant input sequences, the sign of a trajectory always agrees with the sign of the input $y$.

\begin{lemma}\label{where_the_trajectory_starts}
Let $k \ge 2$.
If $y > 0$ (or $y < 0$, respectively), then the trajectory generated for the constant input sequence $y_n = y$ by
any trigger $\bar{v}$ of reach $m$ lies in $(1,+\infty)^m$ (or $(-\infty,-1)^m$, respectively).
\end{lemma}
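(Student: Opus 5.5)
The plan is to show directly that, for a constant input $y_n = y > 0$, no trigger $\bar v \in [-1,1]^{k+1}$ can be a \emph{negative} trigger. By the definition of a trigger of reach $m$, the associated trajectory $(v_0,\dots,v_m)$ lies either in $(1,+\infty)^{m+1}$ or in $(-\infty,-1)^{m+1}$. Excluding the second alternative therefore gives the claim. The case $y<0$ follows by the symmetric argument, replacing $v$ by $-v$ and $y$ by $-y$ and using the convention for $\operatorname{sign}(0)$ only where it is harmless. It suffices to show $v_0 \ge -1$, since a negative trigger requires $v_0 < -1$.

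Write $a_0 := \frac{k+1}{k} v_{-1} - \frac{1}{k} v_{-(k+1)} + y$, so that $v_0 = a_0 - \operatorname{sign}(a_0)$ and $q_0 = \operatorname{sign}(a_0)$. Suppose for contradiction that $v_0 < -1$.
\begin{itemize}
\item By Lemma~\ref{u_ningamma+}\,(ii), $q_0 = -1$, so $v_0 = a_0 + 1$ and hence $a_0 < -2$.
\item Since $v_{-1}, v_{-(k+1)} \in [-1,1]$, we have
\[
a_0 \ge -\frac{k+1}{k} - \frac{1}{k} + y = -\frac{k+2}{k} + y > -\frac{k+2}{k} \ge -2,
\]
where the strict inequality uses $y>0$ and the last inequality uses $k \ge 2$.
\end{itemize}
This contradicts $a_0 < -2$. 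Hence $v_0 \ge -1$, so $\bar v$ is not a negative trigger, and the trajectory must lie in $(1,+\infty)^{m+1}$, in particular in $(1,+\infty)^m$ as stated.

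I expect no real obstacle here. The only points needing care are that the argument uses just the first step of the trajectory and the Young-type bound $|(h*v)_0| \le \|h\|_{\ell^1} = \frac{k+2}{k}$. This bound is exactly where the hypothesis $k\ge 2$ enters, through $\|h\|_{\ell^1}\le 2$. For $y<0$ the mirrored computation gives $a_0 \le \frac{k+2}{k} + y < 2$, which excludes $v_0>1$ via Lemma~\ref{u_ningamma+}\,(i).
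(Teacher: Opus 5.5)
Your argument is correct. It differs from the paper's proof in a small but genuine way.

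The paper first splits off the regime $|y|\le 1-\frac{2}{k}$. There Proposition~\ref{lemma_classical_criterion} gives $\|v\|_\infty\le 1$, so no triggers exist at all. Only for $y>1-\frac{2}{k}$ does the paper run the one-step estimate, obtaining the sharper bound $v_0>1-\frac{4}{k}\ge -1$.

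You apply the one-step estimate uniformly for every $y>0$, getting $v_0=a_0+1>-\frac{2}{k}\ge -1$. This removes both the case split and the appeal to the classical criterion. It also makes clear that the sign-matching property rests on nothing more than $\|h\|_{\ell^1}=\frac{k+2}{k}\le 2$. Both proofs use $k\ge 2$ at exactly this point.

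What the paper's route gives in addition is the fact that triggers can only occur when $|y|>1-\frac{2}{k}$. That is the regime the rest of the analysis works in; compare the analogous claim $y_N>1-\frac{2}{k}$ in Lemma~\ref{majorant_state}.

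Your treatment of $y<0$ is slightly more careful than the paper's "analogously with reversed signs". You argue directly via Lemma~\ref{u_ningamma+}\,(i), which respects the asymmetric convention $\operatorname{sign}(0)=1$, whereas a formal sign flip would not.
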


\begin{proof}
If $|y|\le 1-\frac2k$ then by Proposition \ref{classical_criterion} one has $\|v\|_{\infty}\le 1$, so there are no triggers. Hence, we need to consider the cases $y>1-\frac2k$ and $y<-\left(1-\frac2k\right)$. 

We first consider the case $y>1-\frac2k$. Let $\bar{v} = (v_{-(k+1)},\dots,v_{-1}) \in [-1,1]^{k+1}$ be a trigger.
Assume by contradiction that the first iterate $v_0$ lies in $(-\infty,-1)$.
By Lemma~\ref{u_ningamma+}, we have
\begin{align*}
    v_0
    &= \frac{k+1}{k}v_{-1} - \frac{1}{k}v_{-(k+1)} + y + 1 \\[0.5em]
    &> -\frac{k+2}{k} + \frac{k-2}{k} + 1
      = 1 - \frac{4}{k} \;\;\ge\; -1,
\end{align*}
where we used $v_{-1}, v_{-(k+1)} \in [-1,1]$ and $y > 1 - \tfrac{2}{k}$.
This contradicts the assumption $v_0 < -1$, proving the claim.
The case $y < \tfrac{2}{k}-1$ follows analogously with reversed signs.
\end{proof}

\begin{remark}
    The constraint $k \ge 2$ is crucial here; for $k = 1,$ the standard second-order filter, one can also have negative trajectories for positive inputs.
\end{remark}
As a consequence of this lemma, when considering constant input sequences, one only needs to consider the cases that $y\ge0$ and the trajectory lies in $(1,+\infty)^m$ and that $y\le0$ and the trajectory lies in $(-\infty,-1)^m$.
In the following, we will focus on the first case; by symmetry of the quantization alphabet $\{-1,1\}$, all results extend directly
to the second case.

\subsection{Trajectories for constant input sequences}
The goal of this section is to understand the (positive) trajectories generated for positive constant input sequences $y_n=y>0$ and their triggering mechanisms. By Lemma~\ref{u_ningamma+}, one has $q_n=1$ for all $n$ corresponding to the trajectory elements, so the trajectory is described by a linear dynamical system. One numerically observes that it displays a parabolic-type behavior, analogous to the standard second-order case (see Figure \ref{Constant_input_image}). The goal of this section is to mathematically describe this behavior.

\begin{figure}
    \centering
    \includegraphics[width=0.9\linewidth]{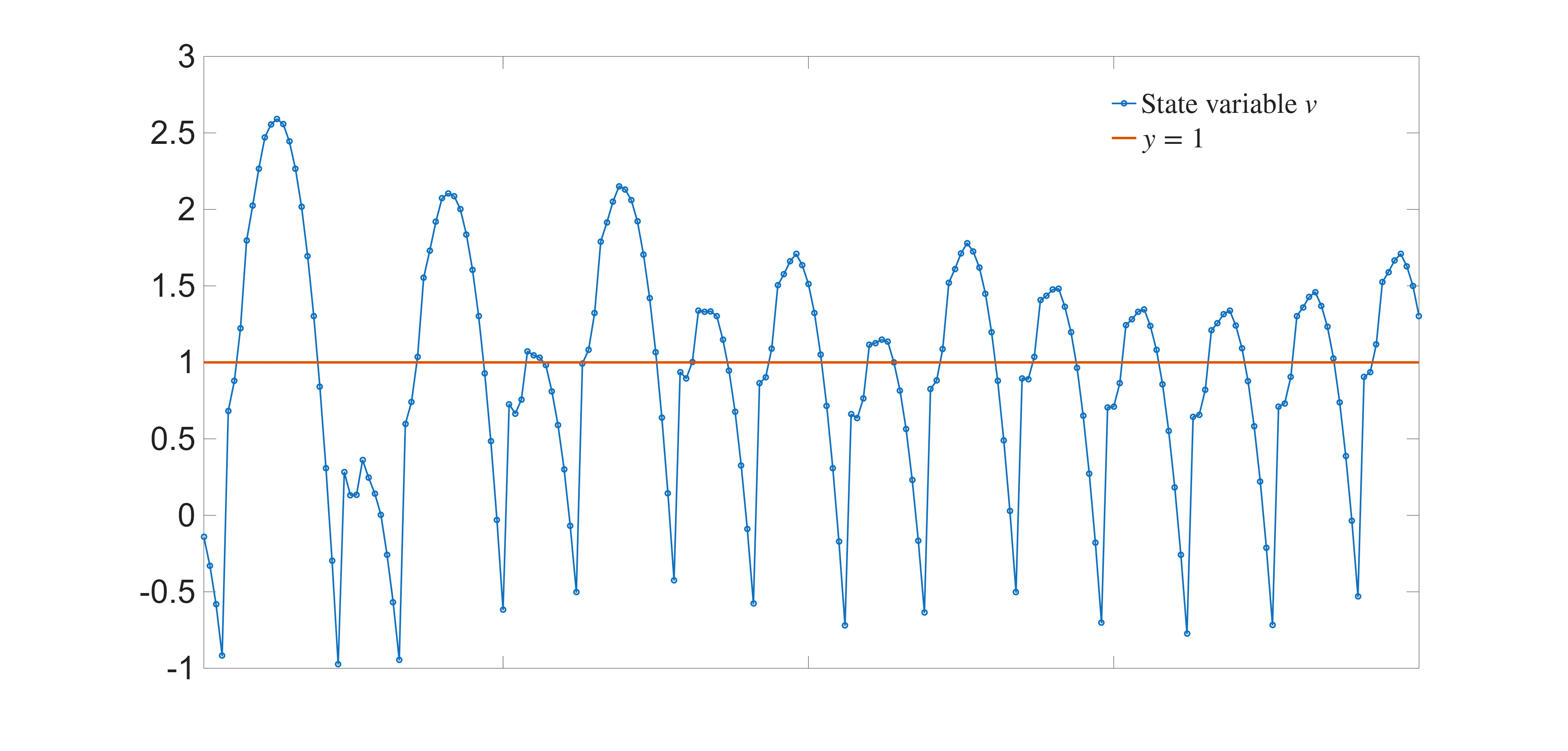}
    \caption{Behavior of the state variable when quantizing the constant signal
    $y = 0.85845$ using the filter $h$ as in \eqref{minimal_filter} and initial condition
    $\tilde{v} = (0,0,0,0)$.}
    \label{Constant_input_image}
\end{figure}

% A key observation is that, for filters of arbitrary length, whenever
% \[
%     \|h\|_{\ell^1} + \|y\|_{\ell^\infty} > 2,
% \]
% the trajectories entering the regions $(1,+\infty)$ or $(-\infty,-1)$ display a
% parabolic-type behavior, analogous to that of the standard second-order
% $\Sigma\Delta$ scheme; see Figure~\ref{Constant_input_image}.
% In this regime, 

% This fact allows us to restrict the analysis to the associated linear dynamical
% system.
% In particular, we investigate how triggering events determine the geometry of the
% resulting trajectories, as this information is essential for understanding
% stability beyond the classical criterion.

% Lemma~\ref{where_the_trajectory_starts} shows that, for constant inputs exceeding
% the threshold $1-\tfrac{2}{k}$ in magnitude, any trajectory generated by a trigger
% immediately can only enter $(1,+\infty)$.

A central challenge in analyzing these trajectories is that, for schemes that differ from the standard case, the trajectory's shape is not exactly parabolic, due to the particular configuration of the trigger that generates it. To better understand these trajectories, we adopt the common strategy in linear dynamical systems theory of tracing back the trajectory elements to the initial value.
More precisely, let $\bar{v} = (v_{-(k+1)},\dots,v_{-1})$ be a positive trigger for $y>0$ with coverage $N(\bar{v},y),$ and $H$ be the matrix of dimension $(k+1)\times(k+1)$ given by 
\begin{gather}
    H = 
    \left(
\begin{matrix}
\frac{k+1}{k} & 0 & \cdots & 0 & -\frac{1}{k} \\
1 & 0 & \cdots & 0 & 0 \\
0 & 1 & \ddots & \vdots & \vdots \\
\vdots & \ddots & \ddots & 0 & 0 \\
0 & \cdots & 0 & 1 & 0
\end{matrix}
\right),
\end{gather}

then for all $n = 0\,\dots,N(\bar{v},y)$ one has

\begin{gather}\label{Recursion_matrix}
    v^{(n)} = Hv^{(n-1)} + (y-1)e_1 = H^2 v^{n-2} + (y-1)\sum_{i = 0}^1 H^i e_1 = \dots = H^{n+1}v^{(-1)} +(y-1)\sum_{i = 0}^{n}H^i e_1,
\end{gather}
where 
\begin{gather}
    v^{(n)} = 
    \left(
    \begin{matrix}
        v_{n}\\
        v_{n-1}\\
        \vdots \\
        v_{n-(k+1)}
    \end{matrix}
    \right) \quad \text{and} \quad e_{1} = \left(
    \begin{matrix}
        1\\
        0\\
        \vdots \\
        0
    \end{matrix}
    \right). 
\end{gather}
In this way, by analyzing the coefficients of the matrix $H^n$, we can understand how the state elements of the trajectory evolve once the trigger $v^{(-1)}$ is fixed.
In the following proposition we reformulate \eqref{Recursion_matrix} for the elements of the trajectories. This will be essential for obtaining a global upper bound for the triggered trajectories.
We carry out this analysis in the constant-input setting, which will later be shown
to determine the admissible stability margins for bandlimited signals as well.

% A central challenge in analyzing these trajectories is that their shape is not immediately apparent. This is because the trajectory shape strongly depends on the specific trigger configuration that generates them.
% To address this challenge, we adopt a straightforward yet effective strategy: we reformulate the recursion~\eqref{second_order_filter} so that each trajectory element can be expressed as a linear combination of appropriate coefficients and trigger values.

% This new representation clarifies how the trajectory depends on the trigger. In contrast to the classical dynamical-systems approach, which typically relies on spectral analysis—where the influence of the trigger is obscured—our formulation allows us to describe the evolution directly in terms of the quantization step and the trigger configuration.

% This perspective enables us to derive global bounds for the trajectories, which
% are essential for the stability analysis of the scheme.

% For that, the following proposition reformulates the recurrence relation \eqref{minimal_filter} so as to make the dependence on the
% trigger explicit and derive a global upper bound on the resulting trajectories that is therefore
% independent of the specific trigger.
% We carry out this analysis in the constant-input setting, which will later be shown
% to determine the admissible stability margins for bandlimited signals as well.

\begin{prop}\label{evolution_highest_peak}
Let $h$ be a filter as in \eqref{minimal_filter} and let 
$\bar{v} = (v_{-(k+1)},\dots,v_{-1})\in [-1,1]^{k+1}$ be a trigger of reach $M$, with constant input 
\(0 < y < 1\).  
Let \((v_n)_{n = 0,\dots, M} \in (1,+\infty)^{M+1}\) be the trajectory associated with $\bar{v}$.  
Then, for every $n = 0,\dots,M$,
\begin{equation}\label{characterization_of_trajectories}
    v_n 
    = h_1^n v_{-1} + h_2^n v_{-2} + \dots + h_{k+1}^n v_{-(k+1)}
      + (y-1)\sum_{i = 0}^{n} h_1^{i-1},
\end{equation}
where the coefficients $h_i^n$ are defined as follows:
\begin{itemize}
    \item[i)]  for $i = 1,\dots,k+1$
    \begin{equation}\label{expression_coefficients_0}
        h_i^{0} := h_i,
    \end{equation}
    \item[ii)] for $n \ge 1$ and $i = 1,\dots,k-1$,
    \begin{equation}\label{expression_coefficients_1}
        h_i^{n} = h_i \, h_1^{n-1} + h_{i+1}^{n-1};
    \end{equation}
    \item[iii)] for $n \ge 1$,
    \begin{equation}\label{expression_coefficients_2}
        h_{k+1}^{n} = h_{k+1} \, h_1^{n-1}.
    \end{equation}
\end{itemize}
with the convention that $h_1^{-1}:= 1.$
\end{prop}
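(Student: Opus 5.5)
The plan is to prove \eqref{characterization_of_trajectories} by induction on $n$, but in a slightly stronger, purely algebraic form. By Lemma~\ref{u_ningamma+}~i), every element of the trajectory satisfies $q_n=1$, so for $0\le n\le M$ the recursion \eqref{second_order_filter} reduces to the linear relation
\[
v_n = h_1 v_{n-1} + h_{k+1} v_{n-(k+1)} + (y-1),
\]
since $h_2=\dots=h_k=0$. I will first use Lemma~\ref{u_ningamma+} to reduce to this relation and then never use the quantizer again.

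The stronger claim is stated for arbitrary windows $(w_{-(k+1)},\dots,w_{-1})\in\mathbb{R}^{k+1}$, not only triggers. Suppose a sequence satisfies the linear relation above for indices $0,\dots,n$. Then
\[
w_n=\sum_{i=1}^{k+1} h_i^n w_{-i}+(y-1)\sum_{j=0}^n h_1^{j-1}.
\]
The generalization matters: when the window is shifted it contains trajectory values larger than $1$, and the formula must still apply.

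The base case $n=0$ is the linear relation itself. It uses $h_i^0=h_i$ and the convention $h_1^{-1}=1$, so the constant term is $y-1$.

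For the induction step I will not expand $w_{n-1}$. Instead I will shift the window: view $w_1,\dots,w_n$ as the first $n$ elements generated from the window $(w_{-k},\dots,w_{-1},w_0)$. In this window $w_0$ sits at position $1$ and $w_{-(i-1)}$ sits at position $i$. Applying the induction hypothesis with $n-1$ gives
\[
w_n = h_1^{n-1}w_0+\sum_{i=2}^{k+1}h_i^{n-1}w_{-(i-1)}+(y-1)\sum_{j=0}^{n-1}h_1^{j-1}.
\]
Next I substitute $w_0=\sum_{i=1}^{k+1}h_iw_{-i}+(y-1)$ and collect coefficients:
\begin{itemize}
\item The coefficient of $w_{-i}$ for $i\le k$ is $h_ih_1^{n-1}+h_{i+1}^{n-1}$, matching \eqref{expression_coefficients_1}.
\item The coefficient of $w_{-(k+1)}$ is $h_{k+1}h_1^{n-1}$, matching \eqref{expression_coefficients_2}.
\item The constant is $(y-1)\bigl(h_1^{n-1}+\sum_{j=0}^{n-1}h_1^{j-1}\bigr)=(y-1)\sum_{j=0}^{n}h_1^{j-1}$.
\end{itemize}

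Specializing to the trigger and its trajectory then gives the proposition. The hypothesis of the induction step holds because $w_1,\dots,w_n$ all lie in the trajectory, so the linear relation is valid for them.

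The main obstacle is bookkeeping, specifically the index range in ii). The recursion needs to hold for all $i=1,\dots,k$, including $i=k$, where $h_k=0$ gives $h_k^n=h_{k+1}^{n-1}$. I would read ii) as covering $i=1,\dots,k$, since the step above produces exactly that.

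The other delicate point is deciding to induct by shifting the initial window rather than by expanding the last term $w_{n-1}$. Expanding $w_{n-1}$ would mix trigger and trajectory values in the lagged term $w_{n-(k+1)}$ and would obscure the clean recursion for $h_i^n$.
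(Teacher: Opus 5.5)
Your proof is correct and is essentially the paper's argument. The paper substitutes the one-step relation successively for the leading terms $v_{n-1}, v_{n-2},\dots,v_0$ and regroups coefficients. Your window-shift induction, with the claim stated for arbitrary real windows so that it covers shifted windows containing values $>1$, is a rigorous formalization of that same iteration and yields the same recursion $h_i^n = h_i h_1^{n-1} + h_{i+1}^{n-1}$. Your closing worry about ``expanding $w_{n-1}$'' applies only to plugging the induction hypothesis into $w_{n-1}$, not to what the paper actually does.

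Your reading of ii) as covering $i=1,\dots,k$ is also right: as stated, $h_k^n$ is left undefined, and the paper itself uses the case $i=k$ in the proof of Lemma~\ref{lemma_evolution_h_i}~ii).
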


\begin{proof}
By Lemma~\ref{u_ningamma+}, we have $q_n = 1$ for all $n = 0,\dots,M$.  
As $h^0=h$ by \eqref{expression_coefficients_0}, this implies with \eqref{second_order_filter} that for each fixed $n = 0,\dots,M$,
\begin{equation}\label{first_step}
    v_n = h_1^0 v_{n-1} + h_2^0 v_{n-2} + \dots + h_{k+1}^0 v_{n-(k+1)} + y - 1.
\end{equation}
The idea is to express $v_n$ in terms of the initial values $v_{-(k+1)},\dots,v_{-1}$ by repeatedly substituting the same relation for the terms $v_{n-1}, v_{n-2},\dots$.

In the first step, this means substituting $v_{n-1} = h_1^0 v_{n-2} + h_2^0 v_{n-3} + \dots + h_{k+1}^0 v_{n-(k+2)} + y - 1$ into \eqref{first_step},
which yields
\begin{align*}
    v_n 
    &= h_1^0\bigl(h_1^0 v_{n-2} + h_2^0 v_{n-3} + \dots + h_{k+1}^0 v_{n-(k+2)} + y - 1\bigr)
       + h_2^0 v_{n-2} + \dots + h_{k+1}^0 v_{n-(k+1)} + y - 1 \\
    &= \bigl((h_1^0)^2 + h_2^0\bigr) v_{n-2} 
       + \bigl(h_1^0 h_2^0 + h_3^0\bigr) v_{n-3} 
       + \dots 
       + h_1^0 h_{k+1}^0 v_{n-(k+2)} \\
    &\quad + (y-1)(1 + h_1^0) \\
    &= h_1^1 v_{n-2} + h_2^1 v_{n-3} + \dots + h_{k+1}^1 v_{n-(k+2)} 
       + (y-1)\sum_{i=0}^{1} h_1^{i-1},
\end{align*}
where the last equality follows from the recursive definitions of the coefficients $h_i^1$.

Iterating this procedure, at each step we substitute the expression for the “leading” term (e.g.\ $v_{n-2}$ in the line above) using the same type of identity, and we group together the coefficients of the state variables and of the constant term $(y-1)$ to obtain \eqref{characterization_of_trajectories}.  
\end{proof}

At this stage, the analysis of the trajectories can be divided into two main components:
First, the dependence on the trigger has been isolated in the
recursion. Second, rather than studying the dynamical system directly,
\eqref{second_order_filter}, we focus on analyzing the simpler system described by
items $i)$--$iii)$ of Proposition~\ref{evolution_highest_peak}, where the initial
conditions are fixed once the filter has been selected.

This reformulation enables us to study the trajectories by combining two
key elements: the evolution of the coefficients $(h_i^n)_{i,n}$ and the
representation of the trajectory elements provided by
\eqref{characterization_of_trajectories}, which makes the dependence on the trigger explicit.

To ensure the analysis applies to filters of arbitrary length—and to establish
stability results for bandlimited inputs—we approximate the evolution of the
coefficients $(h_i^n)_{i,n}$. This approach circumvents the need for a direct study of
their underlying dynamical system.

In the following, we present two lemmas for the coefficients $(h_i^n)_{i,n}$.  
The key observation, already apparent in Proposition~\ref{evolution_highest_peak}, is that these coefficients are interdependent.  
Our strategy is to express each coefficient in terms of $h_1^n$, and then to approximate the evolution of $h_1^n$.  
This, in turn, will automatically control all the remaining coefficients.

% \begin{lemma}\label{filter_coefficient_sum_up_to_1}
%     Let $h$ be a filter as in \eqref{minimal_filter}. Then $\sum_{i = 1}^{k+1} h_i^n = 1 \quad \text{for all } n \in \mathbb{N}.$
% \end{lemma}

% \begin{proof}
% We proceed by induction on $n$.

% For $n = 0$, by definition \eqref{expression_coefficients_0} we have $h^0 = h$, and hence
% \[
%     \sum_{i=1}^{k+1} h_i^0 \;=\; \frac{k+1}{k} - \frac{1}{k} \;=\; 1.
% \]

% For the induction step, assume that $\sum_{i=1}^{k+1} h_i^n = 1$
% for some $n > 0$.  
% Using the recursive definitions of the coefficients, we prove the statement for the level $n+1.$ Thus,
% \begin{gather*} \sum_{i = 1}^{k+1}h_i^{n+1} = \sum_{i = 1}^{k}(h_i^0 h_1^{n} + h_{i+1}^n) +h_{k+1}^0h_1^n \\ =h_1^n \sum_{i = 1}^{k}h_{i}^0 + \sum_{i = 2}^{k+1}h_i^n + h_{k+1}^{0}h_1^n \\
% = h_1^n\underbrace{\sum_{i = 1}^{k+1}h_i^0}_{=1} + \sum_{i = 2}^{k+1} h_i^n = \sum_{i = 1}^{k+1}h_i^n = 1. \end{gather*}
% where we have used the base case $\sum_{i=1}^{k+1} h_i^0 = 1$ and the induction hypothesis $\sum_{i=1}^{k+1} h_i^n = 1$ for the last equality.  
% \end{proof}

\begin{lemma}\label{lemma_evolution_h_i}
    Let $h$ be a filter as in \eqref{second_order_filter}. Then 
    \begin{itemize}
    \item[i)] for all $ n \in \mathbb{N}$ 
    \begin{equation}
        \sum_{i = 1}^{k+1} h_i^n = 1;
    \end{equation}
        \item[ii)] for $n\in \mathbb{N}$ and $2\le i \le k+1$ we have 
        \begin{equation}
            \left\{\begin{matrix}
                h_i^n = 0 & \text{if} \quad n+i \le k,\\\\
                h_i^n = -\frac{1}{k}h_1^{n+i - (k+2)} & \text{if} \quad n+i\ge k+1;
            \end{matrix}\right.
        \end{equation}
        \item[iii)] for all $1\le n<k$ 
        \begin{equation}
            h_1^n = \left(1+\frac1k\right)^{n+1};
        \end{equation}
        \item[iv)] for all $n\ge k$
        \begin{equation}
            h_1^{n} = \left(1+\frac1k\right)h_1^{n-1} -\frac{1}{k}h_1^{n-1-k}.
        \end{equation}
    \end{itemize}
\end{lemma}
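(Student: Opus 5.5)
The plan is to argue directly from the recursive definitions \eqref{expression_coefficients_0}--\eqref{expression_coefficients_2}, using the sparsity of the filter: $h_1=\frac{k+1}{k}$, $h_{k+1}=-\frac1k$, and $h_i=0$ for $2\le i\le k$. I read the recursion \eqref{expression_coefficients_1} as holding for $i=1,\dots,k$. This is what the substitution in the proof of Proposition~\ref{evolution_highest_peak} actually produces, and the case $i=k$ is needed below. Throughout I use the convention $h_1^{-1}=1$.

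For i), I would argue by induction on $n$. The base case is $\sum_i h_i^0=\frac{k+1}{k}-\frac1k=1$. For the step, I sum \eqref{expression_coefficients_1} over $i=1,\dots,k$ and add \eqref{expression_coefficients_2}:
\[
\sum_{i=1}^{k+1}h_i^n=h_1^{n-1}\sum_{i=1}^{k+1}h_i+\sum_{i=2}^{k+1}h_i^{n-1}.
\]
Since $\sum_{i=1}^{k+1}h_i=1$, the right-hand side equals $\sum_{i=1}^{k+1}h_i^{n-1}=1$.

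For ii), the key observation is that for $2\le i\le k$ the term $h_i h_1^{n-1}$ vanishes, so $h_i^n=h_{i+1}^{n-1}$. Iterating $j$ times gives $h_i^n=h_{i+j}^{n-j}$, valid as long as $i+j-1\le k$ and $n-j\ge 0$. There are two cases.
\begin{itemize}
\item If $n+i\le k$, I iterate $j=n$ times. This gives $h_i^n=h_{i+n}^0=h_{i+n}=0$, since $2\le i+n\le k$.
\item If $n+i\ge k+1$, I iterate $j=k+1-i\le n$ times. This gives $h_i^n=h_{k+1}^{\,n+i-k-1}$. If the upper index is at least $1$, \eqref{expression_coefficients_2} yields $-\frac1k h_1^{\,n+i-k-2}$. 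If it equals $0$, we get $h_{k+1}=-\frac1k=-\frac1k h_1^{-1}$ by the convention.
\end{itemize}
The case $i=k+1$ is \eqref{expression_coefficients_2} itself, again together with the convention when $n=0$.

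For iii) and iv), I would write \eqref{expression_coefficients_1} with $i=1$ as $h_1^n=\frac{k+1}{k}h_1^{n-1}+h_2^{n-1}$ and evaluate $h_2^{n-1}$ via ii) with index sum $(n-1)+2=n+1$.
\begin{itemize}
\item If $n<k$, then $n+1\le k$, so $h_2^{n-1}=0$. Induction from $h_1^0=\frac{k+1}{k}$ then gives $h_1^n=(1+\frac1k)^{n+1}$.
\item If $n\ge k$, then $n+1\ge k+1$, so $h_2^{n-1}=-\frac1k h_1^{n-1-k}$. This is exactly iv), including the boundary case $n=k$, where $h_1^{-1}=1$ appears.
\end{itemize}

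The main obstacle is not conceptual but bookkeeping. One has to track the index ranges carefully, verify that the shift-down chain in ii) stops at the right place, and check that the boundary cases ($n+i=k+1$ in ii), $n=k$ in iv)) are consistent with the convention $h_1^{-1}=1$. One also has to make explicit that the recursion must include $i=k$, since otherwise the chain in ii) cannot reach $h_{k+1}$.
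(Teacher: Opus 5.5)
Your proof is correct and matches the paper's argument for ii)--iv): the shift-down chain $h_i^n=h_{i+1}^{n-1}$ from the sparsity of $h$, followed by evaluating $h_2^{n-1}$ in the $i=1$ recursion. You are also right that \eqref{expression_coefficients_1} must hold for $i=1,\dots,k$ (the paper's chain uses this without saying so), and your handling of the boundary cases via $h_1^{-1}=1$ fills in steps the paper skips. For i), the paper phrases the induction through the matrix $H$, identifying $(h_1^n,\dots,h_{k+1}^n)$ with the first row of $H^{n+1}$ and using $H(1,\dots,1)^T=(1,\dots,1)^T$; this is just the matrix form of your direct summation of the recursion.
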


\begin{proof}
    i) We proceed by induction on $n$.

    For $n = 0$, by definition \eqref{expression_coefficients_0} we have $h^0 = h$, and hence
    \[
        \sum_{i=1}^{k+1} h_i^0 \;=\; \frac{k+1}{k} - \frac{1}{k} \;=\; 1.
    \]
    For the induction step, assume that the property holds for $n\in \mathbb{N}^*,$ we prove the statement for the level $n+1.$ \\
 By $i),ii)$ and $iii)$ of Proposition \ref{lemma_evolution_h_i} it holds that for any $n\in \mathbb{N}$ the vector $(h_1^n,h_2^n,\dots, h_{k+1}^n)$ corresponds to the first row of $H^{n+1}.$ Therefore, $\sum_{i = 1}^{k+1}h_i^{n+1}$ can be obtained by considering the first element of the vector given by $H^{n+2}(1,\dots,1)^T.$ The property holds by observing that 
 \begin{gather*}
     H^{n+2}\left(\begin{matrix}
         1\\ 
         \vdots\\
         1
     \end{matrix}\right) = H \underbrace{H^{n+1} \left(\begin{matrix}
         1\\ 
         \vdots\\
         1
     \end{matrix}\right)}_{ = (1,\dots ,1)^T} = H\left(\begin{matrix}
         1\\ 
         \vdots\\
         1
     \end{matrix}\right) = \left(\begin{matrix}
         1\\ 
         \vdots\\
         1
     \end{matrix}\right),
 \end{gather*}
 where we used the induction step together with the structure of the matrix $H.$

    ii) If $n+i\le k,$ and hence $2\le i\le k-1,$ then $h_i = h_{i+1} = \dots = h_{n+i} = 0$  as $h$ is minimally supported. Consequently, one obtains from $ii)$ in Proposition \ref{evolution_highest_peak} that
    \begin{equation*}
        h_i^n = \underbrace{h_i}_{=0}h_1^{n-1} + h_{i+1}^{n-1} = \underbrace{h_{i+1}}_{=0}h_1^{n-2} +h_{i+2}^{n-2} = \dots = h_{n+i}^0 =h_{n+i}=0.
    \end{equation*}
    
    Similarly, if $n+i\ge k+1,$ then 
    \begin{equation*}
        h_i^n = \underbrace{h_i}_{=0}h_1^{n-1} + h_{i+1}^{n-1} = \underbrace{h_{i+1}}_{=0}h_1^{n-2} +h_{i+2}^{n-2} = \dots = h_{k+1}^{n+i - (k+1)} = -\frac{1}{k} h_{1}^{n+i - (k+2)}.
    \end{equation*}
    where we used iii) of Proposition \ref{evolution_highest_peak} in the last equality.\\\\
    iii) From ii) in Proposition \ref{evolution_highest_peak} and using ii) with the observation that $n-1\le k-2$ we have  
    \[
    h_1^n = h_1h_1^{n-1} + \underbrace{h_{2}^{n-1}}_{=0} = (h_1)^2h_1^{n-2} + h_1\underbrace{h_{2}^{n-2}}_{=0} =\dots = (h_1)^{n+1} = \left(1+\frac1k\right)^{n+1}.  
    \]
    iv) From ii) we have 
    \[
    h_1^n = \left(1+\frac1k\right)h_1^{n-1} + h_2^{n-1} = \left(1+\frac1k\right)h_1^{n-1} -\frac{1}{k}h_1^{n-1-k}.
    \]
\end{proof}

\begin{lemma}\label{bounded_finite_difference_of_h_1}
    Let $h$ be a filter as in \eqref{second_order_filter}, and set 
    \[
        \alpha := \frac{1}{k}\Bigl(1+\frac{1}{k}\Bigr),
        \qquad
        \beta := \frac{1}{k}\Bigl(1+\frac{1}{k}\Bigr)^{k-1}.
    \]
    Then, for the finite differences $\Delta h_1^n := h_1^n - h_1^{n-1}$ we have
    \[
        \alpha \;\le\; \Delta h_1^n \;\le\; \beta 
        \qquad \text{for all } n>1.
    \]
\end{lemma}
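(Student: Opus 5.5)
The plan is to reduce everything to the finite differences $d_n:=\Delta h_1^n$ and to show that, from index $k$ on, each $d_n$ is an average of the $k$ preceding ones. The bounds then propagate by strong induction from an explicitly computable initial window.

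\emph{Step 1: explicit initial values.} Proposition~\ref{evolution_highest_peak} gives $h_1^0=h_1=1+\frac1k$, and the convention there is $h_1^{-1}=1$. Combined with Lemma~\ref{lemma_evolution_h_i}~iii), this yields $h_1^n=(1+\frac1k)^{n+1}$ for all $-1\le n\le k-1$. Hence
\[
d_j=\tfrac1k\bigl(1+\tfrac1k\bigr)^{j},\qquad 0\le j\le k-1 .
\]
In particular $d_0=\frac1k$, and for $1\le j\le k-1$ we get $\alpha=d_1\le d_j\le d_{k-1}=\beta$. So the claim already holds for $2\le n\le k-1$ (and also for $n=1$).

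\emph{Step 2: averaging recursion.} For $n\ge k$, Lemma~\ref{lemma_evolution_h_i}~iv) can be rewritten as $h_1^n-h_1^{n-1}=\frac1k\bigl(h_1^{n-1}-h_1^{n-1-k}\bigr)$. Telescoping the bracket gives
\[
d_n=\frac1k\sum_{j=n-k}^{n-1} d_j ,
\]
so $d_n$ is the arithmetic mean of the $k$ previous differences. Note that Lemma~\ref{lemma_evolution_h_i}~iv) at $n=k$ involves $h_1^{-1}=1$, which is consistent with the convention.

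\emph{Step 3: the case $n=k$.} This is the main obstacle, because the window $j=0,\dots,k-1$ contains $d_0=\frac1k<\alpha$, so one cannot simply invoke "mean lies between min and max". I would compute it directly:
\[
d_k=\frac1{k^2}\sum_{j=0}^{k-1}\bigl(1+\tfrac1k\bigr)^j=\frac1k\Bigl(\bigl(1+\tfrac1k\bigr)^k-1\Bigr).
\]

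For the lower bound $d_k\ge\alpha$ we need $(1+\frac1k)^k\ge 2+\frac1k$. Keeping three terms of the binomial expansion gives $(1+\frac1k)^k\ge 2+\frac{k-1}{2k}$, which suffices when $k\ge3$.

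For the upper bound $d_k\le\beta$ we need $(1+\frac1k)^{k-1}\bigl(1+\frac1k\bigr)-1\le(1+\frac1k)^{k-1}$. This is equivalent to $(1+\frac1k)^{k-1}\le k$, which holds since the left side is at most $e<3\le k$.

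The case $k=2$ must be checked by hand, or else the assumption $k\ge3$ (used in the main theorem) must be imposed. Indeed, for $k=2$ we get $d_2=\frac12\cdot\frac54<\alpha=\frac34$, so I expect the statement to need $k\ge3$, and I would state this restriction explicitly.

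\emph{Step 4: induction for $n\ge k+1$.} For $n\ge k+1$ the window $\{n-k,\dots,n-1\}$ lies in $\{1,2,\dots\}$. By strong induction, all its entries lie in $[\alpha,\beta]$: the entries with indices $1,\dots,k-1$ by Step~1, index $k$ by Step~3, and later indices by the induction hypothesis. Since $d_n$ is their average by Step~2, $d_n\in[\alpha,\beta]$ as well, which completes the argument.
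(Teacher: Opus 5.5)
Your proof is correct and follows essentially the same route as the paper: the moving-average identity $\Delta h_1^n=\frac1k\sum_{i=1}^{k}\Delta h_1^{n-i}$, the explicit differences $\frac1k\bigl(1+\frac1k\bigr)^i$ on the initial window, a direct computation of $\Delta h_1^k=\frac1k\bigl[\bigl(1+\frac1k\bigr)^k-1\bigr]$ compared against $\alpha$ and $\beta$, and then induction. You are right that $k\ge 3$ is genuinely needed (for $k=2$ you get $\Delta h_1^2=\frac58<\frac34=\alpha$); the paper's proof uses $k\ge 3$ at exactly that step, even though the lemma's statement omits it.
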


\begin{proof}
From (\textit{iv}) in Lemma~\ref{lemma_evolution_h_i} we have
\[
    \Delta h_1^n 
    = \frac{1}{k}\bigl(h_1^{n-1} - h_1^{n-k-1}\bigr)
    = \frac{1}{k}\sum_{i = 1}^{k}\Delta h_1^{n-i}
    \quad \text{for all } n>k.
\]
Thus, the sequence $(\Delta h_1^n)_{n>k}$ evolves as a moving average of $k$ consecutive terms.  
In particular, if we denote
\[
    \tilde{\alpha} := \min_{1 \le i \le k} \Delta h_1^i,
    \qquad
   \tilde{\beta}  := \max_{1 \le i \le k} \Delta h_1^i,
\]
then for all $n\ge k$ one has
\[
    \tilde{\alpha} \;\le\; \Delta h_1^n \;\le\;  \tilde{\beta}.
\]
We briefly show this by induction.
The property holds by construction at level $k$.
Assume now that it holds for some $n \ge k+1$, that is,
\[
    \tilde{\alpha} \le \Delta h_1^{n-i} \le \tilde{\beta},
    \qquad \text{for all } i = 1,\dots,k.
\]
Summing over $i=1,\dots,k$, we obtain
\[
    \tilde{\alpha}
    \le \sum_{i=1}^{k} \Delta h_1^{n-i}
    = \Delta h_1^{n+1}
    \le \tilde{\beta},
\]
which proves that the property also holds at level $n+1$.\\
It remains to compute these extrema explicitly; we will show that $\tilde{\alpha} = \alpha$ and  $\tilde{\beta} = \beta.$

From ii) in Lemma~\ref{lemma_evolution_h_i}, we know that
\[
    h_1^i = \Bigl(1+\frac{1}{k}\Bigr)^{i+1} 
    \qquad \text{for all } i = 0,\dots,k-1.
\]
Therefore, for $i = 1,\dots,k-1$
\begin{gather}\label{first_k-1}
    \Delta h_1^i
    = h_1^i - h_1^{i-1}
    = \Bigl(1+\frac{1}{k}\Bigr)^{i+1}
      - \Bigl(1+\frac{1}{k}\Bigr)^{i}
    = \frac{1}{k}\Bigl(1+\frac{1}{k}\Bigr)^i.
\end{gather}
The map $i \mapsto \bigl(1+\tfrac{1}{k}\bigr)^i$ is strictly increasing in $i>0$, so among $i=1,\dots,k-1$ the minimum is attained at $i=1$, the maximum at $i=k-1$:
\begin{gather}\label{Deltak}
    \Delta h_1^1 = \frac{1}{k}\Bigl(1+\frac{1}{k}\Bigr) = \alpha,
    \qquad
    \Delta h_1^{k-1} = \frac{1}{k}\Bigl(1+\frac{1}{k}\Bigr)^{k-1} = \beta.
\end{gather}

We now compute $\Delta h_1^k$ using (iii) in Lemma~\ref{lemma_evolution_h_i}.  
From
\[
    h_1^k = \frac{k+1}{k} h_1^{k-1} - \frac{1}{k}
          = \Bigl(1+\frac{1}{k}\Bigr)^{k+1} - \frac{1}{k},
\]
we obtain
\[
    \Delta h_1^k
    = h_1^k - h_1^{k-1}
    = \Bigl(1+\frac{1}{k}\Bigr)^{k+1} - \frac{1}{k}
      - \Bigl(1+\frac{1}{k}\Bigr)^{k}
    = \frac{1}{k}\biggl[\Bigl(1+\frac{1}{k}\Bigr)^{k} - 1\biggr].
\]

To compare $\alpha$ with $\Delta h_1^k$, note that
\[
    \Delta h_1^k > \alpha
    \quad \Longleftrightarrow \quad
    \Bigl(1 + \frac{1}{k}\Bigr)^{k} > 2 + \frac{1}{k},
\]
which holds for all $k\ge 3.$ Hence,
\[
    \tilde{\alpha} = \min_{1 \le i \le k} \Delta h_1^i = \alpha
    = \frac{1}{k}\Bigl(1+\frac{1}{k}\Bigr).
\]
 
It remains to compare $\beta$ with $\Delta h_1^k$. We have
\begin{align*}
    \beta - \Delta h_1^k
    &= \frac{1}{k}\Bigl(1+\frac{1}{k}\Bigr)^{k-1}
       - \frac{1}{k}\biggl[\Bigl(1+\frac{1}{k}\Bigr)^{k} - 1\biggr] \\
    &= \frac{1}{k}\biggl[
        \Bigl(1+\frac{1}{k}\Bigr)^{k-1} 
        - \Bigl(1+\frac{1}{k}\Bigr)^{k} + 1
      \biggr] \\
    &= \frac{1}{k}\Bigl[
        1 - \frac{1}{k}\Bigl(1+\frac{1}{k}\Bigr)^{k-1}
      \Bigr].
\end{align*}
For $k \ge 3$, we have
\[
    \Bigl(1+\frac{1}{k}\Bigr)^{k} < e < k,
\]
so in particular
\[
    \Bigl(1+\frac{1}{k}\Bigr)^{k-1} < k
    \quad \Longrightarrow \quad
    \frac{1}{k}\Bigl(1+\frac{1}{k}\Bigr)^{k-1} < 1.
\]
Thus $1 - \frac{1}{k}\bigl(1+\frac{1}{k}\bigr)^{k-1} > 0$, which implies
\[
    \beta - \Delta h_1^k > 0.
\]
Combining this with the monotonicity in $i$ for $1 \le i \le k-1$, we conclude that
\[
    \tilde\beta = \max_{1 \le i \le k} \Delta h_1^i = \beta
    = \frac{1}{k}\Bigl(1+\frac{1}{k}\Bigr)^{k-1}.
\]

This identifies the extremal values and proves the claimed bounds.
\end{proof}

\begin{remark}\label{linear_growth_of_h_1}
   As a consequence of Lemma \eqref{bounded_finite_difference_of_h_1}, as \(\alpha > 0\), the sequence \( (h_1^n)_{n \ge 0}\) is an increasing sequence that is bounded by two lines. Namely, for all \(n \ge 0\), we have
    \begin{equation}\label{bound_on_h_1^n}
    \alpha n + h_1^0 \le h_1^n \le \beta n + h_1^0.
    \end{equation}
\end{remark}

Our next goal is to prove upper bounds for the trajectory associated to triggers \eqref{characterization_of_trajectories} that depend only on the constant input and on the coefficients $h_1^n$, which can then be further simplified by plugging in \eqref{bound_on_h_1^n}.  
To this end, we introduce the notion of a \emph{critical trigger}.

\begin{definition}
    Let $h$ be a filter as in \eqref{minimal_filter} and consider a constant input sequence $y_n =y>1-\frac{2}{k}.$ Let $\bar{v} = (v_{-(k+1)}, \dots, v_{-1}) \in [-1,1]^{k+1}$ be a trigger of coverage $N(\bar{v},y)$. 
    %We define the coverage ({define earlier and eliminate this here}) of the trigger $N(\bar{v},y)$ as
    %\begin{equation}
    %    N(\bar{v},y):= \text{sup}\{m\in \mathbb{N} \ | \ v_j\in (1,+\infty)\ \ \text{for all} \ 0\le j\le m\},
    %\end{equation}
    %where the values $v_j$'s are obtained by the quantization of the contant input sequence with constant $y.$
    We say that $\bar{v}$ is \textit{critical} if for any other trigger $u\in[-1,1]^{k+1}$ we have that $N(u,y)\le N(\bar{v},y)$ and $$u_n \le v_n \ \text{for all } \ n = 0,\dots, N(u,y).$$ 
\end{definition}

\begin{figure}
    \centering
    \includegraphics[width=0.9\linewidth]{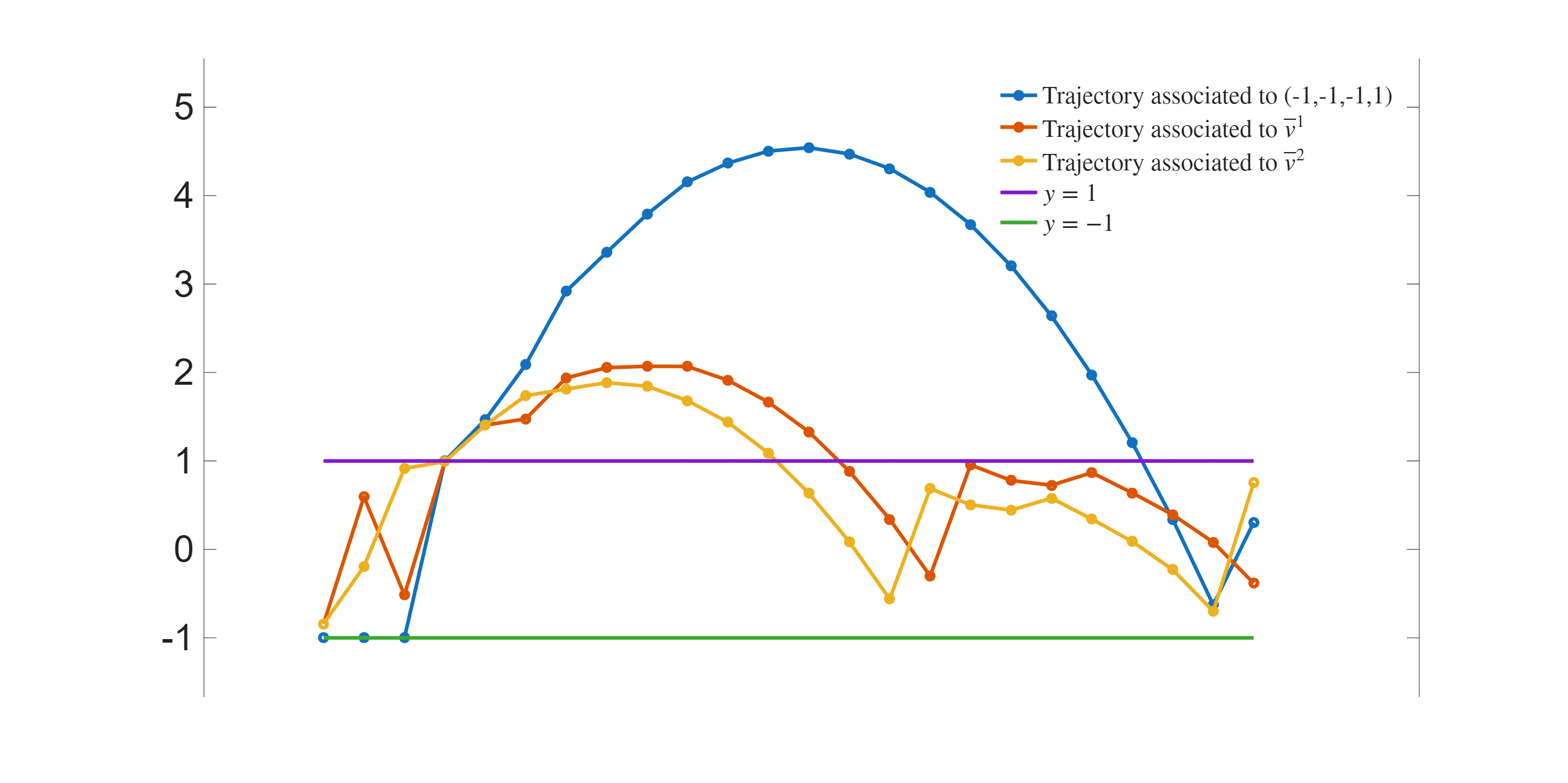}
    \caption{{\small We quantize the constant signal $y = 0.8$ with the filter $h = (0,-4/3,0,0,1/3)$ and three different  triggers $\bar{v} = (-1,-1,-1,1),$ $\bar{v}^1 = (0.99298,-0.5150,0.5938,-0.8430)$ and $\bar{v}^2 = (0.99298,0.9150,-0.1938,-0.8430)$ as initial conditions. The corresponding trajectories are shown. }}
    \label{plot_differece_triggers}
\end{figure}
To get an intuition for the notion of a critical trigger, we refer to Figure~\ref{plot_differece_triggers}.  
In this example, we consider the quantization of the constant input $y = 0.8$ and compare the trajectories generated by the critical trigger $\bar{v}$ and two other triggers, $\bar{v}_1$ and $\bar{v}_2$. For more details on how the critical trigger has been constructed, we refer the reader to Proposition \ref{Critical_Trigger}.  
We can read off from the graph that $N(\bar{v}_1,0.8) = 9$ and $N(\bar{v}_2,0.8) = 8$ and $N(\bar{v},y) =17$, so indeed the critical trigger has the greatest coverage. One also notes that along the trajectory, the state variable iterates corresponding to the critical trigger are maximal among the three cases. However, it is generally not the case that larger state variables correspond to a greater coverage. Indeed, although $\bar{v}_1$ has greater coverage than $\bar{v}_2$, the second state variable along its trajectory is smaller than the second state variable in the trajectory generated by $\bar{v}_2$.

It follows directly from the definition that an upper bound for the trajectory associated with a critical trigger provides a uniform bound for all other trajectories. Hence, a key step for our analysis will be to establish that a critical trigger always exists, see Proposition \ref{Critical_Trigger} below.

As a consequence, the trajectory generated by the critical trigger can serve as an envelope for the trajectories generated by all other triggers, and the analysis of all configurations can be reduced
to that single extremal case.

To prove the existence of a critical trigger, we need the following lemma.

% Our interest lies in understanding how trajectories that are initially bounded
% and eventually return to the interval $[-1,1]$ influence the dynamics in this
% region, where the system ceases to be linear.
% In particular, we seek a bound on the constant input
% $y > 1 - \tfrac{2}{k}$ ensuring that, once the state variables enter
% $(1,+\infty)$, they cannot subsequently cross into $(-\infty,-1)$.

% This is precisely the scenario we aim to rule out, as it would lead to instability
% mechanisms of the type illustrated in Figure~\ref{Critical_scenarios}.
% As will be shown later, although this analysis is carried out for constant input sequences, it is still fundamental for understanding the quantization of smooth functions, as it provides information on the maximal variation a function can exhibit without inducing the oscillatory behavior illustrated in Figure~\ref{fig:duefigure}.  

% In the following, we present some properties that allow us to identify the critical trigger and the trajectory it generates.

\begin{lemma}\label{sign_of_the_filter}
    Let \( h\) be a filter as in \eqref{minimal_filter}. For all \( n \ge 0 \), we have:
\[
\left\{
\begin{array}{cc}
 \ \operatorname{sign}(h_i^n) = -1 & \text{for } \ i = \operatorname{max}\{2,k+1 -n\},\dots,k+1, \\
\text{sign}(h_1^n) = 1 & 
\end{array}
\right.
\]
\end{lemma}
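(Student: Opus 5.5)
The plan is to derive both sign statements from the explicit description of the coefficients in Lemma~\ref{lemma_evolution_h_i}, after first establishing positivity of $h_1^n$.

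\textbf{Step 1: positivity of $h_1^n$.} For $n=0$ we have $h_1^0=h_1=\frac{k+1}{k}>0$. For $1\le n<k$, part iii) of Lemma~\ref{lemma_evolution_h_i} gives $h_1^n=(1+\frac1k)^{n+1}>0$. For $n\ge k$, Lemma~\ref{bounded_finite_difference_of_h_1} gives $\Delta h_1^n\ge\alpha>0$, so $(h_1^n)$ is increasing; alternatively one can use Remark~\ref{linear_growth_of_h_1}. Hence
\[
h_1^n\ge h_1^0=\tfrac{k+1}{k}>0 \quad\text{for all } n\ge 0.
\]
If one prefers not to rely on the range restriction $n>1$ in Lemma~\ref{bounded_finite_difference_of_h_1}, the small indices are covered directly by the explicit powers in part iii). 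For $n=k$ one can also compute directly: $h_1^k=(1+\frac1k)^{k+1}-\frac1k>0$.

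\textbf{Step 2: the negative coefficients.} Take $i\ge 2$ with $i\ge k+1-n$, that is, $n+i\ge k+1$. Part ii) of Lemma~\ref{lemma_evolution_h_i} gives
\[
h_i^n=-\tfrac1k\,h_1^{\,n+i-(k+2)}.
\]
The index $m:=n+i-k-2$ satisfies $m\ge -1$. If $m\ge 0$, Step 1 gives $h_1^m>0$. If $m=-1$, the convention $h_1^{-1}=1$ applies. In either case $h_i^n<0$, so $\operatorname{sign}(h_i^n)=-1$. The range $i=\max\{2,k+1-n\},\dots,k+1$ is exactly the set of $i\ge2$ with $n+i\ge k+1$, so this covers every index in the statement. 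The coefficients with $n+i\le k$ vanish by part ii) and are excluded from the claim.

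\textbf{Main obstacle.} The only delicate point is consistency at the boundary $m=-1$, i.e.\ $n+i=k+1$. I would check that the convention $h_1^{-1}=1$ is compatible with the recursion iii) of Proposition~\ref{evolution_highest_peak}. At $n=0$ this gives $h_{k+1}^0=h_{k+1}=-\frac1k$, which matches. For $n\ge1$ with $i=k+1-n$, unwinding ii) lands on $h_{k+1}^0=-\frac1k$, which again matches.
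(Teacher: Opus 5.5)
Your proof is correct and follows essentially the same route as the paper. Positivity of $h_1^n$ comes from the monotonicity in Remark~\ref{linear_growth_of_h_1}, and negativity of $h_i^n=-\frac1k h_1^{n+i-(k+2)}$ then follows from part ii) of Lemma~\ref{lemma_evolution_h_i}. Your explicit treatment of the boundary index $n+i=k+1$ via the convention $h_1^{-1}=1$ makes precise a step the paper leaves implicit.
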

\begin{proof}
    From Remark \ref{linear_growth_of_h_1} it follows that $h_1^n \ge h_1^0 = \left(1+\frac{1}{k}\right)>0$, so we have $\text{sign}(h_1^n) = 1$ for all $n\ge 0.$ Moreover, from Lemma \ref{lemma_evolution_h_i} we have that for all $ i = \text{max}\{0,k+1 -n\},\dots,k+1,$  \[ h_i^n = -\frac{1}{k}\underbrace{h_1^{n+i - (k+2)}}_{>0}< 0,\] thus yielding the statement.
\end{proof}

\begin{prop}\label{Critical_Trigger}
    Let $h$ be a filter as in \eqref{minimal_filter} and consider a constant input sequence with $ 1-\frac{2}{k}<y<1.$ 
    Then $\bar{v} = (\underbrace{-1,\dots,-1}_{k},1)$ is the unique critical trigger and the generated trajectory is
    \begin{equation}\label{maximal_trajectory}
        v_n= \sum_{i = 1}^{k+1}\left|h_{i}^n\right| + (y-1)\sum_{i = 0}^n h_1^{i-1} ,
    \end{equation}
    
    for all $n = 0,\dots, N(\bar{v},y).$
\end{prop}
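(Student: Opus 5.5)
The proof rests on the representation \eqref{characterization_of_trajectories} from Proposition~\ref{evolution_highest_peak} together with the sign pattern of Lemma~\ref{sign_of_the_filter}. For any trigger $u\in[-1,1]^{k+1}$ and any index $n$ at which its trajectory is still in $(1,+\infty)$, we have
\[
u_n=\sum_{i=1}^{k+1}h_i^n u_{-i}+(y-1)\sum_{i=0}^n h_1^{i-1}.
\]
The second term does not depend on the trigger. The first term is a linear form in $u_{-1},\dots,u_{-(k+1)}$ with coefficients of known sign: $h_1^n>0$, $h_i^n<0$ for $i\ge\max\{2,k+1-n\}$, and $h_i^n=0$ otherwise by Lemma~\ref{lemma_evolution_h_i}(ii). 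Over the box $[-1,1]^{k+1}$ this form is maximized by $u_{-1}=1$ and $u_{-i}=-1$ wherever $h_i^n<0$. Where $h_i^n=0$ the entry is irrelevant, so the choice $\bar v=(-1,\dots,-1,1)$ attains the maximum $\sum_i|h_i^n|$ simultaneously for every $n$. This yields the pointwise bound $u_n\le \sum_i|h_i^n|+(y-1)\sum_{i=0}^n h_1^{i-1}$ and identifies that bound as the value \eqref{maximal_trajectory} produced by $\bar v$.

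The bound above is only valid while $\bar v$'s own iterates obey the linear recursion, so the argument has to be organized as an induction in $n$. First I show that $\bar v$ is a trigger, i.e.\ $v_0>1$. By Lemma~\ref{u_ningamma+}, if $v_0>1$ then $q_0=1$, so I compute
\[
v_0=\frac{k+1}{k}+\frac1k+y-1=\frac{2}{k}+y,
\]
which exceeds $1$ exactly because $y>1-\frac2k$. The same computation confirms $q_0=1$, since the argument of the sign is positive.

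Next I compare an arbitrary trigger $u$ with $\bar v$. Both sequences follow the linear recursion as long as they stay above $1$. For $n\le N(u,y)$, if $\bar v$'s trajectory is still in $(1,\infty)$ up to $n-1$, then $v_n$ is given by \eqref{maximal_trajectory} and $u_n\le v_n$ by the extremality above. In particular $v_n\ge u_n>1$, so $\bar v$ stays in the linear regime at step $n$ as well. Inducting, $N(\bar v,y)\ge N(u,y)$ and $u_n\le v_n$ for all $n\le N(u,y)$, which is precisely criticality.

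One subtlety needs care: if $N(\bar v,y)=\infty$, the coverage comparison is trivial. Either way, the representation for $\bar v$ holds for all $n\le N(\bar v,y)$, which gives formula \eqref{maximal_trajectory}.

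For uniqueness, suppose $w$ is another critical trigger. Then $w_n\le v_n$ and $v_n\le w_n$ on the common range, so the two trajectories agree there. In particular at $n=0$ this gives
\[
\tfrac{k+1}{k}w_{-1}-\tfrac1k w_{-(k+1)}=\tfrac{k+1}{k}+\tfrac1k,
\]
which forces $w_{-1}=1$ and $w_{-(k+1)}=-1$. At later $n$, each new negative coefficient $h_i^n$ activates a further entry and forces $w_{-i}=-1$. All entries become active once $n\ge k-1$, so I need $N(\bar v,y)\ge k-1$ or some other argument for the entries whose coefficients never become active within the coverage. \textbf{This is the main obstacle.} I would try to show that $v_n$ stays above $1$ for $n<k$ using $h_1^n=(1+\frac1k)^{n+1}$ and the explicit sums. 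If that fails, uniqueness has to be read as unique up to those irrelevant coordinates, or the definition has to be interpreted through the tie-breaking in \eqref{maximal_trajectory}.
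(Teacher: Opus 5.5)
Your argument follows the paper's route: the representation from Proposition~\ref{evolution_highest_peak}, the sign pattern of the $h_i^n$, coordinatewise maximization of the linear form, then a comparison of coverages. You are in fact more careful than the paper in two places: you check that $\bar v$ is a trigger ($v_0=y+\frac2k>1$), and you run the coverage comparison as an honest induction.

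There is, however, a gap: the proposal does not establish uniqueness. You correctly note that two critical triggers must have identical trajectories on $0,\dots,N(\bar v,y)$. You also note that this pins down every entry only if the index $n=k-1$ lies within the coverage, since there all of $h_2^{n},\dots,h_{k+1}^{n}$ are strictly negative. You then leave $N(\bar v,y)\ge k-1$ open and offer to weaken the statement instead. The weakening is unnecessary. The paper's own proof uses this fact implicitly (it invokes the unique maximizer for $n\ge k-1$ without checking that such $n$ are covered), and it follows from the computation you suggested.

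Put $r:=1+\frac1k$. By Lemma~\ref{lemma_evolution_h_i}(iii), together with $h_1^0=r$ and $h_1^{-1}=1$, we have $h_1^{i-1}=r^{i}$ for $0\le i\le k$. Hence for $0\le n\le k-1$
\[
\sum_{i=0}^{n}h_1^{i-1}=\sum_{i=0}^{n}r^{i}=k\bigl(r^{n+1}-1\bigr).
\]
By Lemma~\ref{lemma_evolution_h_i}(i) and the sign pattern, $\sum_i|h_i^n|=2h_1^n-1$. So the right-hand side of \eqref{maximal_trajectory} equals
\[
2r^{n+1}-1-k(1-y)\bigl(r^{n+1}-1\bigr)=1+\bigl(2-k(1-y)\bigr)\bigl(r^{n+1}-1\bigr)>1,
\]
because $y>1-\frac2k$ means $k(1-y)<2$.

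Now induct exactly as in your coverage step. If $v_0,\dots,v_{n-1}>1$, then $q_j=1$ for $j<n$. The argument of the sign at step $n$ equals the displayed value plus $1$, which is positive, so $q_n=1$ and $v_n$ is the displayed value. This gives $v_0,\dots,v_{k-1}>1$, i.e.\ $N(\bar v,y)\ge k-1$.

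For a second critical trigger $w$, equality of trajectories at $n=k-1$ reads
\[
\sum_i h_i^{k-1}w_{-i}=\sum_i|h_i^{k-1}|.
\]
Since $h_1^{k-1}>0$ and $h_i^{k-1}=-\frac1k h_1^{i-3}<0$ for every $i\ge 2$, this forces $w=\bar v$.

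One smaller point: the definition of critical quantifies over all triggers. You should also note that for $y>1-\frac2k$ no negative triggers exist, by Lemma~\ref{where_the_trajectory_starts}, so the comparison with positive triggers is exhaustive.
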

\begin{proof}\label{parabola}
    From Proposition \ref{evolution_highest_peak}, we know an explicit formula for the state sequence $v_n$ during the trajectory. For this proof, we extend this formula beyond the trajectory defining for all $n \ge 0$ and all triggering configurations $w\in [-1,1]^{k+1}$
   \begin{equation}\label{evolution_csi_n}
        \tilde{v}_n(w) := \underbrace{h_1^nw_{k+1} + h_2^nw_{k} + \dots + h_{k+1}^n w_1}_{(a)} + \underbrace{(y-1)\sum_{i = 0}^{n}h_1^{i-1}}_{(b)}.
   \end{equation}
   Observe that $(b)$ does not depend on the triggering configuration, therefore, to maximize $(a)$, and thus $\tilde{v}_n$, we need to ensure that $w_\ell = \operatorname{sign}(h_{k+2-\ell}^n)$ for those $\ell$, where $h_{k+2-\ell}^n \neq 0$; when $h_{k+2-\ell}^n = 0,$ the choice of $w_{\ell}\in [-1,1]$ is arbitrary.
  From Lemma \ref{sign_of_the_filter} we know that for all $n\ge 0,$ $h_1^n$ is positive, and for all $n\ge k-1$, $h_i^n$ is negative for all $i=2,\dots,k+1.$ Consequently, the unique triggering configuration that maximizes \eqref{evolution_csi_n} for $n \ge k-1$ is \[\bar{v} = (\text{sign}(h_{k+1}^n),\text{sign}(h_k^n),\dots, \text{sign}(h_1^n)) = (-1,-1,\dots,-1,1).\] 
   This configuration is also a (no longer unique) maximizer for $n <k-1,$ as it follows from Lemma \ref{lemma_evolution_h_i} and \ref{sign_of_the_filter} that $h_i^n$ is non-positive for all $i = 2,\dots,k+1$ also in these cases.

   To conclude that $\bar{v}$ is a critical trigger, we observe that the coverage of a trigger $w$ is the minimal $N>0$ such that $\tilde{v}_N(w)\le 1$. Consequently, we have for all $n < N(w,y)$ that $1<\tilde{v}_n(w)\le \tilde{v}_n(\bar{v})$ and hence $N(\bar{u},y)\le N(\bar{v},y).$ 
   % Thus, for all $n\ge 0$ we have
   % \[\tilde{v}_n(\bar{v}) = \sum_{i = 1}^{k+1}\left|h_{i}^n\right| + (y-1)\sum_{i = 0}^n h_1^{i-1}.\]
   
   % Observe that $\bar{v}$ is the critical trigger. Indeed, by construction, for any other trigger $\bar{u}$ we have that $u_n\le v_n$ for any $n\ge 0,$ which also implies that $N(\bar{u},y)\le N(\bar{v},y).$
   Together with the aforementioned maximization property, this proves that $\bar{v}$ is the unique critical trigger; the trajectory formula \eqref{maximal_trajectory} directly follows by plugging in $\bar{v}$.
\end{proof}

Combining the formula for the trajectory associated with the critical trigger with the approximation given by Lemma \ref{bounded_finite_difference_of_h_1}, we obtain a parabolic upper bound as given in the following proposition. By definition of the critical trigger, this parabola will provide upper bounds for the trajectories generated by any triggers. 
% At this point, for a fixed value $1 - \tfrac{2}{k} < y < 1$, we aim to establish two key facts.  
% First, we show that the trajectory associated with the critical trigger is bounded from above; since this trajectory dominates all others, this will imply that every trajectory generated by any trigger is bounded.  
% Second, we prove that such trajectory can be bounded from above by a parabola depending only on the constant input $y$ and on the length $k+1$ of the filter.
% We prove this in the following proposition.

\begin{prop}\label{parabola_behavior}
    Let $(v_n)_{n =0,\dots, M}\subset (1,+\infty)$ be a trajectory generated by a generalized positive triggering sequence for a constant input sequence with $1-\frac{2}{k}<y<1.$ Then the state trajectories satisfy the parabolic bound 
   {\small \[v_n \le \frac{y-1}{2k}\left(1+\frac{1}{k}\right)n^2 + \left[(y-1)\left(1+\frac{1}{k}\right)\left(1-\frac{1}{2k}\right) + \frac{2}{k}\left(1+\frac{1}{k}\right)^{k-1}\right]n + \frac{2}{k} + y \quad \text{for all } \quad n=0\dots,M .\]} 
   In particular, all the state variables in the trajectories are bounded by the maximum of this parabola, which is given by 
   \begin{equation}
       M(y,k) = y + \frac{2}{k} -\frac{\left[\gamma(y-1) + \mu\right]^2}{2\alpha(y-1)},
   \end{equation}
   where $\gamma = \left(1+\frac{1}{k}\right)\left(1-\frac{1}{2k}\right)$, $\mu = \frac{2}{k}\left(1+\frac{1}{k}\right)^{k-1}$ and $\alpha = \frac{1}{k}(1+\frac1k).$
\end{prop}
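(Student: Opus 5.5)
The plan is to start from the explicit representation of Proposition~\ref{evolution_highest_peak}, bound the trigger-dependent part by a quantity depending only on $h_1^n$, and then insert the linear bounds of Remark~\ref{linear_growth_of_h_1}. Since all $v_n$, $n=0,\dots,M$, lie in $(1,+\infty)$, Lemma~\ref{u_ningamma+} gives $q_n=1$ along the whole trajectory. Hence the derivation of \eqref{characterization_of_trajectories} goes through verbatim for a generalized positive triggering sequence, because it never used that the initial values lie in $[-1,1]$. This gives
\[
v_n = h_1^n v_{-1} + \sum_{i=2}^{k+1} h_i^n v_{-i} + (y-1)\sum_{i=0}^{n} h_1^{i-1}.
\]

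The first step is the trigger bound, in the spirit of Proposition~\ref{Critical_Trigger}. By Lemma~\ref{sign_of_the_filter} and Lemma~\ref{lemma_evolution_h_i}\,ii), we have $h_1^n>0$ and $h_i^n\le 0$ for $i\ge 2$. For a generalized triggering sequence, $v_{-1}\le 1$ and $v_{-i}\ge -1$ for $i\ge 2$. Entries larger than $1$ are multiplied by nonpositive coefficients, so they can only decrease $v_n$. Therefore
\[
v_n\le \sum_{i=1}^{k+1}|h_i^n| + (y-1)\sum_{i=0}^{n} h_1^{i-1}.
\]
Using Lemma~\ref{lemma_evolution_h_i}\,i), $\sum_i h_i^n=1$, so $\sum_i |h_i^n| = h_1^n - \sum_{i\ge2}h_i^n = 2h_1^n-1$. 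With the convention $h_1^{-1}=1$ we get $\sum_{i=0}^n h_1^{i-1} = 1+\sum_{j=0}^{n-1}h_1^j$. So
\[
v_n\le 2h_1^n - 1 + (y-1)\Bigl(1+\sum_{j=0}^{n-1}h_1^j\Bigr).
\]

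The second step uses the bounds \eqref{bound_on_h_1^n}, choosing the direction according to sign. For the positive term we use the upper bound $h_1^n\le \beta n + 1+\tfrac1k$. Because $y-1<0$, for the sum we use the lower bound $h_1^j\ge \alpha j + 1 + \tfrac1k$, which gives $\sum_{j=0}^{n-1}h_1^j \ge \alpha\tfrac{n(n-1)}{2} + (1+\tfrac1k)n$. Collecting terms yields the following coefficients:
\begin{itemize}
\item the constant term is $2(1+\tfrac1k) - 1 + (y-1) = y+\tfrac2k$;
\item the quadratic coefficient is $\tfrac{(y-1)\alpha}{2} = \tfrac{y-1}{2k}(1+\tfrac1k)$;
\item the linear coefficient is $2\beta + (y-1)\bigl[(1+\tfrac1k)-\tfrac\alpha2\bigr] = \mu + (y-1)\gamma$, since $(1+\tfrac1k)-\tfrac{\alpha}{2}=(1+\tfrac1k)(1-\tfrac1{2k})$.
\end{itemize}
This is exactly the claimed parabola.

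Finally, since the leading coefficient $a=\tfrac{\alpha(y-1)}{2}$ is negative, the maximum over $n\in\mathbb{R}$ of $an^2+bn+c$ is $c-\tfrac{b^2}{4a}$. This equals $y+\tfrac2k-\tfrac{[\gamma(y-1)+\mu]^2}{2\alpha(y-1)} = M(y,k)$ and bounds every $v_n$.

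The main point requiring care is the validity of \eqref{bound_on_h_1^n} for all $n\ge0$, including small $n$, since Lemma~\ref{bounded_finite_difference_of_h_1} is stated for $n>1$. I would check that $\Delta h_1^1=\alpha$ directly from Lemma~\ref{lemma_evolution_h_i}\,iii), so that the telescoping $h_1^n = h_1^0+\sum_{m=1}^n\Delta h_1^m$ gives both linear bounds for every $n\ge 0$. The other delicate point is making sure the sign argument for generalized triggering sequences only uses $v_{-1}\in[-1,1]$ and $v_{-i}\ge -1$; the coefficients that vanish, $h_i^n=0$ for small $n$, cause no problem.
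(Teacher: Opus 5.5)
Your proposal is correct and follows the paper's own argument. The sign pattern of the $h_i^n$, together with $v_{-1}\le 1$ and $v_{-i}\ge -1$, bounds $v_n$ by $\sum_i|h_i^n|+(y-1)\sum_{i=0}^n h_1^{i-1}=2h_1^n-1+(y-1)\sum_{i=0}^n h_1^{i-1}$; the upper linear bound on $h_1^n$ and, since $y-1<0$, the lower linear bound inside the sum then give the parabola, whose vertex is $M(y,k)$. Your check that $\Delta h_1^1=\alpha$, so that \eqref{bound_on_h_1^n} holds from $n=0$, fills a step the paper leaves implicit (as in the paper, the lower bound $\Delta h_1^n\ge\alpha$ tacitly needs $k\ge 3$).
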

\begin{proof}
Let $(\xi_n)_{n=0,\dots,M}$ denote the first $M+1$ elements of the trajectory
generated by the critical trigger, as defined in
Proposition~\ref{Critical_Trigger}.
Observe that $v_n \le \xi_n$ for all $n  = 0,\dots, M$. Indeed, as $(v_{-(k+1)},\dots, v_{-1})$ generates a trajectory in $(1,+\infty)$ and $v_2,\dots, v_{(-(k+1))}\in [-1,+\infty)$, from Lemma \ref{sign_of_the_filter} one has for all $n = 0,\dots,M$
\begin{equation}\label{bound_any_conf}
    v_n = \underbrace{h_1^n v_{-1}}_{\le |h_1^n|} +\dots +\underbrace{v_{-(k+1)}h_{k+1}^n}_{\le |h_{k+1}^n|} + (y-1)\sum_{i = 0}^{n}h_1^{i-1} \le \xi_n = \sum_{i=1}^{k+1} \lvert h_i^n\rvert
      + (y-1)\sum_{i=0}^{n} h_1^{\,i-1}.
\end{equation}

Observe that, using Lemma~\eqref{sign_of_the_filter} and $i)$ of Proposition~\ref{lemma_evolution_h_i} one has
\[
    \sum_{i=1}^{k+1} \lvert h_i^n\rvert
    = h_1^n - \sum_{i=2}^{k+1} h_i^n
    = 2h_1^n - 1,
    \qquad \text{for all } n \ge 0
\]
and thus, the right-hand side of \eqref{bound_any_conf} simplifies to
\begin{equation}\label{simp_max_traj}
    v_n \le 2h_1^n - 1
    + (y-1)\sum_{i=0}^{n} h_1^{\,i-1}.
\end{equation}

From Lemma~\eqref{bounded_finite_difference_of_h_1}, we know that for all
$n \ge 0$,
\begin{equation}\label{bounds_h1}
    \alpha n + h_1^0
    \le h_1^n
    \le \beta n + h_1^0,
\end{equation}
where
\[
    \alpha = \frac{1}{k}\left(1+\frac{1}{k}\right),
    \qquad
    \beta = \frac{1}{k}\left(1+\frac{1}{k}\right)^{k-1},
    \qquad
    h_1^0 = 1+\frac{1}{k}.
\]
Plugging this into the right-hand side of \eqref{bound_any_conf} and using that $y-1<0,$ we obtain that 
% Hence, it suffices to bound the coefficients $h_1^i$ in
% \eqref{simp_max_traj}. 
% Since $y-1<0$, we observe that
% \begin{align*}
%     (y-1)\sum_{i=1}^{n} h_1^{\,i-1}
%     &\le (y-1)\sum_{i=1}^{n} \bigl(\alpha(i-1) + h_1^0\bigr) \\
%     &= (y-1)h_1^0\, n
%        + \frac{(y-1)\alpha}{2}\, n(n-1).
% \end{align*}
% Using the upper bound in \eqref{bounds_h1}, we obtain
\begin{align*}
    v_n
    &\le 2(\beta n + h_1^0) - 1
      + (y-1)
      + (y-1)\sum_{i=1}^{n} \bigl(\alpha(i-1) + h_1^0\bigr)\\
      &\le \frac{(y-1)\alpha}{2}\, n^2
    + \Bigl[(y-1)\Bigl(h_1^0 - \frac{\alpha}{2}\Bigr)
      + 2\beta\Bigr] n
    + 2h_1^0 - 2 + y.
\end{align*}
Substituting the explicit expressions for $\alpha$, $\beta$, and
$h_1^0$ into the inequality above gives the desired result.
\end{proof}

% Finally, thanks to Proposition~\eqref{parabola_behavior}, we can now bound from above any trajectory in $(1,+\infty)$ generated by a trigger, and this bound is independent of the particular trigger.

% \begin{remark}\label{Important_remark}
% Note that the bound for the trajectories established in Proposition~\ref{parabola_behavior} also applies to any configuration 
% $\bar{v}$ such that $\bar{v}_1 \in [-1,1]$ and 
% $(\bar{v}_2,\dots,\bar{v}_{k+1}) \in (-1,+\infty)^{k}$, which generates a trajectory in $[1,+\infty)$.  
% This again follows from the fact that the coefficients $h_i^n$ are negative for all $i = 2,\dots,k+1$, while $h_1^n \ge 0$.  
% Therefore, in~\eqref{evolution_csi_n}, each term $\bar{v}_{k+1-i} h_i^n$ would still be bounded above by $|h_i^n|$.  
% This observation will play a crucial role in establishing the boundedness of $v$ for the quantization of both constant case and non constant case.
% \end{remark}

\subsection{Analysis on the stability for constant input sequences}

In the following section, we introduce the key definitions and propositions that will allow us to state the main theorems concerning the stability of the proposed schemes. A crucial step will be to consider the case of constant signals (that are also of independent interest, see for example \cite{gunturk2004refined}). As we will see (cf. Theorem \ref{On the stability for constant inputs} below) the following condition induces stability for this case.   

\begin{definition}\label{stabilizing_input}
    Let $1-\frac{2}{k}<y<1$ and let 
    \[
    M(y,k) = y + \frac{2}{k} -\frac{\left[\gamma(y-1) + \mu\right]^2}{2\alpha(y-1)},
    \] be the maximum of the parabola as in Proposition \ref{parabola_behavior}
    where $\gamma = \left(1+\frac{1}{k}\right)\left(1-\frac{1}{2k}\right)$, $\mu = \frac{2}{k}\left(1+\frac{1}{k}\right)^{k-1}$ and $\alpha = \frac{1}{k}(1+\frac1k).$
    We say that $y$ is a \textit{stabilizing constant input} if 
    \begin{equation}\label{stabilizing_equation}
        -\frac{k+1}{k} -    \frac{M(y,k)}{k} +y+1\ge -1,
    \end{equation}    
\end{definition}

The reason that condition \eqref{stabilizing_equation} ensures stability is that it prevents the state variable from taking values $v_n<-1$ as its left-hand side bounds the state sequence from below, see the proof of Theorem \ref{On the stability for constant inputs} for details.

% whenever the state variable.
% First, we observe that the quantity $M(y,k)$ is obtained by computing the maximum
% of the parabola given in Proposition~\ref{parabola}.

% The main mechanism that may lead to instability is that the peak of the triggered
% trajectory becomes sufficiently large. In this case, after the state variable
% re-enters the interval $[-1,1]$, one may encounter a configuration in which
% $v_{n-1}\in[-1,1]$ while $v_{n-(k+1)}$ coincides with the trajectory peak.
% Consequently, there is a risk that
% \[
%     v_n
%     = \frac{k+1}{k} v_{n-1}
%       - \frac{1}{k} v_{n-(k+1)}
%       + y - q_n
%     < -1,
% \]
% and repeated iterations of the scheme may therefore result in instability.

% In this framework, we adopt a worst-case analysis.
% From Lemma~\ref{u_ningamma+}, we know that if $q_n = 1$, then $v_n \ge -1$.
% Hence, it suffices to consider the case $q_n = -1$.
% Under this assumption, we further consider the configuration that makes $v_n$
% as negative as possible, namely
% \[
%     v_{n-1} = -1,
%     \qquad
%     v_{n-(k+1)} = M(y,k).
% \]
% This choice leads directly to equation~\eqref{stabilizing_equation}.

% Our goal is therefore to determine how much one can exceed the classical
% criterion~\eqref{classical_criterion}, and therefore increasing the admissible $y$, while still ensuring that $y$ is a stabilizing constant input. 

To later generalize the stabilizing property to non-constant inputs, we need robustness with respect to perturbations, which follows from a strict inequality in \eqref{stabilizing_equation} and is hence quantified by the gap function as given in the following definition.

% A key aspect in the stability analysis, both in the constant and in the
% non-constant input case, is the distance between the signal amplitude and the
% cut-off threshold introduced in Definition~\ref{stabilizing_input}.
% This distance quantifies the robustness margin of the scheme and motivates the
% following definition.

\begin{definition}\label{Definition: gap function}
    For every $1-\frac{2}{k}<y<1,$ let \[
    \tilde{y}(y,k):= \frac{1}{k}-1 + \frac{M(y,k)}{k}.
    \]
    Then the \text{gap function} is given by
    $g(y,k):= y-\tilde{y}(y,k).$
\end{definition}
% The \emph{gap function} is designed to identify the conditions under which stable quantization can be done while violating the classical stability criteria. 
% We will discuss this definition in more detail in the next section.
% In the following, we analyze some key properties of the gap function.
\begin{prop}\label{Properties of the gap function}[Properties of the gap function] 
Let $k\ge 3$ and $1-\frac{2}{k}<y<1$.  
Then the gap function
\[
g(y,k):=y-\tilde y(y,k)=y+1-\frac{1}{k}-\frac{M(y,k)}{k}
\]
satisfies the following properties.
\begin{itemize}
    \item[i)] $g(\cdot,k)$ is a continuous concave and strictly decreasing function on $\bigl[1-\frac{2}{k},\,1\bigr)$ for any $k\ge 3$. Moreover, it holds that $g\left(1-\frac{2}{k},k\right) > 1/2$ and $\operatorname{lim}_{y\to 1^-}g(y,k) = -\infty$.
    \item[ii)] There exists a unique $y(k)\in \left(1-\frac2k,1\right)$ such that $g(y(k)) = 0.$
    \item[iii)] It holds that $y(k)>1-\frac{e^2}{(k+1)^2}.$
\end{itemize}
\end{prop}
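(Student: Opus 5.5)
Throughout I substitute $t:=1-y\in\bigl(0,\tfrac2k\bigr]$. Since $y-1=-t$,
\[
M(y,k)=y+\frac2k+\frac{(\mu-\gamma t)^2}{2\alpha t}.
\]
Expanding $(\mu-\gamma t)^2/t=\mu^2/t-2\mu\gamma+\gamma^2 t$ gives
\[
g=(1-t)\Bigl(1-\frac1k\Bigr)+1-\frac1k-\frac{2}{k^2}+\frac{\mu\gamma}{\alpha k}-\frac{\gamma^2 t}{2\alpha k}-\frac{\mu^2}{2\alpha k\,t}.
\]
Everything in the statement will be read off from this explicit formula.

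\textbf{Part i).} Continuity on $(0,2/k]$ is clear. Concavity follows because $g$ is an affine function of $t$ minus a positive multiple of $1/t$, and $t\mapsto 1/t$ is convex on $(0,\infty)$. Since $t$ is an affine, decreasing function of $y$, concavity transfers to $y$, and "strictly decreasing in $y$" is the same as "strictly increasing in $t$". We have
\[
\frac{dg}{dt}=-\Bigl(1-\frac1k\Bigr)+\frac{1}{2\alpha k}\Bigl(\frac{\mu^2}{t^2}-\gamma^2\Bigr).
\]
This derivative is decreasing in $t$, so it suffices to check positivity at the endpoint $t=2/k$. There the condition reads
\[
\Bigl(1+\frac1k\Bigr)^{2k-2}>\gamma^2+2\Bigl(1-\frac1{k^2}\Bigr).
\]
The right side is at most about $3$ for every $k$: for $k=3$ it is about $3.01$, and it tends to $3$ as $k\to\infty$. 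The left side is increasing in $k$, equals $(4/3)^4\approx3.16$ at $k=3$, and tends to $e^2$. I would settle this with a short monotonicity argument in $k$.

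The limit $g\to-\infty$ as $y\to1^-$ comes directly from the term $-\mu^2/(2\alpha k t)$ as $t\to0^+$. For the value at $y=1-\tfrac2k$, I plug in $t=2/k$:
\[
\frac{\mu^2}{2\alpha k t}=\frac{(1+1/k)^{2k-3}}{k}.
\]
I then need elementary bounds on $(1+1/k)^{k-1}$ to show $g>1/2$. I would use
\[
2\le\Bigl(1+\frac1k\Bigr)^{k-1}\cdot\frac{k+1}{k}<e,
\]
with the case $k=3$ checked by hand and larger $k$ handled by the leading-order estimate.

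\textbf{Part ii).} Existence and uniqueness of the zero $y(k)$ follow from strict monotonicity, continuity, $g(1-2/k,k)>1/2>0$, and the limit $-\infty$, via the intermediate value theorem.

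\textbf{Part iii).} Since $g$ is decreasing in $y$, it suffices to show $g(y_0,k)>0$ at $y_0=1-e^2/(k+1)^2$. Note $y_0$ is admissible, because $t_0=e^2/(k+1)^2<2/k$ is equivalent to $e^2k<2(k+1)^2$. The only dangerous term is $\mu^2/(2\alpha k t_0)$, which needs an \emph{upper} bound on $\mu$. I would use $(1+1/k)^{k-1}<e\,\frac{k}{k+1}$, so $\mu<\frac{2e}{k+1}$. With $\alpha=\frac{k+1}{k^2}$ and $t_0=e^2/(k+1)^2$ this gives
\[
\frac{\mu^2}{2\alpha k t_0}<\frac{2k}{k+1}<2.
\]
The remaining terms are then bounded from below:
\begin{itemize}
\item the linear part is $\ge 2-\frac2k-\frac2{k^2}-t_0$;
\item the cross term is $\frac{\mu\gamma}{\alpha k}=2\bigl(1-\frac1{2k}\bigr)\bigl(1+\frac1k\bigr)^{k-1}\ge 2\cdot\frac56\cdot\frac{16}{9}$ at $k=3$, and it is increasing in $k$;
\item the term $\gamma^2 t_0/(2\alpha k)$ is $O(1)$ and small, namely $\frac{\gamma^2e^2}{2(k+1)^3}\cdot k$.
\end{itemize}
Summing these, I expect a positive total of order $1$ uniformly in $k\ge3$.

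I expect the main obstacle to be making the $k$-dependent elementary inequalities fully rigorous for all $k\ge3$: the endpoint derivative in i), $g(1-2/k)>1/2$, and the final sum in iii). I would handle each by checking $k=3$ explicitly and showing the relevant expression is monotone in $k$ (or bounded by its large-$k$ limit), using standard bounds on $(1+1/k)^{k}$ between $2$ and $e$.
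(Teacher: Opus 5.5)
\textbf{Verdict: the overall approach matches the paper's, but the key estimate in part iii) rests on a miscomputed term, so iii) is not proved as written.}

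Your substitution $t=1-y$ and the resulting formula for $g$ are correct; they are the paper's decomposition $A_0+yA_1-A_2(1-y)-A_3/(1-y)$. Parts i) and ii) follow the paper's route: the sign of the monotone derivative at the endpoint $t=2/k$, then the intermediate value theorem. Your explicit concavity argument is a welcome addition, since the paper only uses concavity implicitly.

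\textbf{The error in part iii).} The cross term is miscomputed. Since $\alpha k=1+\frac1k$ and $\gamma/(\alpha k)=1-\frac1{2k}$,
\[
\frac{\mu\gamma}{\alpha k}=\frac{2}{k}\Bigl(1-\frac{1}{2k}\Bigr)\Bigl(1+\frac1k\Bigr)^{k-1}=\frac{(2k-1)\bigl(1+\frac1k\bigr)^{k-1}}{k^2}.
\]
Your expression is missing a factor $\frac1k$. The term is about $0.99$ at $k=3$ and decays like $2e/k$; it is not $\ge 2.96$ and increasing.

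With the correct value, your bookkeeping gives no order-one margin. Your lower bound on the linear part and your upper bound $\frac{2k}{k+1}$ on the dangerous term cancel at order one:
\[
2-\frac2k-\frac2{k^2}-t_0-\frac{2k}{k+1}=-\frac{2}{k(k+1)}-\frac{2}{k^2}-t_0 .
\]
Positivity therefore rests entirely on the $O(1/k)$ cross term beating negative terms of order $1/k^2$. Once $\gamma^2t_0/(2\alpha k)$ is included, their constant is about $4+\frac32e^2\approx15$.

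This is unavoidable, because $1-e^2/(k+1)^2$ is asymptotically the zero of $g$. One finds $1-y(k)=\frac{e^2}{k^2}(1+O(\frac1k))$ and $g(1-\frac{e^2}{(k+1)^2},k)\sim\frac{2+2e}{k}$. So the claimed ``positive total of order $1$ uniformly in $k\ge3$'' is false.

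\textbf{How to repair it.} You need the finer accounting of the paper:
\begin{itemize}
\item bound the cross term from below via $(1+\frac1k)^{k-1}\ge\frac{2k}{k+1}$, giving $\ge\frac{2(2k-1)}{k(k+1)}$;
\item bound the dangerous term from above by $\frac{2k}{k+1}$;
\item use $e^2<8$ in the $t_0$-terms.
\end{itemize}
This leaves $\frac{4k^4-14k^3+k-1}{k^3(k+1)^2}>0$ for $k\ge4$. At $k=3$ the accumulated losses make this lower bound negative, so that case must be evaluated directly: $g(1-e^2/16,3)\approx0.44$.

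\textbf{Remarks on part i).} Do not carry the same slip into the check $g(1-\frac2k,k)>\frac12$. There it is cleaner to complete the square as the paper does:
\[
g\Bigl(1-\frac2k,k\Bigr)=2-\frac4k-\frac{1}{k+1}\Bigl[\Bigl(1+\frac1k\Bigr)^{k-1}-\gamma\Bigr]^2\ge2-\frac4k-\frac{(e-1)^2}{k+1}.
\]
This avoids bounding the same quantity $(1+\frac1k)^{k-1}$ from both sides.

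In your derivative check, the right side $\gamma^2+2(1-\frac1{k^2})=3+\frac{k-3}{k^2}+\frac{(k-1)^2}{4k^4}$ is not monotone in $k$ and reaches about $3.09$. It still stays below $3+\frac1{12}+\frac1{36}<\frac{256}{81}\le(1+\frac1k)^{2k-2}$, so that inequality does hold.
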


\begin{proof}
$i)$ For simplicity, we will fix $k$ and omit it in the notation of $g$ and $M$.
Since
\begin{gather*}
    M(y) = y+\frac{2}{k} -\frac{(\gamma(y-1) +\mu)^2}{2\alpha (y-1)} =  y+\frac{2}{k} -\frac{\gamma\mu}{\alpha} + \frac{\gamma^2}{2\alpha}(1-y) + \frac{\mu^2}{2\alpha(1-y)},
\end{gather*}
we have 
\begin{gather*}
    g(y)  
    % y +1 - \frac{1}{k} - \frac{y}{k} -\frac{2}{k^2} +\frac{2\beta\mu}{2k\alpha} -\frac{\beta^2}{2k\alpha}(1-y) - \frac{\mu^2}{2k\alpha(1-y)} \\
    = 1 - \frac{1}{k} -\frac{2}{k^2} +\frac{\gamma\mu}{k\alpha} +y\left(1-\frac{1}{k}\right) -\frac{\gamma^2}{2k\alpha}(1-y) - \frac{\mu^2}{2k\alpha(1-y)}\\
    = A_0  + yA_1 - A_2(1-y) - \frac{A_3}{1-y},
\end{gather*}

with 
\begin{gather*}
    A_0 = 1 - \frac{1}{k} -\frac{2}{k^2} +\frac{\gamma\mu}{k\alpha},\\
    A_1 = \left(1-\frac{1}{k}\right),\\
    A_2 = \frac{\gamma^2}{2k\alpha},\\
    A_3 = \frac{\mu^2}{2k\alpha}.
\end{gather*}
Therefore, it follows that $g$ is continuous in $(0,1)$ and since $A_3 \ge 0$ it is immediate to see that $\lim_{y\to 1^-}g(y) = -\infty$.\\
Now we prove that $g(y)$ is a strictly decreasing function by showing that its derivative, given by
\begin{gather*}
    g'(y) = A_1 + A_2 - \frac{A_3}{(1-y)^2},
\end{gather*}
is negative for all $y\in \left(1-\frac2k,1\right)$. Since $A_3 \ge 0$ we have that $g'(y)$ is a decreasing function in $(0,1),$ thus, we only need to verify the negativity for $g'\left(1-\frac2k\right)$

For $y_0=1-\frac{2}{k}$, we have $1-y_0=\frac{2}{k}$ and thus
\[
g'(y_0)=A_1+A_2-\frac{k^2}{4}A_3.
\]
Inserting the value of $\alpha$, $\gamma$ and $\mu$ into $A_2$ and $A_3$ yields
% Since $\alpha=\frac{k+1}{k^2}$ and $\beta = ...$, it holds that $\frac{1}{2k\alpha}=\frac{k}{2(k+1)}$ and hence
\[
A_2=\frac{k}{2(k+1)}\gamma^2
=\frac{k}{2(k+1)}\cdot\frac{(k+1)^2(2k-1)^2}{4k^4}
=\frac{(k+1)(2k-1)^2}{8k^3},
\]
and
% , using that $\mu^2=\frac{4}{k^2}\bigl(1+\frac1k\bigr)^{2k-2}$,
\[
\frac{k^2}{4}A_3
=\frac{k^2}{4}\cdot\frac{k}{2(k+1)}\mu^2
=\frac{k}{2(k+1)}\Bigl(1+\frac1k\Bigr)^{2k-2}.
\]
Therefore
\begin{equation}\label{eq:gprime-y0}
g'\!\left(\frac{k-2}{k}\right)
=
\left(1-\frac1k\right)
+\frac{(k+1)(2k-1)^2}{8k^3}
-\frac{k}{2(k+1)}\Bigl(1+\frac1k\Bigr)^{2k-2}.
\end{equation}

\smallskip
\noindent\textbf{Step 1: an upper bound for the positive terms.}
For $k\ge 4$ we have
\[
\frac{(k+1)(2k-1)^2}{8k^3}
<\frac{(k+1)4k^2}{8k^3}=\frac{k+1}{2k}\le \frac58.
\]
Hence, for all $k\ge 4$,
\begin{equation}\label{eq:pos}
\left(1-\frac1k\right)+\frac{(k+1)(2k-1)^2}{8k^3}<\frac{11}{8}.
\end{equation}

\smallskip
\noindent\textbf{Step 2: a lower bound for the negative term.}
As $(1+1/k)^k$ is increasing in $k\ge 4$, it holds that
$\left(1+\frac1k\right)^k \ge \left(1+\frac14\right)^4=\left(\frac54\right)^4$, and thus
\[
\left(1+\frac1k\right)^{2k-2}
=\left[\left(1+\frac1k\right)^k\right]^{\,2-\frac{2}{k}}
\ge \left[\left(\frac54\right)^4\right]^{3/2}=\left(\frac54\right)^6 .
\]
Moreover, as $\frac{k}{2(k+1)}\ge \frac{2}{5}$ for $k\ge 4$, one obtains that 
\begin{equation}\label{eq:neg}
\frac{k}{2(k+1)}\left(1+\frac1k\right)^{2k-2}
\ge \frac{2}{5}\left(\frac54\right)^6.
\end{equation}
Observing that $\frac{2}{5}\left(\frac54\right)^6 \ge \frac{11}{8}$ we have that for every $k\ge 4$, the negative term in \eqref{eq:gprime-y0} dominates the positive ones, thus
\[
g'\!\left(1-\frac{2}{k}\right)<0.
\]

For the case $k = 3$, it is possible to verify directly that the inequality holds.\\\\
It remains to show that $g\left(1-\frac2k\right)>\frac12.$
Recall that
\[
g(y)=\Bigl(1-\frac1k\Bigr)y+1-\frac1k-\frac{2}{k^2}
+\frac{(\gamma(y-1)+\mu)^2}{2\alpha k (y-1)}.
\]
Let $y_0=\frac{k-2}{k}=1-\frac{2}{k}$. Then plugging in $y_0-1=-\frac{2}{k}$ in the above formula yields
\[
g(y_0)= 2-\frac{4}{k} -\frac{1}{k+1}\left(\left(1+\frac{1}{k}\right)^{k-1} - \left(1+\frac{1}{k}\right)\left(1-\frac{1}{2k}\right)\right)^2,
\]
Observe that $\left(\left(1+\frac{1}{k}\right)^{k-1} - \left(1+\frac{1}{k}\right)\left(1-\frac{1}{2k}\right)\right)^2\le(e-1)^2$ and thus
\[
g(y_0) \ge 2-\frac{4}{k} -\frac{(e-1)^2}{k+1}.
\]
The right hand-side is an increasing function and for $k = 5$ it holds that $g(y_0)\ge 1/2.$ The statement can be verified directly for the values $k = 3,4$.\\\\
$ii)$ Direct consequence of $i).$\\\\
$iii)$ It is enough to show that \(g\left(1-\frac{e^2}{(k+1)^2}\right)>0\), since \(g(y)\to-\infty\) as
\(y\to1^{-}\); hence, the right zero must lie to the right of any point where
\(g\) is still positive.
Let $t\in (0,1)$. By reformulating the gap function and plugging in the value $1-e^2/(k+1)^2$ we obtain
\begin{align*}
g\left(1-\frac{e^2}{(k+1)^2}\right)
=
\frac{2(k^2-k-1)}{k^2}
+
\frac{(2k-1)\left(1+\frac{1}{k}\right)^{k-1}}{k^2}
\\-
\left(
\frac{k-1}{k}
+
\frac{(k+1)(2k-1)^2}{8k^3}
\right)\frac{e^2}{(k+1)^2}
-
\frac{2(k+1)\left(1+\frac{1}{k}\right)^{2(k-1)}}{ke^2}.
\end{align*}
By using the fact that $2k/(k+1)<(1+1/k)^{k-1}<ek/(k+1)$, the right-hand side of the above equality can be lower-bounded as follows 
\begin{align*}
g\left(1-\frac{e^2}{(k+1)^2}\right)>\frac{2(k^2-k-1)}{k^2}
+
\frac{2(2k-1)}{k(k+1)}
-
\frac{8(k-1)}{k(k+1)^2}
-
\frac{(2k-1)^2}{k^3(k+1)}
-
\frac{2k}{k+1}\\
= \frac{4k^4-14k^3+k-1}{k^3(k+1)^2}.
\end{align*}
For $k\ge4$, one has $4k^4-14k^3+k-1 = k^3(4k-14) +k -1 >0$, thus the value $(1-e^2/(k+1)^2)$ is a minorant for the unique zero of $g$. 

The positivity for the case $k = 3$ can be verified directly by plugging in the value $1-e^2/16$ into the gap function.
This concludes the proof.

\end{proof}

\begin{remark}\label{another_trigger}
% From the previous two proposition, observing that $g\to -\infty$ as $y\to 1^-,$ we conclude that there exists, for each $k,$ a maximal stabilizing input $y(k)$, such that $g(y(k)) = 0$. 

From Proposition \ref{Properties of the gap function} we know from that \( g(x) \geq 0 \) for all \( x \in \left[1-\frac{2}{k}, y(k)\right] \). Consequently, all values within the range \(1- \frac{2}{k} < y \leq y(k) \) are stabilizing constant inputs as they satisfy
\begin{equation*}
    -\frac{k+1}{k} -    \frac{M(y,k)}{k} +y+1 \ge -\frac{k+1}{k} -    \frac{M(y,k)}{k} +\tilde{y}(y,k)+1 = -1 ,
\end{equation*}
where the inequality follows from the fact that the gap function is positive at $y$.
Furthermore, there is an important consequence that will be helpful for the bandlimited case. Since \( g(x) > 0 \) for all \( x \in \left[1-\frac{2}{k}, y(k)\right) \), if a trigger pushes some \( v_n \) into \( (1,+\infty) \), then during the decaying phase of the trajectory, a small decrease in the function value is permissible as long as it does not fall below \( \tilde{y}(y,k) \). This  will ensures that at least one point in the interval \([-1, 1]\) can still be reset. This key observation will help us prove that smooth functions can be quantized stably with a minimum level of oversampling.
\end{remark}

With these tools in hand, we are ready to state the theorem that ensures stability beyond the criterion \ref{classical_criterion} for $k\ge 3$.
\begin{theorem}[Stability for constant inputs]\label{On the stability for constant inputs}
    Let $h$ be a filter as in \eqref{minimal_filter}, with $k\ge 3$ and $0< y \le  1-\frac{e^2}{(k+1)^2}$.\\ Then the $\Sigma\Delta$ scheme with filter $h$, constant input $y$ and initial conditions $\tilde{v}\in[-1,1]^{k+1}$ is stable. Moreover, \[
    -1 \le v_n \le
    \begin{cases}
       M(y,k) & \textit{if} \ y>1-\frac2k \\
       1 & \text{if} \ y\le 1-\frac{2}{k}
       
    \end{cases} \quad \text{for all} \quad n\in \mathbb{N}_0,
    \]
    with  \[
    M(y,k) = y + \frac{2}{k} -\frac{\left[\gamma(y-1) + \mu\right]^2}{2\alpha(y-1)},
    \]
     where $\gamma = \left(1+\frac{1}{k}\right)\left(1-\frac{1}{2k}\right)$, $\mu = \frac{2}{k}\left(1+\frac{1}{k}\right)^{k-1}$ and $\alpha = \frac{1}{k}(1+\frac1k).$
\end{theorem}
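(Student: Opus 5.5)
The case $y\le 1-\frac2k$ is immediate. Here $\|h\|_{\ell^1}+|y| = \frac{k+2}{k}+y\le 2$, so Proposition~\ref{lemma_classical_criterion} gives $\|v\|_\infty\le 1$ for initial conditions in $[-1,1]^{k+1}$. For $1-\frac2k<y\le 1-\frac{e^2}{(k+1)^2}$ we use Proposition~\ref{Properties of the gap function}\,iii): $y<y(k)$, hence $g(y,k)\ge 0$, and by Remark~\ref{another_trigger} $y$ is a stabilizing constant input in the sense of \eqref{stabilizing_equation}.

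We then prove by strong induction on $n$ the invariant
\[
v_m\in[-1,M(y,k)]\quad\text{for all } m<n, \qquad\text{and}\qquad v_m>1\ \Rightarrow\ v_m \text{ lies on a trajectory generated by a generalized positive triggering sequence.}
\]
The initial values satisfy this invariant. Note that $M(y,k)\ge 1$, since the parabola at $n=0$ gives $y+\frac2k>1$.

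\textbf{Lower bound $v_n\ge -1$.} If $q_n=1$, this is Lemma~\ref{u_ningamma+}\,iii). If $q_n=-1$, then
\[
v_n=\tfrac{k+1}{k}v_{n-1}-\tfrac1k v_{n-k-1}+y+1 .
\]
We have $q_n=-1$, so the argument $a_n$ is negative. If $v_{n-1}>1$, then $a_n=v_{n-1}+\frac1k(v_{n-1}-v_{n-k-1})+y$ could a priori be negative, so we do not use that route. Instead we bound directly: $v_{n-1}\ge -1$ and $v_{n-k-1}\le M(y,k)$ by the induction hypothesis. Hence
\[
v_n\ge -\tfrac{k+1}{k}-\tfrac{M}{k}+y+1\ge -1
\]
by \eqref{stabilizing_equation}. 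This uses only the invariant, so it works regardless of the value of $v_{n-1}$.

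\textbf{Upper bound $v_n\le M(y,k)$.} If $v_n\le 1$, we are done. Otherwise $v_n>1$, and by Lemma~\ref{u_ningamma+}\,i) $q_n=1$. Let $n_0\le n$ be the first index of the maximal run $v_{n_0},\dots,v_n>1$, so $v_{n_0-1}\in[-1,1]$. The window $(v_{n_0-k-1},\dots,v_{n_0-1})$ then has all entries in $[-1,M]\subset[-1,\infty)$ and last entry in $[-1,1]$. It is therefore a generalized positive triggering sequence, and $v_{n_0},\dots,v_n$ is the associated trajectory. Proposition~\ref{parabola_behavior} then gives $v_n\le M(y,k)$, which closes the induction.

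Some bookkeeping is needed when $n_0-k-1<0$, i.e.\ the window reaches into the initial data. This is harmless, since those entries lie in $[-1,1]$.

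\textbf{Main obstacle.} The delicate point is whether Proposition~\ref{parabola_behavior} really applies to generalized triggers whose earlier entries may exceed $1$ (up to $M$). Its proof uses only $v_{-i}\ge -1$ for $i\ge2$, the signs from Lemma~\ref{sign_of_the_filter}, $v_{-1}\le 1$, and $q=1$ along the run. So the bound $h_i^n v_{-i}\le |h_i^n|$ holds for $i\ge 2$ because $h_i^n\le 0$ and $v_{-i}\ge -1$, and for $i=1$ because $v_{-1}\le1$. I would check this carefully, together with the fact that the lower bound needs $v\ge-1$ on the whole history, which is why both halves of the invariant must be proved simultaneously in the induction.
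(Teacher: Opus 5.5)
Your proposal is correct and follows essentially the paper's argument: you use the gap function to show $y$ is a stabilizing input, bound $v_{n-k-1}$ by $M(y,k)$ via the parabolic bound for generalized positive triggering sequences, and conclude $v_n\ge -1$ whenever $q_n=-1$. The only difference is organizational, not substantive: you carry both bounds in one strong induction, whereas the paper argues by contradiction at the first index $N$ with $v_N<-1$ and only then deduces the upper bound $M(y,k)$. Your version also covers explicitly the case $v_{N-(k+1)}\le 1$, which the paper leaves implicit.
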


\begin{proof}
Since for $y\le1-\frac{2}{k}$ the criterion \eqref{classical_criterion} is satisfied, it follows that $v_n\in[-1,1]$ for all $n\in \mathbb{N}_0$. It remains to show the case $y>1-\frac{e^2}{(k+1)^2}$.

First of all, recall from Remark \ref{another_trigger} that all the $y$'s that are lower than the zero $y(k)$ of the gap function are stabilizing constant inputs, and that by $iii)$ of Proposition \ref{Properties of the gap function} $ 1-\frac{e^2}{(k+1)^2}< y(k)$. Consequently, any $0<y\le 1-\frac{e^2}{(k+1)^2}$ is a stabilizing constant input and 
satisfy equation \eqref{stabilizing_equation}, i.e.,
\begin{equation}\label{particular_case_stab}
    -\frac{k+1}{k} - \frac{M(y,k)}{k} + y+1\ge-1.
\end{equation}

Now we prove by contradiction that 
\begin{equation}\label{Stab_input_hold}
   v_n\ge-1 \quad \text{for all} \ n\in \mathbb{N}_0.
\end{equation}
Assume that there exists an index such that Property \eqref{Stab_input_hold} does not hold, and let  $N\in \mathbb{N}_0$ be the first index such that $v_N<-1,$ and therefore 
\begin{equation}\label{Fact_about_previous_state_variables}
    v_{n}\ge -1 \quad \text{for all} \ n = 0,\dots,N-1.
\end{equation}
Furthermore, if $v_{N-(k+1)}>1$, then by \eqref{Fact_about_previous_state_variables} and the fact that initialization state vectpr $\tilde{v}$ is in $[-1,1]^{k+1}$ it must belong to some positive trajectory generated by a generalized positive trigger and thus, by Proposition \ref{parabola_behavior}, it holds that $v_n\le M(y,k).$ Consequently, since $v_N <-1$ and hence $q_{N} = -1$, one has that
\begin{gather*}
    v_N = \frac{k+1}{k}v_{N-1} - \frac{1}{k}v_{N-(k+1)} +y +1\ge -\frac{k+1}{k} - \frac{M(y,k)}{k} + y+1\ge-1,
\end{gather*}
where the last inequality follows from \eqref{particular_case_stab}. This contradicts the assumption.\\
The upper bound follows from the fact that by Property \eqref{Stab_input_hold} any trajectory is upper bounded by $M(y,k)$. This concludes the proof.

\end{proof}

\begin{remark}
    Note that the theorem can be slightly generalized by considering $y(k)$ implicitly defined via $g(y(k)) = 0$ as an upper bound for $y$ and $$M_s(y,k) = \text{max}\{2h_1^n -1 + (y-1)\sum_{i = 0}^n h_1^{i-1} \ : \ n = 0,\dots, N(\bar{v},y)\}$$
    instead of $M,$ where $\bar{v}$ is the positive critical trigger. One obtains that for all $0\le y\le y(k)$ it holds that 
    \[
    -1 < v_n < M_s(y,k) \quad \text{for all} \quad n\in \mathbb{N}.
    \]
However, due to the implicit nature of these conditions, they hide the order of the $k$ dependence, which is why we mainly work with the sub-optimal conditions of Theorem \ref{On the stability for constant inputs}. 
\end{remark}

%     Let $h = \left(\frac{k+1}{k},\dots,-\frac{1}{k}\right),$ with $k\ge 3,$ and $0< y \le y(k)$, where $y(k)$ is the solution of $g(y(k)) = 0.$\\ The $\Sigma\Delta$ scheme with feedback filter $h$, constant input $y$ and initial conditions $\bar{v}\in[-1,1]^{k+1}$ is stable. Moreover,
%     where 
%     and $\bar{v} = (1,\underbrace{-1,\dots,-1}_{k}).$

% \begin{proof}
%     The proof is the same as in Theorem \eqref{On the stability for constant inputs}. Observe that $M_s(y,k)\le M(y,k)<+\infty.$
% \end{proof}

\subsection{Stability for bandlimited functions}
If a non-constant function is to be quantized, some crucial cancellation in the above argument may fail unless an oversampling rate high enough to exploit smoothness properties. This is illustrated by the following example (inspired by \cite{yilmaz2002stability}), which shows that for a non-smooth input, schemes that are provably stable for constant inputs can become unstable.

% Here, the bandlimitedness of the function  In this scenario, the gap function plays an additional role: it allows us to determine the minimum sampling rate required for a bandlimited function to be quantized in a stable way.

% To clarify its main intuition and use, we present the following example.

\begin{example}\label{example}
Suppose we aim to quantize a step function with samples $$y_n = \begin{cases}
    0.7 & \text{if} \ n\le 6\\
    -0.7 & \text{if} \ n>7,
\end{cases}$$ using the filter with $k = 3,$ i.e.,
\[
h = \left(0,-\tfrac{4}{3}, 0, 0,\tfrac{1}{3}\right)
\] 
and the initial condition $\bar{v} = (-1,-1,\dots,-1,1),$ which is exactly the critical trigger identified for constant inputs. A direct calculation yields that $v_0>1,$ so $\bar{v}$ is a trigger and generates a trajectory in $(1,+\infty).$ As one can see in Figure \ref{fig:fig1}, exactly at the peak of the trajectory, the input changes sign, which leads to an accelerated decrease and eventually to an overshooting, i.e.,
\[
v_9 = \frac{4}{3}v_{8} - \frac{1}{3}v_{4} - 0.7 + 1 < -1,
\]
violating the lower bound of Theorem \ref{On the stability for constant inputs}. By placing additional sign changes at the extrema of the following trajectories, one can induce oscillations of increasing amplitude and, hence, instability of the system.
\begin{figure}[h!]
  \noindent\makebox[\linewidth][c]{%
    \resizebox{1.2\linewidth}{!}{%
      \begin{subfigure}{0.5\linewidth}
        \centering
        \includegraphics[width=\linewidth]{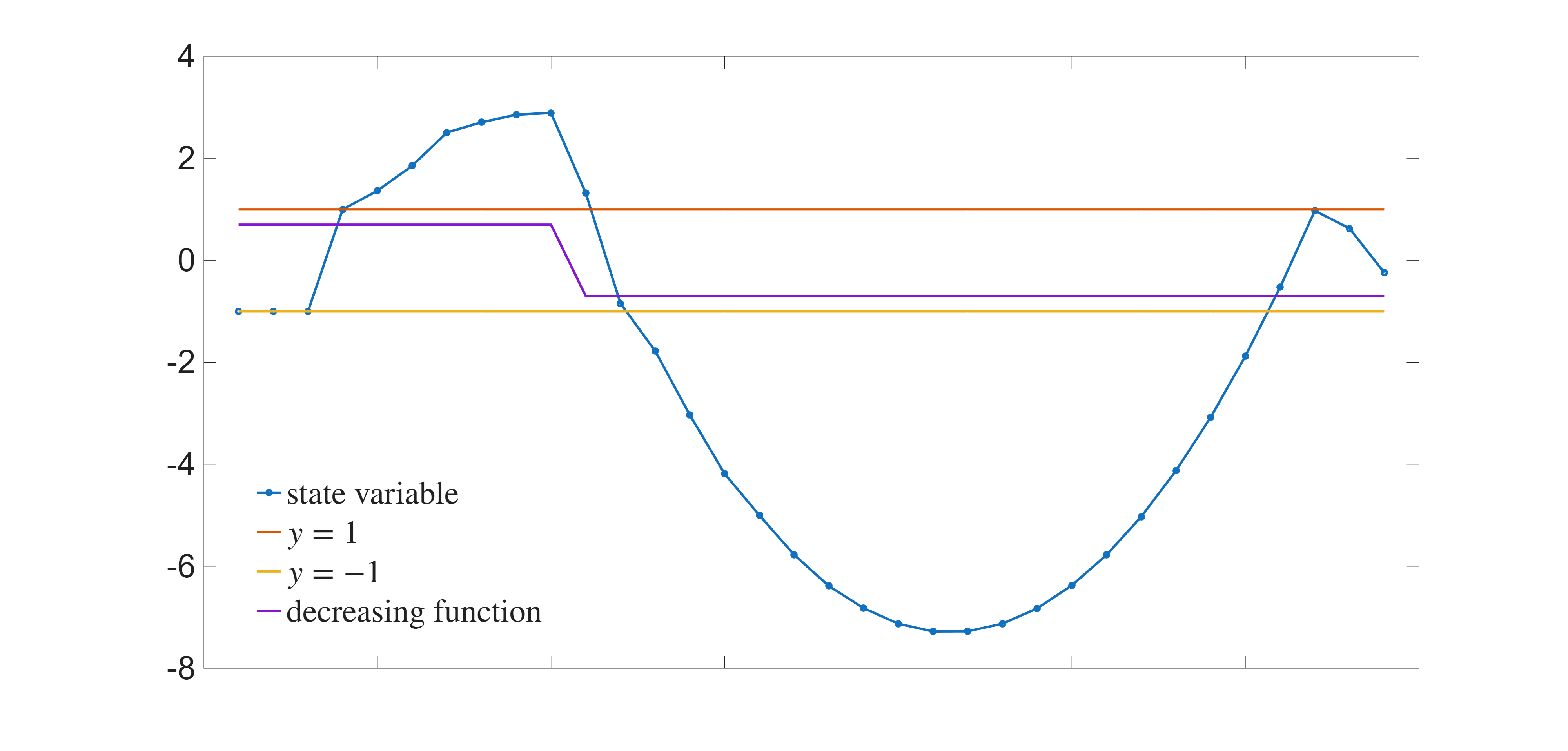}
        \caption{Step function with large magnitude of decrease}
        \label{fig:fig1}
      \end{subfigure}
      \begin{subfigure}{0.5\linewidth}
        \centering
        \includegraphics[width=\linewidth]{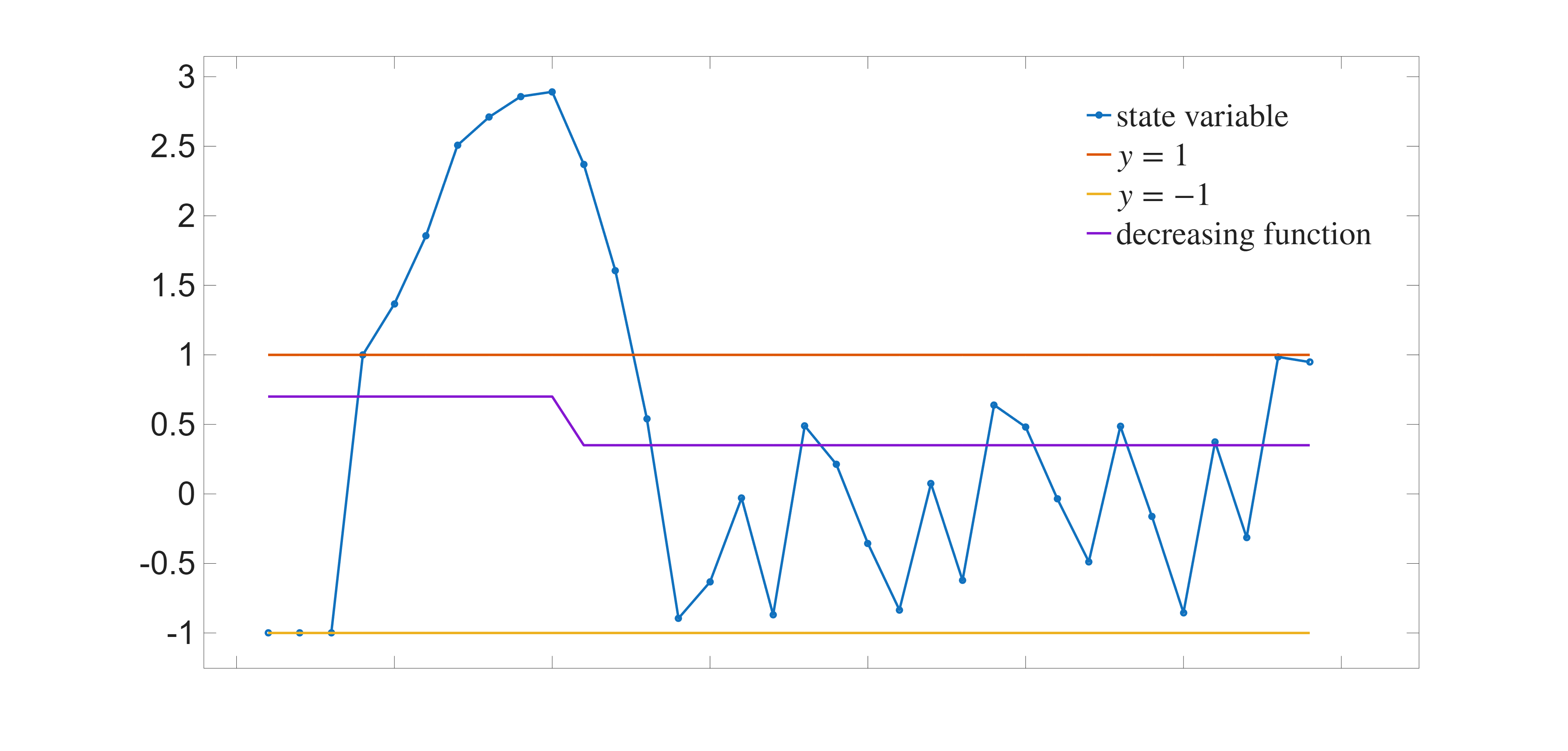}
        \caption{Step function with small magnitude of decrease}
        \label{fig:fig2}
      \end{subfigure}
    }%
  }
  \caption{{\small Plot of the state variables corresponding to two different sequences. In panel $(a)$, the sequence exhibits a sharp drop, decreasing rapidly from $0.7$ at step $n = 10$ to $-0.7$ at step $n = 11$. In contrast, in panel $(b)$ the function deacreases from $0.7$ at the step $n = 10$ to $0.35$ at the step $n = 11$. These examples highlight how the magnitude of variation between steps can affect the evolution of the state variables and thus the stability of the scheme.}}
  \label{fig:duefigure}
\end{figure}

% Assume that the step function from $n+6$ (and here $v_{n+6} = 2.89$) switches from $0.7$ to $-0.7$. As illustrated in Figure \ref{fig:fig1}, this change leads to an overshoot into $(-\infty,-1)$ after transitioning with one point through $[-1,1]$, since

% and thus $v_{n+9} \in (-\infty,-1)$.
% Therefore, the configuration $(v_{n+5},\dots, v_{n+8})$ triggers a trajectory in $(-\infty,-1).$ However, the peak of this trajectory will be lower than the one obtained by the negative critical trigger, as the values $v_{n+5},v_{n+6},v_{n+7}>1.$ It is intuitive to see that if this procedure is iterated, as discussed in the Preliminaries, it will give rise to oscillations of increasing magnitude.

However, for an input signal with a less extreme jump, e.g., $$y_n = \begin{cases}
    0.7 & \text{if} \ n\le 6\\
    0.35 & \text{if} \ n>7,
 \end{cases}$$ the effects of the acceleration can be controlled. 
 % if at step $6$ the function switches from $0.7$ to a constant value of $0.35$, we can observe from Figure \ref{fig:fig2} that the state variable remains in $[-1,1]$. 
 Indeed, one can observe that the maximum of the trajectory is $2.89$, which yields that after the trajectory, the state variable remains $\ge -1$ even when $y_n$ changes back and forth between the two values $0.7$ and $0.35$.\\
 As we will show in this section, a jump threshold below which overshooting can be avoided is given by the gap function, as introduced in the Definition \ref{Definition: gap function}.
 More precisely, if $y$ and $y'$ differ by less than the gap function, then the next trajectory must also be positive, as by definition of the gap function and the upper approximation $M(y,k)$, we have that for all $j$ between the two trajectories, either $q_j = 1,$ which implies that $v_j\ge-1$ by Lemma \ref{u_ningamma+}, or $q_j = -1$ and one also has
 \begin{equation}\label{Explanation gap function}
  v_{j} = \frac{4}{3}v_{j-1} - \frac{1}{3}v_{j-4} + y' -q_{j}\ge  -\frac43 -\frac{M(0.7,3)}{3} +\tilde{y}(0.7,3) +1 \ge -1.
\end{equation}
%  as all the previous values of the state variable are greater than $1$ and smaller than $2.89$, one has that
% \begin{equation}\label{Example_gap_function}
%     \frac{4}{3}v_i - \frac{1}{3}v_{i-4} + 0.35 + 1 > -1.
% \end{equation}

% which ensures that the following components of the state variable remain lower bounded by $-1$.

% We introduced this example because the gap function $g$ is designed specifically to quantify this threshold. More precisely, if given a trajectory produced by a constant input and said $M$ the peak value of the trajectory, we seek the smallest value $\tilde{y}(0.7)$ (and it corresponds to the threshold value $0.35$) such that, assuming to be in the worst case, i.e., $v_i = -1$ and $v_{i-4} = M,$ we still have that 
% \begin{equation}\label{Explanation gap function}
%     -\frac43 -\frac M3 +\tilde{y}(0.7) +1 \ge -1.
% \end{equation}
% In particular, in Definition \ref{Definition: gap function}, we have relaxed this condition by requiring that \eqref{Explanation gap function} holds not by considering the actual peak $M$ but for the upper approximation $M(0.7,3).$ 
% In this way, as long as we consider amplitudes $y$ such that $g(y)>0,$ we are allowed to have a controlled deacrese in the function value that still guarantees the boundness of the state variable.
\end{example}

Inspired by this reasoning, the key idea to achieve stability for the general case of bandlimited signals is the following:\\\\
\fbox{%
  \parbox{\dimexpr\linewidth-2\fboxsep-2\fboxrule\relax}{%
    Choose the sampling rate in such a way that in the course of a positive trajectory, the signal 
values change by at most the value of the gap function.
  }%
}\\

Then, with a similar reasoning, the lower bound of $-1$ still holds shortly after the end of the trajectory.

% This intuition is fundamental for proving the stability result in the general case of bandlimited functions.

% \textbf{Idea:} Based on Example~\eqref{example}, we can now understand that, when dealing with bandlimited functions, delicate cases as for the step functions no longer arise. Indeed, from  since bandlimited functions are Lipschitz functions with Lipschitz constant dependent from the bandwidth, by sufficiently oversampling, we can arbitrarily control the rate at which a function decreases over a given number of steps.  
% This observation, combined with the reasoning in Example~\eqref{example}, enables us to establish a minimum sampling rate that guarantees stability even when the maximum amplitude of the function and the filter do not satisfy the standard stability condition.

In what follows, we present a few preliminary properties that will be instrumental in proving the main result.

\begin{lemma}\label{majorant_state}
   Let $(y_n)_{n\in \mathbb{N}_0}$  be an input sequence such that $\|y\|_{\infty}<1$ and consider a $\Sigma\Delta$ scheme with feedback filter $h$ as in \ref{second_order_filter} for some $k\ge 3.$ Assume that at a step $N\in\mathbb{N}_0$ the configuration of previous state variables $\tilde{v} =(v_{N-(k+1)},\dots,v_{N-1})\in[-1,+\infty)^{k+1}$ is a generalized positive trigger with associated trajectory $(v_N,\dots,v_{N+S})\in (1,+\infty)^{S+1}$, with reach $S>0$. Then $y_N>1-\frac{2}{k},$ $S\le N(\bar{v},\|y\|_{\infty})$, with $\bar{v}$ being the critical trigger and
   \begin{align*}
       \max_{i =N,\dots,N+S}v_i
       \le \text{max}\left\{2h_1^n -1 + \left(\max_{t = N,\dots,N+S}y_{t}-1\right)\sum_{i = 0}^n h_1^{i-1} \ : \ n\in \mathbb{N}\right\}\\\le M\left(\max_{t = N,\dots,N+S}y_{t}\right).
   \end{align*}
   Moreover, it holds that $$\max_{t = N,\dots,N+S}y_n> 1-\frac{2}{k}$$
\end{lemma}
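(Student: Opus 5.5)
We follow the constant-input analysis, replacing the exact constant $y$ by the maximal sample $Y:=\max_{t=N,\dots,N+S}y_t$ along the trajectory. Since all $v_n$ with $n=N,\dots,N+S$ lie in $(1,+\infty)$, Lemma~\ref{u_ningamma+} gives $q_n=1$ there. The recurrence \eqref{second_order_filter} is therefore linear with inhomogeneity $y_n-1$ instead of $y-1$. Repeating the substitution argument of Proposition~\ref{evolution_highest_peak} verbatim, the coefficients $h_i^n$ are unchanged because they depend only on the filter. We obtain, for $0\le n\le S$,
\[
v_{N+n}=\sum_{i=1}^{k+1}h_i^n v_{N-i}+\sum_{j=0}^{n} h_1^{\,n-j-1}\,(y_{N+j}-1),
\]
with the convention $h_1^{-1}=1$. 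The only point to check is that the weight attached to $y_{N+j}-1$ is $h_1^{n-j-1}$, which follows from \eqref{Recursion_matrix} with $(y-1)e_1$ replaced by $(y_{N+j}-1)e_1$: the first row of $H^{m}$ is $(h_1^{m-1},\dots)$.

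\textbf{Step 1 (bounding the trigger part and the input part).} By Lemma~\ref{sign_of_the_filter}, $h_1^n>0$ and $h_i^n\le 0$ for $i\ge2$. Since $v_{N-1}\in[-1,1]$ and $v_{N-i}\ge -1$ for $i\ge 2$, each term satisfies $h_i^nv_{N-i}\le |h_i^n|$. Hence, exactly as in \eqref{bound_any_conf}, the trigger part is at most $\sum_i|h_i^n|=2h_1^n-1$ by Lemma~\ref{lemma_evolution_h_i}~i). All weights $h_1^{m}$ are positive, so $y_{N+j}\le Y$ gives
\[
v_{N+n}\le \xi_n(Y):=2h_1^n-1+(Y-1)\sum_{i=0}^n h_1^{i-1}.
\]
Taking the maximum over $n$ yields the first inequality. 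The second inequality, $\le M(Y)$, follows from the parabola estimate in the proof of Proposition~\ref{parabola_behavior}, which bounds $\xi_n(Y)$ for every $n$ using only Remark~\ref{linear_growth_of_h_1} and $Y-1<0$. That estimate never uses that $n$ lies within the coverage.

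\textbf{Step 2 (the claims $y_N>1-\tfrac2k$ and $Y>1-\tfrac2k$).} For $n=0$ the bound reads $1<v_N\le 2h_1^0-1+y_N-1=\tfrac2k+y_N$, which forces $y_N>1-\tfrac2k$. Consequently $Y\ge y_N>1-\tfrac2k$.

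\textbf{Step 3 (the coverage bound).} Let $\|y\|_\infty=:A\ge Y$. Then $\xi_n(A)\ge\xi_n(Y)\ge v_{N+n}>1$ for all $n\le S$. By Proposition~\ref{Critical_Trigger}, $\xi_n(A)$ is precisely the trajectory of the critical trigger $\bar v$ for the constant input $A$, and its coverage is the first $n$ with $\xi_n(A)\le1$. Hence $S\le N(\bar v,A)$.

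\textbf{Main obstacle.} The most delicate step is the bookkeeping in the non-constant variant of Proposition~\ref{evolution_highest_peak}: one must verify that the inhomogeneous terms enter with the positive weights $h_1^{m}$, so that monotonicity in the inputs is legitimate. I would handle this via the matrix form \eqref{Recursion_matrix}, writing $v^{(N+n)}=H^{n+1}v^{(N-1)}+\sum_j (y_{N+j}-1)H^{n-j}e_1$ and reading off the first row. A secondary concern is the requirement $1-\tfrac2k<A<1$ needed to invoke Proposition~\ref{Critical_Trigger}. Since $A\ge Y>1-\tfrac2k$ and $\|y\|_\infty<1$, this holds.
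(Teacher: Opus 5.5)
Your proposal is correct and follows essentially the same route as the paper. You extend Proposition~\ref{evolution_highest_peak} to non-constant inputs, bound the trigger part by $2h_1^n-1$ using the sign pattern of Lemma~\ref{sign_of_the_filter}, replace each $y_{N+j}$ by the maximum using positivity of the weights $h_1^{m}$, and apply the parabola estimate of Proposition~\ref{parabola_behavior}; $y_N>1-\frac2k$ comes from the same $n=0$ computation. Your Step~3 comparison $\xi_n(\|y\|_\infty)\ge \xi_n(Y)\ge v_{N+n}>1$ is an explicit version of what the paper only asserts as ``monotonicity of the coverage in the second argument.''
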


\begin{proof}
Without loss of generality, we assume that $N = 0.$ 
First, we show that $y_0>1-\frac{2}{k}$. If $y_{0}\le1-\frac2k$, then since the components of $(v_{-(k+1)},\dots, v_{-1})$ are $\ge -1$, we have that 
\begin{gather}\label{MLemma}
v_{N} = \frac{k+1}{k}v_{N-1}-\frac1k v_{N-(k+1)} + y_{m_1} - 1\\
\le \frac{k+1}{k}+\frac1k + 1-\frac{2}{k}-1 = 1,
\end{gather} 
where we used the fact that $-1\le v_{-1}\le 1$ as $v_{0}$ is the first element of the positive trajectory. Thus equation \eqref{MLemma} contradicts the fact that $v_0$ is the first element of the positive trajectory.\\
Proceeding analogously to Proposition \ref{evolution_highest_peak}, as  $q_n = 1$ for all $n = 0,\dots,S$, one obtains that for all $n = 0,\dots, S$
    \begin{gather*}
        v_n = h_{1}^nv_{-1} + \dots +h_{k+1}^{n}v_{-(k+1)} + \sum_{i = 0}^n (y_{n-i}-1)h_1^{i-1} \\
        \le h_{1}^nv_{-1} + \dots +h_{k+1}^{n}v_{-(k+1)} + \left(\max_{t = 0,\dots,S}y_n-1\right)\sum_{i = 0}^n h_1^{i-1}, 
    \end{gather*}
    where the inequality follows from the fact that $h_1^n\ge 0$ for all $n\in \mathbb{N}_0$, as established in Proposition \ref{lemma_evolution_h_i}.
    The inequality $M\le N(\bar{v},\|y\|_{\infty})$ follows from the monotonicity of the coverage in the second argument.
    The last inequality follows via the exact same argument for the constant case, see Proposition  \ref{parabola_behavior}.
\end{proof}

\begin{remark}\label{decreasing_step}
Following the intuition from Example \ref{example}, we aim to identify a sampling rate high enough to ensure that the samples are close enough to constant to guarantee that the $k+1$ state variables following a positive trajectory are $\ge -1.$ Analogously, the same sampling rate will ensure that the $k+1$  state variables following a negative trajectory will be $\le 1$. 

The key insight is that in the stabilizing input condition \eqref{stabilizing_equation}, the second and third terms correspond to different sampling points on or shortly after the trajectory, which will yield different sampling values in the non-constant case.  
Hence, the change of the function value needs to be small enough so that

\begin{equation}
        -\frac{k+1}{k} -    \frac{M(y_1,k)}{k} +y_2+1\ge -1,
\end{equation}   
not only for $y_1=y_2=y$, but also for arbitrary $y_1,y_2$ on or shortly after any trajectory.
A crucial property to tackle this problem is the strict monotonicity of the gap function, as it allows us to reduce this statement about many different trajectories to the single case of the trajectory generated by the critical trigger with the maximal amplitude as constant input. 

% More precisely, assume we aim to quantize a function $f$ with $h$ as in \eqref{second_order_filter} and $0\le y<y(k)$ be a stabilizing input and fix $d\in\mathbb{R}_+$ such that $y-d(N(\bar{v},y)+k)\ge\tilde{y}(y)$. If we consider $0\le \bar{y}<y,$ as the gap function is a strictly decreasing function, one has $\bar{y}-\tilde{y}(\bar{y})> y-\tilde{y}(y)\ge d(N(\bar{v},y)+k).$ In other words, if we use the decay rate of the maximal stable input, this will guarantee the stability for all the other smaller inputs. We will use this fact to prove the stability of the $\Sigma\Delta$ scheme for bandlimited signals. 
\end{remark}

\begin{theorem}[Stability for bandlimited signals]\label{Main_Theorem}
Let $h$ be a second order filter with $k\ge 3.$ Let $f\in PW_{1}(\mathbb{R})$, such that $\|f\|_{\infty}< 1- \frac{e^2}{(k+1)^2}$. Let 
  \[\lambda_0 = \frac{16e \|f\|_{\infty}}{k\left(1-\|f\|_{\infty}\right)\left(1-\frac{e^2}{(k+1)^2}-\|f\|_{\infty}\right)}
  \]

  If the samples $y_n = f(n/\lambda), \ {n\in \lambda I\cap \mathbb{N}_0},$  for an oversampling parameter of $\lambda \ge \lambda_0$, are quantized via the $\Sigma\Delta$ scheme with filter $h$ and with initial condition $\tilde{v} = (v_{-(k+1)},\dots,v_{-1})\in [-1,1]^{k+1},$ then the scheme is stable and
    \[
    \|v\|_{\infty}\le \begin{cases}
        1 & \text{if} \ \|f\|_{\infty}\le 1-2/k \\
         M(\|f\|_{\infty}) =  \|f\|_{\infty} + \frac{2}{k} +\frac{\left[  \mu-\gamma(1-\|f\|_{\infty})\right]^2}{2\alpha(1-\|f\|_{\infty})} & \text{if} \ \|f\|_{\infty}> 1-2/k
    \end{cases} 
    \]
    where $\gamma = \left(1+\frac{1}{k}\right)\left(1-\frac{1}{2k}\right),$ $\mu = \frac{2}{k}\left(1+\frac{1}{k}\right)^{k-1},$ and $\alpha = \frac{1}{k}\left(1+\frac{1}{k}\right).$
\end{theorem}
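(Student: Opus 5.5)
The plan is to mirror the proof of Theorem~\ref{On the stability for constant inputs}, replacing the constant-input identity by a perturbation estimate controlled by the gap function. Write $\rho := \|f\|_{\infty}$. If $\rho\le 1-\frac{2}{k}$, Proposition~\ref{lemma_classical_criterion} applies directly, because $\|h\|_{\ell^1}=\frac{k+2}{k}$; hence $\|v\|_\infty\le 1$. So assume $1-\frac2k<\rho<1-\frac{e^2}{(k+1)^2}$. By symmetry of the alphabet it suffices to prove the lower bound $v_n\ge -1$ for all $n$. The argument for $v_n\le 1$ outside positive trajectories is identical, and Lemma~\ref{majorant_state} then bounds every positive trajectory by $M(\max y_t)\le M(\rho)$, since $M(\cdot,k)$ is increasing in $y$ on the relevant range.

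\textbf{Step 1 (contradiction setup).} Let $N$ be the first index with $v_N<-1$, so $q_N=-1$ and $v_N=\frac{k+1}{k}v_{N-1}-\frac1k v_{N-(k+1)}+y_N+1$. If $v_{N-(k+1)}\le 1$, then $v_N\ge -\frac{k+1}{k}-\frac1k+y_N+1\ge 1-\frac2k-\rho>-1$, a contradiction. Hence $v_{N-(k+1)}>1$. Since all earlier states are $\ge-1$, this value lies on a positive trajectory $(v_{m},\dots,v_{m+S})$ generated by a generalized positive trigger. Because $v_{N-1}\ge -1$ and $v_N<-1$, the index $N$ lies within $k+1$ steps of the end of that trajectory, so $N-m\le S+k+1$. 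By Lemma~\ref{majorant_state}, $S\le N(\bar v,\rho)$ and $v_{N-(k+1)}\le M(Y)$ with $Y:=\max_{m\le t\le m+S}y_t>1-\frac2k$.

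\textbf{Step 2 (smoothness).} By Bernstein's inequality, $|f'|\le \pi\rho$ for $f\in PW_1$, so $|y_N-Y|\le \pi\rho\,(N(\bar v,\rho)+k+1)/\lambda$. Now $v_N\ge -\frac{k+1}{k}-\frac{M(Y)}{k}+y_N+1 = -1+g(Y)-(Y-y_N)$. Since $g$ is strictly decreasing (Proposition~\ref{Properties of the gap function}) and $Y\le\rho$, we get $g(Y)\ge g(\rho)>0$. It therefore suffices to show
\[
\pi\rho\,\bigl(N(\bar v,\rho)+k+1\bigr)/\lambda\le g(\rho).
\]

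\textbf{Step 3 (quantitative, the main obstacle).} This step requires explicit bounds on $N(\bar v,\rho)$ and on $g(\rho)$ from below that reproduce $\lambda_0$.

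For the coverage, use the parabolic bound of Proposition~\ref{parabola_behavior}. The parabola drops below $1$ after roughly its positive root, which is of order $\frac{2\mu+\ldots}{\alpha(1-\rho)}\lesssim \frac{2e}{1-\rho}$ in units where $\alpha\approx 1/k$. Together with the additive $k+1$, this gives $N+k+1\lesssim \frac{C e}{1-\rho}$.

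For the gap, use concavity of $g$ (Proposition~\ref{Properties of the gap function} i). With $y^*:=1-\frac{e^2}{(k+1)^2}<y(k)$ and $g(y^*)>0$, concavity on $[1-\frac2k,y^*]$ and $g(1-\frac2k)>\frac12$ yield the linear lower bound $g(\rho)\ge \frac12\cdot\frac{y^*-\rho}{y^*-(1-\frac2k)}\ge \frac{k}{4}(y^*-\rho)$.

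Combining the two bounds produces a condition of the form $\lambda\ge \frac{c\,e\,\rho}{k(1-\rho)(y^*-\rho)}$. I expect matching the constant $16$ to take some care in tracking the root of the parabola and the factor $\pi$; the crude estimates above may need sharpening to land exactly on $\lambda_0$.
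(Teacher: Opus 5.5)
\textbf{There is a genuine gap in Step~1.} Your Steps 2 and 3 follow the paper's route: Lipschitz control of $y_N$ against the trajectory maximum $Y$, monotonicity of $g$, the parabola-root bound on the coverage, and the concavity bound $g(\rho)\ge\frac k4\bigl(1-\frac{e^2}{(k+1)^2}-\rho\bigr)$. The problem is that the statement you set out to prove in Step~1, namely $v_n\ge -1$ for all $n$ (with $v_n\le 1$ outside positive trajectories), is false for sign-changing bandlimited inputs.

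When $\rho>1-\frac2k$, the state legitimately drops below $-1$ near the minima of $f$; the paper's $0.785\sin(4\pi x)$ example shows this. The symmetry reduction you borrow from the constant case relies on Lemma~\ref{where_the_trajectory_starts}, which ties the sign of a trajectory to the sign of a constant $y$, and it is unavailable here. Concretely, your dismissal of $v_{N-(k+1)}\le 1$ contains an arithmetic error. One has $-\frac{k+1}{k}-\frac1k+y_N+1=y_N-\frac2k$, and for $y_N$ near $-\rho$ this is $<-1$. With $v_{N-1}=-1$ and $v_{N-(k+1)}=1$, this is simply the onset of a negative trajectory, so there is no contradiction. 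The same false premise ("all earlier states are $\ge -1$") is what you use to claim that the trigger of the excursion through $v_{N-(k+1)}$ is a generalized positive trigger, which Lemma~\ref{majorant_state} requires.

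\textbf{The invariant you need.} The paper proves the separation property \eqref{division_property}: every window $(v_{n-(k+1)},\dots,v_n)$ lies in $[-1,\infty)^{k+2}$ or in $(-\infty,1]^{k+2}$. Argue by contradiction at the first $N$ where this fails. By symmetry, $v_N<-1$ while $(v_{N-(k+1)},\dots,v_{N-1})\subset[-1,\infty)$ contains some $v_t>1$. The inductive hypothesis, applied at the start of the positive excursion through $v_t$, guarantees its trigger has entries $\ge -1$.
\begin{itemize}
\item If $v_{N-(k+1)}>1$, your Step~2 goes through verbatim.
\item If $v_{N-(k+1)}\le 1$ but some $v_t>1$ with $t>N-(k+1)$, you cannot dismiss the case without smoothness. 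You need $y_N\ge\frac2k-1$. This follows because $y_N\ge\tilde y(Y')$ by the same Lipschitz/gap argument, where $Y'>1-\frac2k$ is the maximal sample on that excursion. Moreover $\tilde y(Y')=\frac1k-1+\frac{M(Y')}{k}\ge\frac2k-1$, since $M(Y')>1$.
\item A window with no entry $>1$ followed by $v_N<-1$ is not a contradiction at all.
\end{itemize}
Once \eqref{division_property} holds, every value $>1$ (resp.\ $<-1$) lies on a trajectory generated by a generalized positive (resp.\ negative) trigger. Lemma~\ref{majorant_state} and its mirror image then give $\|v\|_\infty\le M(\rho)$.

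\textbf{On the constant.} The paper obtains exactly $16e$ by using $\|f'\|_\infty\le\|f\|_\infty$ in Bernstein's inequality. It bounds the coverage via the larger root of $-an^2+bn+c$ being at most $\frac ba+\frac cb$ with $\frac cb\le 1$, which gives $N(\bar v,\rho)+k+1\le\frac{4e}{1-\rho}$. With the stated normalization $\operatorname{supp}\hat f\subset[-\frac12,\frac12]$, your constant $\pi\rho$ is the correct Bernstein bound. The same two estimates then give $\pi\lambda_0$, so sharpening the root estimate will not remove that discrepancy.
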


The proof intuition of this theorem is that for large enough $\lambda$, one can apply an estimate similar to \eqref{simp_max_traj} to show that all the trajectories have a length of at most 
\[ N:= \text{max}\left\{ n\in\ \mathbb{N} \ : \ 2h_1^n -1 + (\|f\|_{\infty}-1)\sum_{i = 0}^n h_1^{i-1} >1\right\}.\] 
Thus, for the stability bound we only have to take a maximum over a trajectory of at most that length.
This intuition motivates the following more technical version of the theorem above.
% \begin{theorem}\label{Main_Theorem}
% Let $h$ be a second order filter with $k\ge 3.$ Let $f\in \mathcal{C}^1(I),$ such that $1-\frac{2}{k}<\|f\|_{\infty}< 1- \frac{\left(1+\frac{1}{k}\right)^{2(k-1)}}{k^2}$ and $\|f'\|_{\infty} = 1.$ Let \[ N\ge \frac{4k}{k+1}\frac{\left(1+\frac{1}{k}\right)^{k-1}}{1-\|f\|_{\infty}} - k+1\] 
%   and 
%   \[\lambda \ge \frac{N}{\left(2-\frac{4}{k}-\frac{(e-1)^2}{k+1}\right)\left(1-\frac{\|f\|_{\infty}-\frac{k-2}{k}}{\frac{2}{k}-\frac{e^2}{k^2}}\right)}
%   \]
%   If $f$ is quantized in $I$ with an oversampling parameter of $\lambda$ and with initial condition $\bar{v} = (v_{-(k+1)},\dots,v_{-1})\in [-1,1]^{k+1},$ then the $\Sigma\Delta$ scheme is stable and \[
% \|v\|_{\infty}\le 
%     M(y) = y + \frac{2}{k} -\frac{\left[\beta(y-1) + \mu\right]^2}{2\alpha(y-1)},
%     \]
%     where $\beta = \left(1+\frac{1}{k}\right)\left(1-\frac{1}{2k}\right),$ $\mu = \frac{2}{k}\left(1+\frac{1}{k}\right)^{k-1},$ and $\alpha = \frac{1}{k}\left(1+\frac{1}{k}\right)$
% \end{theorem}

\begin{prop}\label{Main_Theorem2}
Let $h$ be a second order filter with $k\ge 3.$ Let $f\in \mathcal{C}^1(I)$ with $I = [0,T]$ and such that $1-\frac{2}{k}<\|f\|_{\infty}<y(k)<1$ and $\|f'\|_{\infty} \le L.$ Let \[ N(\bar{v},\|f\|_{\infty})= \text{max}\left\{ n\in\ \mathbb{N}_0 \ : \ 2h_1^n -1 + (\|f\|_{\infty}-1)\sum_{i = 0}^n h_1^{i-1} >1\right\}.\] 
If the samples $y_n = f(n/\lambda), \ {n\in \lambda I\cap \mathbb{N}_0},$  for an oversampling parameter of $\lambda \ge L\frac{N(\bar{v},\|f\|_{\infty})+k+1}{g(\|f\|_{\infty})}$, are quantized via the $\Sigma\Delta$ scheme with filter $h$ and with initial condition $\tilde{v} = (v_{-(k+1)},\dots,v_{-1})\in [-1,1]^{k+1},$ then the scheme is stable and \[
\|v\|_{\infty}\le \operatorname{max}_{n = 0,\dots,N}\left\{ \ 2h_1^n -1 + (\|f\|_{\infty} -1)\sum_{i= 0}^{n}h_1^{i-1} \right\}
.\]
\end{prop}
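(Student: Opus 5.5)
We argue by induction on $n$, following the proof of Theorem~\ref{On the stability for constant inputs}. Put $y^\ast:=\|f\|_\infty$, $N:=N(\bar v,y^\ast)$ and
\[
M_s:=\max_{n=0,\dots,N}\Bigl\{2h_1^n-1+(y^\ast-1)\sum_{i=0}^n h_1^{i-1}\Bigr\}.
\]
The invariant to propagate is that every $v_n$ either lies in $[-1,1]$ or belongs to a positive (resp.\ negative) trajectory generated by a generalized positive (resp.\ negative) triggering sequence, of reach at most $N$ and bounded by $M_s$ (resp.\ $-M_s$). By symmetry of the alphabet, it suffices to show two things: no $v_n$ drops below $-1$ unless it is the start of a negative trajectory, and any negative trajectory that does start is controlled by the mirror of the same argument.

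\textbf{Step 1: shape of positive trajectories.} Assume the invariant holds up to time $N_0-1$, and let $v_{N_0}$ start a positive trajectory. The preceding values are $\ge -1$ and $v_{N_0-1}\in[-1,1]$, so this is a generalized positive trigger. Lemma~\ref{majorant_state} then gives three facts. First, the reach $S$ satisfies $S\le N(\bar v,y^\ast)=N$, using monotonicity of coverage in the input together with $\max_t y_t\le y^\ast$. Second, the trajectory values are bounded by $M_s$, again because $\max_t y_t\le y^\ast$ and the expression is increasing in the input. Third, some sample on the trajectory exceeds $1-\frac2k$.

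\textbf{Step 2: the exit phase (main obstacle).} We must show that for the $k+1$ indices $j$ following the trajectory, $v_j\ge -1$. If $q_j=1$, this is Lemma~\ref{u_ningamma+}. If $q_j=-1$, we write
\[
v_j=\tfrac{k+1}{k}v_{j-1}-\tfrac1k v_{j-(k+1)}+y_j+1\ge -\tfrac{k+1}{k}-\tfrac{M(y_{\max})}{k}+y_j+1 ,
\]
where $y_{\max}:=\max_t y_t$ over the trajectory and we use $M_s\le M(y_{\max})$. The condition needed is $y_j\ge \tilde y(y_{\max})$, i.e.\ $y_{\max}-y_j\le g(y_{\max})$.

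All relevant indices lie within $N+k+1$ steps of each other. The mean value theorem with $\|f'\|_\infty\le L$ therefore gives
\[
|y_{\max}-y_j|\le L\,\frac{N+k+1}{\lambda}\le g(y^\ast)
\]
by the choice of $\lambda$. Since $g$ is strictly decreasing (Proposition~\ref{Properties of the gap function}) and $y_{\max}\le y^\ast$, we get $g(y^\ast)\le g(y_{\max})$, which closes the estimate. This is exactly the reduction described in Remark~\ref{decreasing_step}.

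The delicate points are the following.
\begin{itemize}
\item[(a)] The trajectory values entering $v_{j-(k+1)}$ must be bounded by $M(y_{\max})$ rather than by $M(y_j)$.
\item[(b)] We need $y_{\max}>1-\frac2k$ so that $g$ and $M$ are defined. This is what the ``Moreover'' part of Lemma~\ref{majorant_state} provides.
\item[(c)] Beyond the $k+1$ indices following the trajectory, $v_{j-(k+1)}$ lies in $[-1,1]$ again, so the classical estimate $v_j\ge -\frac{k+1}{k}-\frac1k+y_j+1\ge -1$ applies only when $y_j\ge 1-\frac2k$. For smaller $y_j$, this index is handled by the trigger analysis in Step 3 rather than by this bound.
\end{itemize}

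\textbf{Step 3: closing the induction.} If some $v_j<-1$ occurs while all previous values lie in $[-1,1]$ or on positive trajectories already fully exited, then $v_j$ starts a negative trajectory. It must be generated by a generalized negative trigger, because Step 2 shows that positive trajectory values cannot sit in the relevant window and produce an overshoot. The mirror of Steps 1--2 then bounds this trajectory by $-M_s$ from below and controls its exit. Combining the two cases gives $|v_n|\le\max\{1,M_s\}=M_s$ for all $n$, which is the claimed bound since $M_s>1$ whenever $y^\ast>1-\frac2k$ and $N\ge 0$.
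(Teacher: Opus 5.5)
Your strategy is the paper's. The invariant you propagate amounts to the window property \eqref{division_property}, and your key chain $v_j\ge -\frac{k+1}{k}-\frac1k M(y_{\max})+y_j+1\ge -1$ (Lipschitz control, the choice of $\lambda$, monotonicity of $g$) is the one used in Case~1 of the paper. There is, however, a gap in Step~2, in the upper bound for $v_{j-(k+1)}$.

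First, the inequality $M_s\le M(y_{\max})$ that you invoke is false in general. $M_s$ is computed with $\|f\|_\infty$, while $M(y_{\max})$ can be arbitrarily close to $M(1-\frac2k)$ on a trajectory whose samples are only slightly above $1-\frac2k$. For $k=3$ and $\|f\|_\infty=\frac12$, one has $M_s\ge\frac{91}{54}>\frac43=M(\frac13)$. The choice of $\lambda$ only yields $y_j\ge\tilde y(y_{\max})$, so what you need is the \emph{local} bound of Lemma~\ref{majorant_state}: a value on a trajectory is at most $M$ of the maximal sample \emph{on that trajectory}.

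Second, and more substantively, $j-(k+1)\in\{m_2-k,\dots,m_2\}$ need not lie on the trajectory $(v_{m_1},\dots,v_{m_2})$ whose exit phase you are analysing. If $m_2-m_1<k$, it can fall into that trajectory's generalized trigger. Entries there may exceed $1$ because they lie on an \emph{earlier} positive trajectory, possibly with larger samples. In that case $v_{j-(k+1)}\le M(y_{\max})$ is not available for your $y_{\max}$, and your point~(a) does not address this.

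The repair is to anchor the estimate at the trajectory through $v_{j-(k+1)}$ rather than at the one preceding $j$; this is exactly how the paper organises the argument. At the first index $N$ violating \eqref{division_property}, the paper splits into two cases:
\begin{itemize}
\item[] Case~1 ($v_{N-(k+1)}>1$) takes the trajectory $(v_{m_1},\dots,v_{m_2})$ containing $v_{N-(k+1)}$. Then $N\le m_2+k+1$, the Lipschitz window has length at most $N(\bar v,\|f\|_\infty)+k+1$, and all bounds are in terms of $\max_{m_1\le n\le m_2+k+1}y_n$.
\item[] Case~2 ($v_{N-(k+1)}\in[-1,1]$) only needs $v_{N-(k+1)}\le 1$ and $y_t>1-\frac2k$ at the start $t$ of a positive trajectory inside the window.
\end{itemize}
The first-violation formulation also makes explicit what your Step~1 tacitly uses: the trigger of a new positive trajectory has no entries $<-1$, by the mirrored exit estimate for earlier negative trajectories.

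Finally, your point~(c) is miscomputed and unnecessary. With $v_{j-1}\ge-1$, $v_{j-(k+1)}\le 1$ and $q_j=-1$ one gets $v_j\ge y_j-\frac2k$, which is $\ge-1$ whenever $y_j\ge\frac2k-1$ (not $1-\frac2k$). A value $v_j<-1$ with $v_{j-(k+1)},\dots,v_{j-1}\le 1$ is simply the start or continuation of a negative trajectory, which \eqref{division_property} allows.
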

\begin{proof}
We first prove by contradiction that with the chosen oversampling, if the quantization scheme has initial value $\tilde{v}\in[-1,1]^{k+1},$ then for all $n\in \mathbb{N}_{0}$
\begin{equation}\label{division_property}
    (v_{n-(k+1)},\dots, v_{n})\in [-1,+\infty)^{k+2} \ \text{or} \ (v_{n-(k+1)},\dots, v_{n})\in (-\infty,1]^{k+2}. 
\end{equation}
Suppose that there exists an index such that this property does not hold, and consider the first instance $N\in\mathbb{N}$ at which this happens, i.e., some components of $(v_{N-(k+1)},\dots, v_{N})$ are $>1$ and some others are $<-1$. Let $t = \min \{ s\in \{N-(k+1),\dots,N\} \ : \ v_s>1 \}$ and  $t' = \min \{ s\in \{N-(k+1),\dots,N\} \ : \ v_s<-1 \}.$
By symmetry of the alphabet, we can assume without loss of generality that $t' = N$, and hence $v_s\ge-1$ for all $s = N-(k+2),\dots, N-1$.
As Property \eqref{division_property} is assumed to hold for all the indices previous to $N$ and  $\tilde{v}\in [-1,1]^{k+1}$, one must have $(v_{t-(k+1)},\dots,v_{t})\in[-1,+\infty)^{k+2}$.
We distinguish the two cases $t = N-(k+1)$ and $t>N-(k+1)$.\\
For both cases, we need the following observation. Since $\|f\|_{\infty} < y(k)$ and $g$ is a strictly decreasing function with $g(1-2/k)>0$, see Proposition \ref{Properties of the gap function}, one has $0<g(\|f\|_{\infty})= \|f\|_{\infty} -\tilde{y}(\|f\|_{\infty})$.
Thus for $\lambda \ge L\frac{N(\bar{v},\|f\|_{\infty})+k+1}{g(\|f\|_{\infty})}$ it holds that
\begin{equation}\label{choice_of_oversampling}
   \|f\|_{\infty} - L\frac{N(\bar{v},\|f\|_{\infty})+k}{\lambda}\ge \tilde{y}(\|f\|_{\infty}). 
\end{equation}

We also need the well-known fact that the variation of a Lipschitz function $f$ is controlled by its Lipschitz constant. Concretely, if a function $f$ is $L$-Lipschitz, then its samples $y_n = f\left(\frac{n}{\lambda}\right)$ satisfy for any $m_1\le m_2$ and $s_1,s_2\in\{m_1,\dots,m_2+k+1\}$
$$|y_{s_1}-y_{s_2}| = \left|f\left(\frac{s_1}{\lambda}\right)-f\left(\frac{s_2}{\lambda}\right)\right|\le L\left|\frac{s_1-s_2}{\lambda}\right|\le L\frac{m_2 +k+1 -m_1}{\lambda}.$$
As a consequence, 
\begin{equation}\label{min_maxof_yn}
    \min_{n = m_1,\dots,m_2+k+1}y_n\ge \max_{n=m_1,\dots,m_2+k+1}y_n - L\frac{m_2 - m_1 + k+1}{\lambda}.
\end{equation}
\textbf{Case 1:} $t = N-(k+1)$.\\
In this case $v_{N-(k+1)}>1$, so $v_{N-(k+1)}$ belongs to a positive trajectory  $(v_{m_1},\dots,v_{m_2})$ with 
\begin{equation}\label{bound_m_2_m_1}
m_1\le N-(k+1)\le m_2.    
\end{equation}
The triggering sequence $(v_{m_1 - (k+1)},\dots,v_{m_1 -1})$ generating this trajectory cannot have any elements $<-1$ as otherwise the sequence $(v_{m_1-(k+1)},\dots,v_{m_1})$ would violate Property \eqref{division_property}. 
Consequently, $(v_{m_1 - (k+1)},\dots,v_{m_1 -1})$ is a generalized positive triggering sequence and from Lemma \ref{majorant_state} we have that $m_2-m_1\le N(\bar{v},\|f\|_{\infty}),$ where $\bar{v}$ is the critical trigger.\\
To show that this leads to a contradiction, we need the following two key facts.\\
\textbf{Fact 1:}\\
\begin{equation}\label{y_n_bound_tildey}
    \max_{n = m_1,\dots,m_2+k+1} y_n - L\frac{N(\bar{v},\|f\|_{\infty})+k+1}{\lambda}\ge \tilde{y}\left(\max_{n = m_1,\dots,m_2+k+1} y_n \right)
\end{equation}
\textit{Proof of Fact 1:}\\
To show that \eqref{y_n_bound_tildey} holds, we first observe that since $(y_{m_1},\dots,y_{m_2})$ is a positive trajectory generated by a generalized positive trigger, then as shown in Lemma \ref{majorant_state} it must hold that $y_{m_1}>1-\frac{2}{k}$ and consequently $\max_{n = m_1,\dots,m_2+k+1}y_n>1-\frac{2}{k}$. Since $g$ is a strictly decreasing function in $\left[1-\frac{2}{k},1\right)$, it holds that 
\begin{gather*}
    \max_{n = m_1,\dots,m_2+k+1}y_n-\tilde{y}\left(\max_{n = m_1,\dots,m_2+k+1}y_n\right) = g\left(\max_{n = m_1,\dots,m_2+k+1}y_n\right)\\
    \ge g(\|y\|_{\infty})\ge L\frac{N(\bar{v},\|y\|_\infty)+k+1}{\lambda},
\end{gather*}
where the inequality follows from \eqref{choice_of_oversampling}.
\hfill \qed\\
\textbf{Fact 2:} For all $i = N-(k+1),\dots,N-1$ it holds that 
\begin{equation}
    -1\le v_i\le M\left(\max_{n = m_1,\dots,m_2+k+1} y_n\right).
\end{equation}
\textit{Proof of Fact $2$:}\\
By assumption $v_t = v_{n-(k+1)}>1,$ so the maximum $v_i$ belongs to some positive trajectory $(v_{m_3},\dots, v_{m_4})$. As by definition of the generalized positive triggering sequence, we have $v_{m_1-1}\le 1$ and by assumption $v_{N}<-1$, so $m_{1}\le m_3\le m_4 \le N-1\le m_2+k$, where the last inequality follows from \eqref{bound_m_2_m_1}. Then by Lemma \ref{majorant_state} one has $\max_{n = m_3,\dots,m_4}y_n > 1-\frac{2}{k}$ and for all $i = m_3,\dots,m_4$ $$-1\le v_i\le M\left(\max_{n = m_3,\dots,m_4}y_n\right)\le M\left(\max_{n = m_1,\dots,m_2+k+1}y_n\right),$$
where the last inequality comes from the fact that $$1-\frac{2}{k}<\max_{n = m_3,\dots,m_4}y_n \le \max_{n = m_1,\dots,m_2+k+1}y_n$$
and $M$ is increasing in $(1-2/k,1)$.
\hfill \qed\\
With Facts $ 1$ and $2$ we are now ready to establish the contradiction.

As $\|f'\|_{\infty}\le L,$ $f$ is $L$-Lipschitz and by \eqref{min_maxof_yn}, together with the fact that $N \le m_2 +k+1,$ we obtain that
\begin{equation}\label{bound_on_y_N}
    y_N \ge \max_{n=m_1,\dots,m_2+k+1}y_n - L\frac{m_2 - m_1 + k+1}{\lambda}\ge \max_{n=m_1,\dots,m_2+k+1}y_n - L\frac{N(\bar{v},\|f\|_{\infty}) + k+1}{\lambda},
\end{equation}
where the last inequality follows from Lemma \ref{majorant_state}.\\
Furthermore observe that our assumption $v_N<-1$ entails that $q_N =-1$ and hence

\begin{gather}
    v_N = \frac{k+1}{k}v_{N-1} - \frac{1}{k}v_{N-(k+1)} + y_N +1 \ge -\frac{k+1}{k} -\frac{1}{k}M\left(\max_{n = m_1,\dots,m_2+k+1}y_n\right)+y_N +1\\
    \ge -\frac{k+1}{k} -\frac{1}{k}M\left(\max_{n = m_1,\dots,m_2+k+1}y_n\right)+ \max_{n = m_1,\dots,m_2+k+1}y_n -L\frac{N(\bar{v},\|y\|_{\infty})+k+1}{\lambda}+1\\
    \ge -\frac{k+1}{k} -\frac{1}{k}M\left(\max_{n = m_1,\dots,m_2+k+1}y_n\right)+ \tilde{y}\left(\max_{n = m_1,\dots,m_2+k+1}y_n\right)+1 = -1,
\end{gather}
which yields the desired contradiction.
Here, the first inequality uses Fact $2$, the second inequality follows from \eqref{bound_on_y_N}, the third inequality is based on Fact $1$, and the last equality exploits Definition \ref{Definition: gap function}.

\textbf{Case 2:} $t>N-(k+1)$.
In this case $v_{N-(k+1)}\in [-1,1]$ and again by assumption $v_N<-1.$ Thus the trajectory containing the maximum state variable $v_i$ among $i = N-(k+1),\dots,N-1$ must lie fully in $\{N-(k+1),\dots,N-1\}$. Consequently, we can conclude analogously to Case $1$ that for all $ i = N-(k+1),\dots, N-1$

% As $v_N<-1$, $v_t>1$ is the first element of a positive trajectory $(v_t,\dots,v_{t_1})$ with $t\le t_1\le N-1$ and Property \eqref{division_property} holds for $n = t<N$, we have from Lemma \ref{majorant_state} that $y_t>1-\frac{2}{k}$ and for all $i = t,\dots,t_1$

$$-1\le v_i\le  M\left(\max_{n = N-(k+1),\dots,N}y_n\right).$$ 
Moreover, in direct analogy to Fact $1$ in Case $1$ one obtains
\begin{gather}
    y_N \ge  \tilde{y}\left(\max_{n = N-(k+1),\dots,N}y_n\right).
\end{gather}
Therefore, as $v_N<-1$ and thus $q_N = -1,$ we have
\begin{gather*}
    v_n = \frac{k+1}{k}v_{N-1} -\frac{1}{k} v_{N-(k+1)} + y_N +1 \\
    \ge -\frac{k+1}{k} -\frac{1}{k} + \tilde{y}\left(\max_{n = N-(k+1),\dots,N}y_n\right)+1\\
    \ge -\frac{k+1}{k} - \frac{M\left(\max_{n = N-(k+1),\dots,N}y_n\right)}{k} + \tilde{y}\left(\max_{n = N-(k+1),\dots,N}y_n\right)+1= -1,
\end{gather*}
which is again a contradiction.
Here, again, most steps are in analogy to Case $1$, but the first inequality uses, in addition, that $v_{N-(k+1)}\le1$, and the second inequality exploits that $M(z)>1$ for all $z >1-2/k$.\\
Thus, Property \eqref{division_property} holds for all $n\in \mathbb{N}_0$.\\
This entails a uniform bound on the state variable.
Indeed, by \eqref{division_property} it follows that if an element of the state trajectory is $>1$, then it must belong to a positive trajectory generated by a generalized positive trigger, and as a consequence of Lemma \ref{majorant_state}, it is bounded by 
$\max\left\{ n \ : \ 2h_1^n -1 +(\|f\|_{\infty}-1)\sum_{i = 0}^{n}h_1^{i-1}\right\}$.
This concludes the proof.
\end{proof}

From the above theorem, the extension to bandlimited functions follows directly. Indeed, by using the fact that bandlimited functions are Lipschitz functions, we can adjust the oversampling defined in the above theorem to guarantee stability in this case.
% \begin{coro}
%     Let $h$ be a second order filter with $k\ge 3.$ Let $f$ be a bandlimited function with bandwith $K$ and such that $1-\frac{2}{k}<\|f\|_{\infty}<y(k)<1.$ Let \[ N:= \text{max}\left\{ n\in\ \mathbb{N} \ : \ 2h_1^n -1 + (\|f\|_{\infty}-1)\sum_{i = 0}^n h_1^{i-1} >1\right\}\] 
%   and $\lambda \ge K \|f\|_{\infty}\frac{N+k+1}{g(\|f\|_{\infty})}.$\\ If $f$ is quantized with an oversampling parameter of $\lambda$ and with initial condition $\bar{v} = (v_{-(k+1)},\dots,v_{-1})\in [-1,1]^{k+1},$ then the $\Sigma\Delta$ scheme is stable and \[
% \|v\|_{\infty}\le \text{max}_{n = 0,\dots,N}\left\{ \ 2h_1^n -1 + (\|f\|_{\infty} -1)\sum_{i= 0}^{n}h_1^{i-1} \right\}
% .\]
% \end{coro}

\begin{proof}[Proof of Theorem \ref{Main_Theorem}]
    The case we discuss is $\|f\|_{\infty}>1-\frac{2}{k}$, as otherwise the stability criterion \eqref{classical_criterion} guarantees stability independently of the sampling rate. No additional oversampling is needed.
    From Bernstein's inequality (see \cite{nikol2012approximation}), we know that as $f$ is $1$-bandlimited \[
    \|f'\|_{\infty}\le \|f\|_{\infty}.
    \]
    Hence, since by $iii)$ of Proposition \ref{Properties of the gap function} it holds that $\|f\|_{\infty}<1-\frac{e^2}{(k+1)^2}<y(k)$, one can apply Proposition \ref{Main_Theorem2} with $L = \|f\|_{\infty}.$ 
    The oversampling parameter is obtained by direct bounds on the quantities $N(\bar{v},\|f\|_{\infty})$ and $g$.\\
    \textbf{Bound for $N(\bar{v},\|f\|_{\infty})$}: Let $p(k,n)$ denote the parabolic bound in Proposition \ref{parabola_behavior} for $y:= \|f\|_{\infty}$. Then by Proposition \ref{parabola_behavior} $p(k,n)$ is an upper bound for the trajectory generated by $\bar{v}$ with constant input $\|f\|_{\infty}$.  Consequently  $N(\bar{v},\|f\|_{\infty})$ is upper bounded by the larger zero of the parabola $q(n,k):= p(n,k) -1 = -an^2+bn+c$ where 
\[
a:=\frac{k+1}{2k^2}(1-\|f\|_{\infty}),
\qquad
b:=\frac{2}{k}\left(1+\frac{1}{k}\right)^{k-1}-\frac{(k+1)(2k-1)}{2k^2}(1-\|f\|_{\infty}),
\quad
c:=\frac2k-(1-\|f\|_{\infty}).
\]
It is straightforward to see that $a,b,c>0$ and via power series expansion of the square root it follows that the only positive root of $q$ is lower than $\frac{b}{a}+\frac{c}{b}.$ By observing that $c/b\le \frac{1}{\left(1+1/k\right)^{k-1}}\le 1$ and plugging in the values of $a,b$ in the first summand of the bound and $c$ into the given bound, we have that
\begin{equation}
    N(\bar{v},\|f\|_{\infty})\le \frac{4k}{k+1}\left(1+\frac1k\right)^{k-1}\frac1{1-\|f\|_{\infty}}
-(2k-2), 
\end{equation}
and thus using $k\ge 3$ we obtain 
\begin{equation}
    N(\bar{v},\|f\|_{\infty})+k+1 \le \frac{4e}{1-\|f\|_{\infty}} 
\end{equation}
\textbf{Lower bound for $g$:} From $i)$ of Proposition \ref{Properties of the gap function} we recall that the gap function is a concave and non-negative on $\left[1-\frac{2}{k},1-\frac{e^2}{(k+1)^2}\right]$. Therefore, we have that for all $1-\frac{2}{k} <\|f\|_{\infty}\le 1-\frac{e^2}{(k+1)^2}$
\begin{gather}
    g(\|f\|_{\infty})\ge 
\frac{1-\frac{e^2}{(k+1)^2}-\|f\|_{\infty}}{ \frac{2}{k} -\frac{e^2}{(k+1)^2}}\,g\left(1-\frac{2}{k}\right)
+
\frac{\|f\|_{\infty}-\left(1-\frac{2}{k}\right)}{ \frac{2}{k} -\frac{e^2}{(k+1)^2}}\,g\left(1-\frac{e^2}{(k+1)^2}\right)\\
\ge \frac{1-\frac{e^2}{(k+1)^2}-\|f\|_{\infty}}{ \frac{2}{k} -\frac{e^2}{(k+1)^2}}\,g\left(1-\frac{2}{k}\right)\ge \frac{k}{2}\left(1-\frac{e^2}{(k+1)^2}-\|f\|_{\infty}\right)\,g\left(1-\frac{2}{k}\right)\\
\ge \frac{k}{4}\left(1-\frac{e^2}{(k+1)^2}-\|f\|_{\infty}\right),
\end{gather}
where the last inequality follows again from $i)$ of Proposition \ref{Properties of the gap function}.
Combining these estimates yields the desired oversampling parameter $\lambda_0$.

\end{proof}

Theorem \ref{Main_Theorem} shows not only that the bound on the state variable depends on the maximal amplitude of the signal—and therefore grows as the signal amplitude increases—but also that its proof provides additional insight into where large values of the state variable may occur. For example, Property \eqref{division_property} ensures that whenever some components of the state variable exceed $1$, those components lie on a trajectory $(v_{m_1},\dots,v_{m_2})$ generated by a generalized positive triggering sequence and moreover, by Lemma \ref{majorant_state}
\[
\max_{i = m_1,\dots,m_2}v_i\le \max_{n}\left\{2h_1^n-1 +\left(\max_{i = m_1,\dots,m_2}y_i - 1\right) \sum_{i = 0}^{n}h_1^{i-1}\right\}.
\] 
If now, a second trajectory $(v_{m_3},\dots,v_{m_4})$ occurs later in the state sequence, when the function values are smaller in magnitude,  again by Lemma \ref{majorant_state}
\begin{gather*}
    \max_{i = m_3,\dots,m_4}v_i\le \max_{n}\left\{2h_1^n-1 +\left(\max_{i = m_3,\dots,m_4}y_i - 1\right) \sum_{i = 0}^{n}h_1^{i-1}\right\}
    % \le \max_{n}\left\{2h_1^n-1 +\left(\max_{i = m_1,\dots,m_2}y_i - 1\right) \sum_{i = 0}^{n}h_1^{i-1}\right\},
\end{gather*}
which is smaller than the bound for the first trajectory. Hence, the local bound on the state variable depends only on the magnitudes of the function samples during the trajectory's triggering sequence, not on the global maximum, and thus our theory explains what we observe in Figure \ref{fig:v_n_in_Gamma_+}. This is a significant improvement to the bounds of \cite{yilmaz2002stability}, where a large maximum admissible bound for the function can lead to large state variables even if the signal stays significantly below the bound, cf. Figure \ref{Yilmaz_plotz}.

\section{Conclusions and Future Work}
In this paper, we established the first stability guarantees for $\Sigma\Delta$ quantization schemes for general bandlimited inputs beyond first order that outperform the stability condition \eqref{classical_criterion}, which previously provided the best stability analysis for $\Sigma\Delta$ in the context of general bandlimited signals.
% In contrast to many previous approaches, our analysis describes the trajectories of the state variables rather than characterizing the invariant set, an approach that had previously been performed only in some specific example cases. The advantage of this viewpoint is that it allows for longer filters, which is challenging when analyzing invariant sets, due to the resulting large dimension, and perturbations resulting from non-constant inputs can be better incorporated. 
We demonstrated that for second-order $\Sigma\Delta$ schemes with sparse feedback filters as proposed by G\"unturk \cite{gunturk2003one}, our approach provides a qualitatively more accurate description of the dynamics of the state variable, which in turn leads to better bounds.

At the same time, we expect that our analysis technique will also yield improvements beyond second-order. However, a challenge for higher orders is that, even with constant inputs, the trajectories are no longer exactly parabolic. Thus, the higher-order perturbations need to be controlled in addition to the effects of the non-constant inputs. Similarly, we expect that our analysis technique can find additional applications in higher order $2$-dimensional $\Sigma\Delta,$ which is important for digital halftoning (see \cite{krahmer2023enhanced}). Here, working with amplitudes close to $1$ is particularly important, as restricting the amplitude corresponds to excluding dark colors. 

Lastly, we find it interesting to revisit the paradigms for filter design. Namely, the sparse filter design proposed in \cite{deift2011optimal} was directly attempting to minimize the $\ell^1$ norm reflecting the stability condition \eqref{classical_criterion}. Hence, it remains to be explored whether sparse filters are also preferred from the viewpoint of our refined analysis, or whether some dense filters that are worse in terms of condition \eqref{classical_criterion} can lead to improved trajectories. Specifically, we find it very interesting to analyze the trajectories of heuristic filter designs proposed by practitioners, such as the ones driven by Lee's criterion (see \cite{schreier2005understanding}), for which no rigorous stability analysis is available.

\bibliographystyle{unsrt}
\bibliography{references}
\end{document}